\documentclass[conference]{IEEEtran}

\usepackage[english]{babel}

\usepackage{booktabs}
\usepackage{multirow}
\usepackage{url}
\usepackage{cite}
\usepackage{amsmath}
\usepackage{amssymb}
\usepackage{verbatim}
\usepackage{graphicx}
\usepackage{rotating}
\usepackage{microtype}
\usepackage{color}

\usepackage[bm,compactparagraph,subfigurepackage,recurtheorems]{pauli_new}
\renewcommand{\paragraph}[1]{\textbf{#1.}}

\usepackage[ruled,vlined,boxed,linesnumbered]{algorithm2e}

\newcommand\bin{\operatorname{bin}}
\newcommand\sign{\operatorname{sign}}
\newcommand\rrank{\operatorname{rrank}}
\newcommand\rrankp{\operatorname{rrank}^+}
\newcommand\srank{\operatorname{srank}}
\newcommand\round{\operatorname{round}}

\newcommand{\E}{\mathbb{E}} 

\newcommand{\minErrorRRproblemFullname}[1]{minimum-error rounding rank-#1\xspace}
\newcommand{\minErrorRRproblem}[1]{\textsc{MinErrorRR}-#1\xspace}
\newcommand{\BMNA}{\textsc{BMNA}\xspace}

\newcommand{\eat}[1]{}

\newcommand{\abstracts}{\datasetname{Abstracts}\xspace}
\newcommand{\apj}{\datasetname{APJ}\xspace}
\newcommand{\dblp}{\datasetname{DBLP}\xspace}
\newcommand{\dialect}{\datasetname{Dialect}\xspace}
\newcommand{\now}{\datasetname{Paleo}\xspace}

\newcommand{\proje}{\algname{ProjLP}\xspace}
\newcommand{\lpca}{\algname{L-PCA}\xspace}
\newcommand{\nuclear}{\algname{Nuclear}\xspace}
\newcommand{\SVD}{\algname{R-SVD}\xspace}
\newcommand{\truncSVD}{\algname{T-SVD}\xspace}
\newcommand{\asso}{\algname{Asso}\xspace}
\newcommand{\shay}{\algname{Permutation}\xspace} 
\newcommand{\nexhaust}{\algname{NNRR1}\xspace} 
\newcommand{\nmt}{\algname{MT}\xspace} 
\newcommand{\nmtexhaust}{\algname{MT+NNR1}\xspace} 
\newcommand{\lb}{\algname{Spectral~LB}\xspace} 

\graphicspath{{figs/}}
\newlength{\smallfigwidth}
\setlength{\smallfigwidth}{.22\textwidth}
\newlength{\smallfigsep}
\setlength{\smallfigsep}{6pt}
\newlength{\legendheight}
\setlength{\legendheight}{10pt}

\setcounter{dbltopnumber}{3} 


\title{What You Will Gain By Rounding:\\Theory and Algorithms for Rounding Rank}

\author{
\IEEEauthorblockN{Stefan Neumann\IEEEauthorrefmark{1}}
\IEEEauthorblockA{University of Vienna\\
Vienna, Austria\\
stefan.neumann@univie.ac.at}
\and
\IEEEauthorblockN{Rainer Gemulla}
\IEEEauthorblockA{Data and Web Science Group \\
University of Mannheim, Germany \\
rgemulla@uni-mannheim.de}
\and
\IEEEauthorblockN{Pauli Miettinen}
\IEEEauthorblockA{Max Planck Institute for Informatics \\
Saarland Informatics Campus, Germany \\
pmiettin@mpi-inf.mpg.de}
}

\date{}

\usepackage{color}
\usepackage{tikz}

\newif\ifshowcomments
\showcommentstrue
\ifshowcomments
\newcommand\todo[1]{\textcolor{red}{#1}}
\newcommand{\mynote}[2]{\framebox{\bfseries\sffamily\scriptsize{#1}}
 {\small\textsf{\emph{#2}}}}
\else
\newcommand{\todo}[1]{}
\newcommand{\mynote}[2]{}
\fi

\newcommand{\sn}[1]{\textcolor{blue!50!black}{\mynote{SN}{#1}}}

\begin{document}
\maketitle

\begin{abstract}
When factorizing binary matrices, we often have to make a choice between using expensive combinatorial methods that retain the discrete nature of the data and using continuous methods that can be more efficient but destroy the discrete structure. Alternatively, we can first compute a continuous factorization and subsequently apply a rounding procedure to obtain a discrete representation. But what will we gain by rounding? Will this yield lower reconstruction errors? Is it easy to find a low-rank matrix that rounds to a given binary matrix? Does it matter which threshold we use for rounding? Does it matter if we allow for only non-negative factorizations? In this paper, we approach these and further questions by presenting and studying the concept of rounding rank. We show that rounding rank is related to linear classification, dimensionality reduction, and nested matrices. We also report on an extensive experimental study that compares different algorithms for finding good factorizations under the rounding rank model.


\end{abstract}


\renewcommand{\thefootnote}{\fnsymbol{footnote}}
\footnotetext[1]{Part of the work was done when the author was with MPI Informatics, Saarland University and the Saarbr\"ucken Graduate School of Computer Science.}
\renewcommand{\thefootnote}{\arabic{footnote}}

\section{Introduction}
\label{sec:introduction}

When facing data that can be expressed as a binary matrix, the data analyst
usually has two options: either she uses combinatorial methods---such as
frequent itemset mining or various graph algorithms---that will retain the
binary structure of the data, or she applies some sort of continuous-valued
matrix factorization---such as SVD or NMF---that will represent the binary
structure with continuous approximations. The different approaches come with
different advantages and drawbacks. Retaining the combinatorial structure is
helpful for interpreting the results and can preserve better other
characteristics such as sparsity. Continuous methods, on the other hand, are
often more efficient, yield better reconstruction errors, and may be interpreted
probabilistically.

A third alternative, often applied to get ``the best of both worlds,'' is to
perform a continuous factorization first and apply some function to the elements
of the reconstructed matrix to make them binary afterwards. In probabilistic
modelling, for example, the logistic function is commonly used to map real
values into the unit range. We can obtain a binary reconstruction by
rounding, i.e.\ by setting all values less than $1/2$ to $0$ and the remaining
values to $1$. Alternatively, for $\{-1,1\}$ matrices, we may take the sign of
the values of a continuous factorization to obtain a discrete representation.
Even though such methods are commonly used, relatively little is known about the consequences of this thresholding process. 
There are few, if any, methods that
aim at finding a matrix that rounds exactly to the given binary data, or finding
a low-rank matrix that causes only little error when rounded (although 
there are methods that have such a behavior as a by-product). Almost
nothing is known about the theoretical properties of such decompositions.

In this paper, we give a comprehensive treatise of these topics.  We introduce
the concept of \emph{rounding rank}, which, informally, is defined to be the least
rank of a real matrix that rounds to the given binary matrix. But does it matter
how we do the rounding? How will the results change if we constrain ourselves to
nonnegative factorizations? A solid theoretical understanding of the properties
of rounding rank will help data miners and method developers to understand what
happens when they apply rounding. 
Some of our results are novel, while others are based on results obtained from
related topics such as \emph{sign rank} and \emph{dot product graphs}.

Studying rounding rank is not only of theoretical interest.  The concept can
provide new insight or points of view for existing problems, and lead to
interesting new approaches. In essence, rounding rank provides another
\emph{intrinsic dimensionality} of the data (see,
e.g.~\cite{tatti06what}). Rounding rank can be used, for example, to determine
the minimum number of features linear classifiers need for multi-label
classification 
or the minimum number of
dimensions we need from a dimensionality reduction algorithm.
There is also a close relationship to \emph{nested matrices}~\cite{junttila11},
a particular type of binary matrices that occur, for example, in ecology. 
We show that nested matrices are equivalent to matrices with a non-negative
rounding rank of 1 and use this property to develop a new algorithm for the
problem of finding the closest nested matrix.

But just knowing about the properties of rounding rank will not help if we
cannot find good decompositions. As data miners have encountered problems
related to rounding rank earlier, there are already existing algorithms for
closely related problems. In fact, any low-rank matrix factorization algorithm
could be used for estimating (or, more precisely, bounding) the rounding rank,
but not all of them would work equally well. To that end, we survey a number of
algorithms for estimating the rounding rank and for finding the least-error
fixed rounding rank decomposition. We also present some novel methods.
One major contribution of this paper is an empirical evaluation of these
algorithms.
Our experiments aim to help the practitioners in choosing the correct algorithm
for the correct task: for example, if one wants to estimate the rounding rank of
a binary matrix, simply rounding the truncated singular value decomposition may
not be a good idea.





\section{Definitions, Background, and Theory}
\label{sec:theory}

In this section we formally define the rounding rank of a binary matrix, discuss
its properties, and compare it to other well-known
matrix-ranks. Throughout this paper, we use $\mB$ to denote a binary $m \times
n$ matrix.

\subsection{Definitions}

The \emph{rounding function} w.r.t.\ \emph{rounding threshold} $\tau \in \mathbb{R}$ 
is 
\begin{align*}
  \round_\tau(x) = 
  \begin{cases}
    1, & \text{if } x \geq \tau, \\
    0, & \text{if } x < \tau.
  \end{cases}
\end{align*}
We apply $\round_\tau$ to matrices by rounding element-wise, i.e.\ if $\mA
\in \R^{m \times n}$ is a real-valued matrix, then $\round_\tau(\mA)$ denotes an
$m \times n$ binary matrix with $[\round_\tau(\mA)]_{ij} =
\round_\tau(\mA_{ij})$.

\paragraph{Rounding rank}
Given a rounding threshold $\tau \in \mathbb{R}$, the \emph{rounding rank of
  $\mB$ w.r.t.\ $\tau$} is given by
\begin{equation}
  \label{eq:rrank}
  \rrank_\tau(\mB) = \min\{ \rank(\mA) : \mA \in \mathbb{R}^{m \times n}, \round_\tau(\mA) = \mB \}.
\end{equation}
The rounding rank of $\mB$ is thus the smallest rank of any real-valued matrix
that rounds to $\mB$. We often omit 
$\tau$ for brevity and write $\round(\mA)$ and $\rrank(\mB)$ for
$\tau=1/2$.

When $\mB$ has rounding rank $k$, there exists matrices $\mL \in \mathbb{R}^{m
  \times k}$ and $\mR \in \mathbb{R}^{n \times k}$ with $\mB = \round_\tau(\mL
\mR^T)$. We refer to $\mL$ and $\mR$ as a \emph{rounding rank decomposition} of
$\mB$.

\paragraph{Sign rank}
The \emph{sign matrix} of $\mB$,  $\mB^\pm \in \{-1,+1\}^{m \times n}$, is obtained from $\mB$ 
by replacing every $0$ by $-1$. Given
a sign matrix, its \emph{sign rank} is given by
\begin{equation}
  \label{eq:srank}
  \srank(\mB^\pm) = \min\{ \rank(\mA): \mA \in \mathbb{R}_{\neq 0}^{m \times n}, \sign(\mA) = \mB \},
\end{equation}
where $\mathbb{R}_{\neq 0} = \mathbb{R} \setminus \{0 \}$.  The sign rank
is thus the smallest rank of any real-valued matrix $\mA$ without $0$-entries
and with $\mB^\pm_{ij} = \sign(\mA_{ij})$ for all $i,j$. The sign rank is
closely related to the rounding rank as
\(
	\rrank_0(\mB) \leq \srank(\mB^\pm) \leq \rrank_0(\mB) + 1.
\)
The first inequality holds because for any $\mA\in \mathbb{R}_{\neq 0}^{m \times n}$ and with
$\sign(\mA) = \mB^\pm$, $\round_0(\mA)^\pm = \sign(\mA)$.  The second inequality
holds because if $\round(\mA) = \mB$ and $\mA$ contains $0$-entries, we can add
a constant $0<\varepsilon<\min_{a_{ij}<0}\lvert a_{ij}\rvert$ to
each entry of $\mA$ to obtain $\sign(\mA + \varepsilon) = \mB^\pm$ and
$\rank(\mA + \varepsilon) \leq \rank(\mA) + 1$.  Even when $\tau\neq 0$, the
differences remain small as suggested by
Prop.~\ref{prop:changingRoundingThreshold}.

\paragraph{Non-negative rounding rank}
We define the
\emph{non-negative rounding rank} of $\mB$ w.r.t.~$\tau$, denoted
$\rrankp_\tau(\mB)$, as the smallest $k$ such that there exist non-negative
matrices $\mL \in \nR^{m \times k}$ and $\mR \in \nR^{n \times k}$ with
$\round_\tau(\mL \mR^T) = \mB$.

\paragraph{Minimum-error rounding rank problem}
The rounding rank is concerned with exact reconstructions of $\mB$.  We relax
this by introducing the \emph{\minErrorRRproblemFullname{$k$} problem}: Find a
binary matrix $\mC \in \{0,1\}^{m \times n}$ with $\rrank(\mC) \leq k$ which
minimizes $\norm{\mB - \mC}_F$, where $\norm{\cdot}_F$ denotes the Frobenius
norm. Note that $\norm{\mB - \mC}_F^2$ corresponds to the number of entries in
which $\mB$ and $\mC$ disagree. We denote the problem by
\minErrorRRproblem{$k$}.



\subsection{Related Work}
\label{subsec:related-work}

A number of concepts closely related to rounding rank (albeit less general) have
been studied in various communities.

There is a relationship between rounding rank and dot-product
graphs~\cite{Reiterman92Embedding,Fiduccia98Dot,SphereAndDotProduct}, which
arise in social network analysis~\cite{Young2007}. Let $G$ be a graph with $n$
vertices and adjacency matrix $\mM$. Then $G$ is a \emph{dot-product graph of
  rank~$k$} if there exists a matrix $\mL \in \mathbb{R}^{m \times k}$ such that $\mM
= \round_1(\mL \mL^T)$.
The rank of a dot-product graph
corresponds to the \emph{symmetric} rounding rank of its adjacency matrix. In
this paper, we consider asymmetric factorizations and allow for rectangular
matrices.

Sign rank was studied in the communication complexity community in order to
characterize a certain communication model.  Consider two players, Alice and
Bob.  Alice and Bob obtain private inputs $x, y \in \{0,1\}^n$, respectively, and their
task is to evaluate a function $f : \{0,1\}^n \times \{0,1\}^n \to \{0,1\}$ on
their inputs. The \emph{communication matrix} $\mM_f$ of $f$ is the $2^n \times
2^n$ binary matrix with $[\mM_f]_{xy} = f(\bin(x),\bin(y))$, where $\bin : 2^n
\to \{0,1\}^n$ denotes the $n$-bit binary encoding of its input number. The
\emph{probabilistic communication complexity} of $f$ is the smallest number of
bits Alice and Bob have to communicate in order to compute $f(x,y)$ correctly
with probability larger than $\frac{1}{2}$. It is known that the probabilistic
communication complexity of $f$ and $\log(\srank(\mM_f))$ differ by
at most one~\cite{GeometricalRealizations,ProbabilisticCommunicationComplexity,LinearLowerBound}.
Sign rank was also studied in learning theory to
understand the limits of large margin
classification~\cite{SignRankVsVCDimension,PartitioningAndGeometricEmbedding,BetterLinearLowerBound,Ben03Limitations};
see Alon et al.~\cite{SignRankVsVCDimension} for a summary of applications of sign rank.
These complexity results focus on achieving lower and upper bounds on sign rank
as well as the separation of complexity classes.
We review some of these results
in subsequent sections and present them in terms of rounding rank, thereby
making them accessible to the data mining community.

Ben-David et al.~\cite[Cor.~14]{Ben03Limitations} showed that
  only a very small fraction of the $n \times n$ sign matrices can be
  well-approximated (with ``vanishing'' error in at most $n^{-O(1)}$ entries)
  by matrices of sign-rank at most $k$ unless $k = \omega(n^{1 - O(1)})$ is very
  large. 
  To the best of our knowledge, there are no known results for fixed relative
  error (e.g., 5\% of the matrix entries) or for the \minErrorRRproblem{$k$}
  problem. 


\subsection{Characterization of Rounding Rank}

Below we give a geometric
interpretation of rounding rank that helps to relate it to problems in data
mining. A similar theorem was presented in the context of communication
complexity~\cite[Th.~5]{ProbabilisticCommunicationComplexity}. Our presentation is in terms of
matrix ranks (instead of communication protocols) and gives a short proof that
provides insights into the relationship between rounding rank and geometric
embeddings.

\begin{theorem}
\label{Thm:CharacterisationRoundingRank}
Let $d \in \mathbb{N}$ and $\tau\in\R$. The
following statements are equivalent:
\begin{enumerate}
\item $\rrank_\tau(\mB) \leq d$.
\item There exist points $\vx_1, \dots, \vx_m \in \mathbb{R}^d$ and affine
  hyperplanes $H_1, \dots, H_n$ in $\R^d$ with normal vectors $\vc_1, \dots,
  \vc_n \in \mathbb{R}^d$ given by $H_j = \{ \vx \in \R^d : \langle \vx, \vc_j
  \rangle = \tau \}$ such that $\round_\tau(\langle \vx_i, \vc_j \rangle) =
  \mB_{ij}$ for all $i, j$.
\end{enumerate}
\end{theorem}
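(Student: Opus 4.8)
The plan is to prove both implications directly from the definitions, the key observation being that a real matrix of rank at most $d$ is exactly a product $\mL\mR^T$ with $\mL\in\R^{m\times d}$ and $\mR\in\R^{n\times d}$, whose rows we reinterpret as points and as hyperplane normal vectors.

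For the direction $(1)\Rightarrow(2)$ I would start from a witness $\mA\in\R^{m\times n}$ with $\round_\tau(\mA)=\mB$ and $r:=\rank(\mA)\le d$, factor it as $\mA=\mL\mR^T$ with inner dimension $r$ (e.g.\ via the SVD or any rank-revealing factorization), and then pad $\mL$ and $\mR$ with $d-r$ zero columns so that both live in dimension $d$ without changing the product. Taking $\vx_i$ to be the $i$-th row of $\mL$ and $\vc_j$ the $j$-th row of $\mR$ gives $\mA_{ij}=\langle\vx_i,\vc_j\rangle$, hence $\round_\tau(\langle\vx_i,\vc_j\rangle)=\mB_{ij}$ for all $i,j$, and the hyperplanes $H_j=\{\vx:\langle\vx,\vc_j\rangle=\tau\}$ have exactly the stated normal vectors; this establishes (2).

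For $(2)\Rightarrow(1)$ I would run this construction backwards: stack the points $\vx_i$ as the rows of $\mL\in\R^{m\times d}$ and the normals $\vc_j$ as the rows of $\mR\in\R^{n\times d}$, set $\mA=\mL\mR^T$, and observe that $\mA_{ij}=\langle\vx_i,\vc_j\rangle$ forces $\round_\tau(\mA)=\mB$ by hypothesis, while $\rank(\mA)\le d$ because the inner dimension is $d$; hence $\rrank_\tau(\mB)\le d$.

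The one point that needs a word of care is whether $H_j$ is a genuine affine hyperplane: the set $\{\vx:\langle\vx,\vc_j\rangle=\tau\}$ is a hyperplane precisely when $\vc_j\ne\mathbf 0$. A zero column $\vc_j$ forces the whole $j$-th column of $\mA$ to vanish, so that column of $\mB$ is constant and can just as well be realized by some nonzero $\vc_j$ with all points pushed strictly to the correct side, at no cost in rank; thus one may assume every $\vc_j\ne\mathbf 0$. I expect this degenerate case, rather than any substantive mathematics, to be the only real obstacle, since the equivalence is ultimately just a restatement of the rank-$\le d$ condition as an inner-product factorization.
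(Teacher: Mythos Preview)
Your proof is correct and follows essentially the same route as the paper: factor a rank-$\le d$ witness $\mA$ as $\mL\mR^T$ and read off the rows as points and normals, then reverse the construction for the converse. Your extra care with the zero-padding from rank $r$ up to $d$ and with the degenerate case $\vc_j=\mathbf 0$ is a welcome addition that the paper glosses over, but the core argument is identical.
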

\begin{proof}
  $2 \Rightarrow 1$: Consider points $\vx_i$ and hyperplanes $H_j$ with the
  property asserted in the theorem. Define an $m \times d$ matrix $\mL$ with the
  $\vx_i$ in its rows, and an $n \times d$ matrix $\mR$ with the $\vc_j$ in its
  rows. Then $\round_\tau(\mL\mR^T) = \mB$, and hence $\rrank_\tau(\mB) \leq d$.

  $1 \Rightarrow 2$: Let $\mB = \round_\tau(\mA)$ with $\rank(\mA) \leq d$. Pick
  any two real matrices $\mL$ and $\mR$ with $d$ columns s.t.\ 
  $\mL\mR^T=\mA$. We can consider the rows $\mL_i$ of $\mL$ as points in $\R^d$
  ($\vx_i=\mL_i$) and the rows $\mR_j$ of $\mR$ as the normal vectors
  ($\vc_j=\mR_j$) of affine hyperplanes $H_j$ with offset $\tau$. Since $\mB =
  \round_\tau(\mA)$, we also get $\mB_{ij} = \round_\tau(\langle \mL_i, \mR_j
  \rangle)$ for all $i,j$.
\end{proof}

Fig.~\ref{fig:Hyperplanes} illustrates
Th.~\ref{Thm:CharacterisationRoundingRank} in $\R^2$ with $n=3$ and $\tau = 0$.
The three hyperplanes dissect the space into six convex, open regions. Each
point $\vx \in \R^2$ can be labeled with a binary vector according to whether it
is ``above'' or ``below'' each of the hyperplanes $H_j$ by using the rounding
function $\round_\tau(\langle \vx, \vc_j\rangle)$. 
\begin{figure}[t]
  \centering
  \includegraphics[width=0.75\columnwidth]{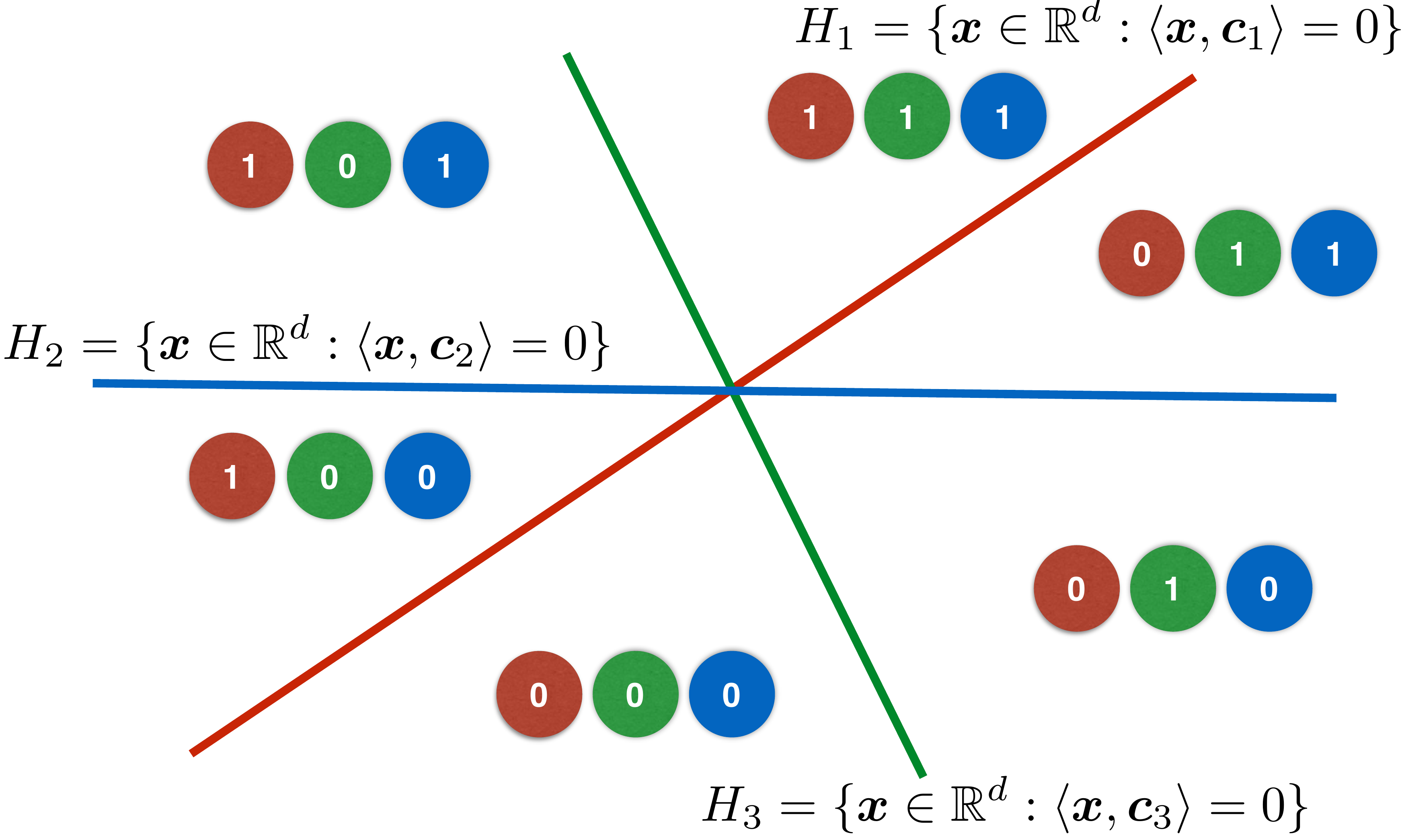}%
  \caption{Three hyperplanes in $\R^2$ with the labels of the subspaces into which they
    dissect the space. 
    Any $m\times 3$ binary matrix in which each row corresponds to one of the
    six label vectors has rounding rank at most 2.}
  \label{fig:Hyperplanes}
\end{figure}

The second point of Th.~\ref{Thm:CharacterisationRoundingRank} can be
interpreted as follows: Pick a binary matrix $\mB$ and treat each of the $n$
columns $\mB_j$ as the labels of a binary classification problem $P_j$ on $m$
points.  We can find data points $\vx_1, \dots, \vx_m$ and affine hyperplanes
$H_1, \dots, H_n$ in $\R^d$ which solve all linear classification problems $P_j$
without error if and only if the rounding rank of $\mB$ is at most~$d$. We then
interpret the $\vx_i$ as data points and the $\vc_j$ as feature weights. 
Rounding rank decompositions thus describe the ``best case'' for multiple linear
classification problems: if the rounding rank of $\mB$ is $d$, then we need at
least $d$ features to achieve perfect classification. In other words, we need to
collect at least $\rrank(\mB)$ features (or attributes) to have a chance to
classify perfectly. Similarly, if we employ dimensionality reduction, linear
classification cannot be perfect if we reduce to less than $d$ dimensions.

\begin{corollary}[informal]
  \label{cor:CorollaryRR}
  Rounding rank provides a natural lower bound on how many features we need for
  linear classification. This provides us with lower bounds on data collection
  or dimensionality reduction.
\end{corollary}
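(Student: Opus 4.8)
The plan is to read Corollary~\ref{cor:CorollaryRR} as a direct translation of Theorem~\ref{Thm:CharacterisationRoundingRank} into the language of supervised learning, so the ``proof'' consists mainly of fixing the right formal model. First I would make the multi-label classification setup precise: we are given $m$ objects and $n$ binary labelings of them, collected as the columns $\mB_1, \dots, \mB_n$ of $\mB$; a \emph{feature map} assigns to object $i$ a point $\vx_i \in \R^d$, and a linear classifier for problem $P_j$ is an affine hyperplane $H_j = \{\vx : \langle \vx, \vc_j \rangle = \tau\}$ that predicts label $\round_\tau(\langle \vx_i, \vc_j \rangle)$ for object $i$. ``Perfect classification'' means all $mn$ predictions are correct, i.e.\ $\round_\tau(\langle \vx_i, \vc_j \rangle) = \mB_{ij}$ for all $i,j$.

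Second, I would invoke the equivalence of Theorem~\ref{Thm:CharacterisationRoundingRank}: a feature map into $\R^d$ together with classifiers $H_1, \dots, H_n$ solving every $P_j$ without error exists if and only if $\rrank_\tau(\mB) \leq d$. The direction $2 \Rightarrow 1$ gives the lower bound — whenever $d < \rrank_\tau(\mB)$, no assignment of $d$-dimensional feature vectors to the objects admits linear classifiers that are simultaneously perfect on all $n$ labelings, so at least one mislabeled entry is unavoidable. The direction $1 \Rightarrow 2$ shows the bound is tight: with exactly $\rrank_\tau(\mB)$ features, error-free multi-label classification is attainable. Hence $\rrank_\tau(\mB)$ is precisely the minimum number of features that is both necessary and sufficient for perfect linear classification of $\mB$.

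Third, the consequences for data collection and dimensionality reduction follow immediately. If one intends to collect attributes (the coordinates of the $\vx_i$) and then apply linear classifiers, one must gather at least $\rrank_\tau(\mB)$ of them, so $\rrank_\tau(\mB)$ lower-bounds the required data-collection effort. Likewise, if a dimensionality-reduction step maps the data into $\R^{d'}$ with $d' < \rrank_\tau(\mB)$, then applying Theorem~\ref{Thm:CharacterisationRoundingRank} to the reduced representation shows that no subsequent linear classifier can be perfect on all columns; so any reduction that is to preserve the possibility of perfect linear classification must retain at least $\rrank_\tau(\mB)$ dimensions.

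The only real subtlety — and the point I would be most careful about — is the bookkeeping around the threshold $\tau$ and the classifier bias: Theorem~\ref{Thm:CharacterisationRoundingRank} fixes the hyperplane offset to the same $\tau$ for every column, whereas a fully general affine classifier carries its own bias $b_j$. I would observe that allowing an arbitrary bias is equivalent to appending one constant coordinate to every $\vx_i$ (replacing $\vx_i$ by $(\vx_i, 1)$ and $\vc_j$ by $(\vc_j, \tau - b_j)$), which changes the dimension count by at most one; thus in the fully general affine model the lower bound becomes $\rrank_\tau(\mB) - 1$, and it is exactly $\rrank_\tau(\mB)$ in the fixed-threshold model of the theorem. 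With that caveat the statement is indeed an immediate corollary, as its label promises, and I would present it with a one-line proof that simply points to Theorem~\ref{Thm:CharacterisationRoundingRank}.
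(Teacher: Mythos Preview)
Your proposal is correct and matches the paper's own approach: the paper does not give a separate proof of Corollary~\ref{cor:CorollaryRR} but derives it exactly as you do, by reading Theorem~\ref{Thm:CharacterisationRoundingRank} as the statement that the columns of $\mB$ can all be linearly separated in $\R^d$ if and only if $\rrank_\tau(\mB)\le d$, and then drawing the data-collection and dimensionality-reduction consequences. Your extra remark about per-column biases and the possible off-by-one is a nice addition that goes slightly beyond what the paper states (the paper only notes, after Proposition~\ref{prop:changingRoundingThreshold}, that per-column thresholds do not change the rounding rank for mixed matrices).
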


\subsection{Comparison of the Ranks}

We compare rounding rank with several well-known ranks. Many of the results in
this subsection were obtained for sign rank in the communication complexity
community; we present these results here in terms of rounding rank. To the best
of our knowledge, we are the first to make the role of the rounding
threshold explicit by introducing mixed matrices (see
Prop.~\ref{prop:changingRoundingThreshold}).

\paragraph{Boolean rank}
For binary matrices $\mL \in \{0,1\}^{m \times k}$ and $\mR \in \{0,1\}^{n
  \times k}$, the \emph{Boolean matrix product $\mL \bprod \mR^T$} is given by
the $m \times n$ binary matrix with $[ \mL \bprod \mR^T]_{ij} =
\bigvee_{\ell=1}^k (\mL_{ik} \land \mR_{jk})$ for all entries $i,j$.  The
\emph{Boolean rank of a binary matrix $\mB$}, denoted $\rankB(\mB)$, is the
smallest $k \in \mathbb{N}$ s.t.\ there exist $\mL \in \{0,1\}^{m \times k}$ and
$\mR \in \{0,1\}^{n \times k}$ with $\mB = \mL \bprod
\mR^T$~\cite{miettinen08discrete}. The rounding rank is a lower bound on the
Boolean rank.
\begin{lemma}
  $\rrank(\mB) \leq \rankB(\mB)$.
\end{lemma}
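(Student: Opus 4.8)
The plan is to show that any Boolean rank-$k$ decomposition of $\mB$ gives rise to a real rank-$k$ matrix that rounds (with threshold $\tau = 1/2$) to $\mB$, so that $\rrank(\mB) \le \rankB(\mB)$ follows immediately from the definition in~\eqref{eq:rrank}. Concretely, suppose $\rankB(\mB) = k$ and fix binary matrices $\mL \in \{0,1\}^{m \times k}$ and $\mR \in \{0,1\}^{n \times k}$ with $\mB = \mL \bprod \mR^T$. The natural candidate is simply to reinterpret $\mL$ and $\mR$ as real matrices and consider the ordinary matrix product $\mA = \mL \mR^T \in \R^{m \times n}$, which has $\rank(\mA) \le k$.

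The key step is to verify that $\round_{1/2}(\mA) = \mB$. For any entry $(i,j)$ we have $\mA_{ij} = \sum_{\ell=1}^k \mL_{i\ell} \mR_{j\ell}$, which is a nonnegative integer counting the number of indices $\ell$ for which both $\mL_{i\ell} = 1$ and $\mR_{j\ell} = 1$. Meanwhile $\mB_{ij} = \bigvee_{\ell=1}^k (\mL_{i\ell} \wedge \mR_{j\ell})$ equals $1$ iff there is at least one such index $\ell$, and $0$ otherwise. Hence $\mB_{ij} = 1$ exactly when $\mA_{ij} \ge 1 \ge 1/2$, and $\mB_{ij} = 0$ exactly when $\mA_{ij} = 0 < 1/2$; in both cases $\round_{1/2}(\mA_{ij}) = \mB_{ij}$. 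Therefore $\mA$ is a real matrix of rank at most $k$ that rounds to $\mB$, so $\rrank(\mB) \le k = \rankB(\mB)$.

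There is no real obstacle here — the proof is a one-line observation that the arithmetic inner product of two $0/1$ vectors ``witnesses'' the Boolean (OR-of-ANDs) inner product via the threshold $1/2$. The only point worth a sentence of care is that this argument is specific to the default threshold $\tau = 1/2$ (or, more generally, any $\tau \in (0,1]$): for other thresholds one would need to rescale, and indeed the statement is only claimed for $\rrank = \rrank_{1/2}$. I would also remark in passing that since $\mL, \mR$ are nonnegative, the same construction shows $\rrankp(\mB) \le \rankB(\mB)$, though that is not needed for the lemma as stated.
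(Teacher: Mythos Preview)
Your proof is correct and follows exactly the same approach as the paper: take a Boolean rank-$k$ factorization $\mB = \mL \bprod \mR^T$, reinterpret $\mL\mR^T$ over $\R$, and observe that $[\mL\mR^T]_{ij} \ge 1/2$ iff $\mB_{ij} = 1$, whence $\round(\mL\mR^T) = \mB$. Your write-up is in fact more detailed than the paper's, and your side remarks about $\tau \in (0,1]$ and $\rrankp(\mB) \le \rankB(\mB)$ are correct bonuses.
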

\begin{proof}
  Let $\rankB(\mB) = k$. Then there exist matrices $\mL \in \{0,1\}^{m \times
    k}$ and $\mR \in \{0,1\}^{n \times k}$ s.t.\ $\mB = \mL \bprod \mR^T$.  If
  we use the algebra of $\R$, we get $[\mL \mR^T]_{ij} \geq \frac{1}{2}$ iff
  $\mB_{ij} = 1$. This implies $\round(\mL \mR^T) = \mB$ and $\rrank(\mB) \leq k
  = \rankB(\mB)$.
\end{proof}

\paragraph{Real rank}
Comparing rounding rank and real rank, we observe that $\mB = \round(\mB)$ for all
binary matrices $\mB$.  Hence, \[ \rrank(\mB) \leq \rank(\mB).\] This is in
contrast to the relationship between Boolean rank and standard rank, which
cannot be compared (i.e.\ neither serves as a lower bound to the other)~\cite{SurveyOfCliqueCoverings}.

Note that the rounding rank can be much lower than both real and Boolean
rank. For example, the $n\times n$ ``upper triangle matrix'' with 1's on the
main diagonal and above has real and Boolean rank $n$, but rounding rank $1$
(see Th.~\ref{Thm:NestedMatrices}). As another example, we show in Section~\ref{Sec:IdentityMatrices}
in the appendix that the $n\times n$ identity
matrix has rounding rank 2 for all $n\ge3$. In fact, while we know that a
real-valued $n \times n$ matrix can have rank up to $n$, the situation is
different for rounding rank: On the one hand, for large enough $n$, all $n
\times n$ matrices $\mB$ have $\rrank(\mB) \leq (\frac{1}{2} +
o(1))n$~\cite[Cor.~1.2]{GeometricalRealizations}. On the other hand, for each
$n$, there exist $n \times n$ matrices with $\rrank(\mB) \geq \frac{n}{32}$,
i.e., the rounding rank can indeed be linear in
$n$~\cite[Cor.~1.2]{GeometricalRealizations}.


It is well-known that real-valued matrices with all entries picked uniformly at
random from some bounded proper interval have full standard rank with
probability 1. For rounding rank, an $n \times n$ binary matrix sampled uniformly
at random has rounding rank $\Omega(n)$ with high
probability (see the proof of Cor.~1.2 in \cite{GeometricalRealizations}).  Hence,
the rounding ranks of random binary matrices are expected to be large. The
real-world data matrices in our experiments often had small rounding ranks,
though.

A lower bound on the rounding rank of a binary matrix $\mB$ can be derived from
the singular values of the sign matrix $\mB^\pm$.
\begin{proposition}
\label{prop:LowerBound}
  Let $r = \rank(\mB^\pm)$ and let $\sigma_1(\mB^\pm) \geq \dots \geq \sigma_r(\mB^\pm) > 0$ be
  the non-zero singular values of $\mB^\pm$. Then
  \begin{align*}
	  (\rrank_0(\mB) + 1) \sum_{i = 1}^{\rrank_0(\mB)} \sigma_i^2(\mB^\pm) \geq mn.
  \end{align*}
\end{proposition}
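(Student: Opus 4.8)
The plan is to combine the geometric picture of Theorem~\ref{Thm:CharacterisationRoundingRank} with a Forster-type isotropic-position argument. Put $d=\rrank_0(\mB)$ and assume $\mB$ is not the all-ones matrix (that degenerate case is checked separately). By Theorem~\ref{Thm:CharacterisationRoundingRank} with $\tau=0$ there are points $\vx_1,\dots,\vx_m$ and normal vectors $\vc_1,\dots,\vc_n$ in $\R^d$ with $\round_0(\langle\vx_i,\vc_j\rangle)=\mB_{ij}$; collecting the $\vx_i$ and $\vc_j$ as the rows of matrices $\mL$ and $\mR$, the matrix $\mA=\mL\mR^T$ has $\rank(\mA)\le d$ and $\round_0(\mA)=\mB$, and in fact $\rank(\mA)=d$ since $d$ is minimal. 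The elementary but crucial observation is that, because $\round_0(\mA)=\mB$, we have $\mB^\pm_{ij}\mA_{ij}=\lvert\mA_{ij}\rvert$ for every $i,j$; hence the Frobenius inner product $\langle\mA,\mB^\pm\rangle$ equals the entrywise sum $\norm{\mA}_1:=\sum_{i,j}\lvert\mA_{ij}\rvert$, and the whole proof reduces to estimating $\langle\mA,\mB^\pm\rangle$ from two sides.

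For the upper bound I would project $\mB^\pm$ onto $\operatorname{col}(\mA)$: writing $P$ for this orthogonal projection, $\langle\mA,\mB^\pm\rangle=\langle\mA,P\mB^\pm\rangle$ since $\mA=P\mA$ and $P$ is symmetric. As $P$ has rank $d$, the matrix $P\mB^\pm$ has rank at most $d$, and since an orthogonal projection cannot increase singular values, $\norm{P\mB^\pm}_F^2=\sum_{i=1}^d\sigma_i^2(P\mB^\pm)\le\sum_{i=1}^d\sigma_i^2(\mB^\pm)$. Cauchy--Schwarz then yields $\norm{\mA}_1=\langle\mA,P\mB^\pm\rangle\le\norm{\mA}_F\bigl(\sum_{i=1}^d\sigma_i^2(\mB^\pm)\bigr)^{1/2}$.

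For the lower bound I would first normalize the realization using Forster's isotropic-position lemma: after replacing $\mL,\mR$ by $\mL T$ and $\mR T^{-T}$ for a suitable invertible $T$ and rescaling the rows and columns of the factorization by positive scalars --- operations that alter neither the signs of the entries of $\mA$ nor $\round_0(\mA)=\mB$ --- we may assume $\norm{\vx_i}=\norm{\vc_j}=1$ for all $i,j$ and $\sum_{i=1}^m\vx_i\vx_i^T=\tfrac{m}{d}I_d$. Then $\norm{\mA}_F^2=\sum_{i,j}\langle\vx_i,\vc_j\rangle^2=\sum_j\vc_j^T\bigl(\tfrac{m}{d}I_d\bigr)\vc_j=\tfrac{m}{d}\sum_j\norm{\vc_j}^2=\tfrac{mn}{d}$, while $\lvert\langle\vx_i,\vc_j\rangle\rvert\le1$ gives $\lvert\langle\vx_i,\vc_j\rangle\rvert\ge\langle\vx_i,\vc_j\rangle^2$, hence $\norm{\mA}_1=\sum_{i,j}\lvert\langle\vx_i,\vc_j\rangle\rvert\ge\norm{\mA}_F^2=\tfrac{mn}{d}$. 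Combining the two estimates, $\tfrac{mn}{d}\le\norm{\mA}_F\bigl(\sum_{i=1}^d\sigma_i^2(\mB^\pm)\bigr)^{1/2}=\bigl(\tfrac{mn}{d}\bigr)^{1/2}\bigl(\sum_{i=1}^d\sigma_i^2(\mB^\pm)\bigr)^{1/2}$; squaring gives $mn\le d\sum_{i=1}^d\sigma_i^2(\mB^\pm)\le(d+1)\sum_{i=1}^d\sigma_i^2(\mB^\pm)$, which is the claim with room to spare.

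The step I expect to be the main obstacle is the invocation of the isotropic-position lemma, which requires that $\operatorname{col}(\mA)$ have no proper subspace containing a disproportionately large share of the points $\vx_i$ (and that no row or column of $\mA$ be forced to vanish). When this fails, $\mB$ contains a row-submatrix of strictly smaller rounding rank carrying a proportional fraction of the rows, on which one recurses; alternatively one can sidestep the degeneracy by working with the sign matrix, using the relation $\srank(\mB^\pm)\le\rrank_0(\mB)+1$ recorded earlier and running the same isotropic-position argument on a sign-rank realization --- and the extra ``$+1$'' in the statement is exactly the slack that makes this robust variant go through. Everything else --- the two Cauchy--Schwarz applications and singular-value monotonicity under projection --- is routine.
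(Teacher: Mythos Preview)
Your argument is correct and takes a genuinely different route from the paper. The paper does not reprove anything: it quotes the spectral lower bound for sign rank from \cite{BetterLinearLowerBound} as a black box, namely $\srank(\mB^\pm)\sum_{i=1}^{\srank(\mB^\pm)}\sigma_i^2(\mB^\pm)\ge mn$, and then transfers it to $\rrank_0$ in two lines using the sandwich $\rrank_0(\mB)\le\srank(\mB^\pm)\le\rrank_0(\mB)+1$ already recorded in Section~\ref{sec:theory}. You instead unroll Forster's isotropic-position argument directly on a rounding-rank realization: the identity $\langle\mA,\mB^\pm\rangle=\norm{\mA}_1$, the projection-and-Cauchy--Schwarz upper bound, and the isotropic lower bound $\norm{\mA}_1\ge\norm{\mA}_F^2=mn/d$ are exactly the ingredients of that proof. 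When the isotropic step succeeds you actually obtain the sharper inequality $d\sum_{i=1}^d\sigma_i^2(\mB^\pm)\ge mn$ with no ``$+1$''; your fallback for the degenerate case---pass to a sign-rank realization and absorb the extra unit---is precisely the paper's reduction. So the trade-off is brevity versus self-containment: the paper hides all the analytic content inside a citation, whereas you expose the Forster machinery and correctly isolate the general-position hypothesis as the one nontrivial step.
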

Prop.~\ref{prop:LowerBound} is a slight modification of a result
in~\cite[Th.~5]{BetterLinearLowerBound} and we give the proof in the appendix.

\paragraph{Role of rounding threshold} We compare the rounding ranks of a fixed
matrix for different rounding thresholds.
We call a binary matrix \emph{mixed}, if it contains no all-zero and no all-one
columns (or rows).
\begin{proposition}
\label{prop:changingRoundingThreshold}
For any $\mB$ and arbitrary $\tau\neq \tau' \in \mathbb{R}$, $\rrank_\tau(\mB)$
and $\rrank_{\tau'}(\mB)$ differ by at most $1$. If additionally $\tau,
\tau'\neq 0$, $\rrank_\tau(\mB) = \rrank_{\tau'}(\mB)$ if
$\sign(\tau)=\sign(\tau')$ or if $\mB$ is mixed.
\end{proposition}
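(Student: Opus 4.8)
The plan is to prove the three claims of Proposition~\ref{prop:changingRoundingThreshold} by manipulating rounding rank decompositions directly, exploiting the freedom to rescale and shift the factor matrices.

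\textbf{Off-by-one bound for arbitrary thresholds.}
First I would show $\rrank_\tau(\mB) \le \rrank_{\tau'}(\mB) + 1$ for all $\tau \ne \tau'$; by symmetry this gives $\lvert \rrank_\tau(\mB) - \rrank_{\tau'}(\mB) \rvert \le 1$. Suppose $\mB = \round_{\tau'}(\mL\mR^T)$ with $\mL, \mR$ having $k$ columns. The idea is to convert a $\tau'$-rounding into a $\tau$-rounding by augmenting the factors with one extra column. Concretely, replace $\mL$ by $[\mL \mid \vone]$ (appending an all-ones column) and $\mR$ by $[\mR \mid (\tau-\tau')\vone]$. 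Then $[\mL'\mR'^T]_{ij} = \langle \mL_i,\mR_j\rangle + (\tau - \tau')$, so $[\mL'\mR'^T]_{ij} \ge \tau$ iff $\langle \mL_i,\mR_j\rangle \ge \tau'$, i.e.\ $\round_\tau(\mL'\mR'^T) = \round_{\tau'}(\mL\mR^T) = \mB$. Since $\mL', \mR'$ have $k+1$ columns, $\rrank_\tau(\mB) \le k+1 = \rrank_{\tau'}(\mB)+1$. One has to be slightly careful that appending a column can only \emph{increase} the rank by at most one and that the rank of $\mL'\mR'^T$ is at most $k+1$ — both are immediate.

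\textbf{Equality when $\sign(\tau)=\sign(\tau')$ and both nonzero.}
Here the trick is pure scaling, which does not change the rank. Suppose $\mB = \round_{\tau'}(\mL\mR^T)$ with $k$ columns and $\tau,\tau'$ have the same sign (both positive or both negative). Scale one factor: let $\mL' = (\tau/\tau')\mL$, $\mR' = \mR$. Then $[\mL'\mR'^T]_{ij} = (\tau/\tau')\langle \mL_i,\mR_j\rangle$. Since $\tau/\tau' > 0$, multiplication by $\tau/\tau'$ is an increasing bijection of $\R$ mapping $\tau'$ to $\tau$, so $(\tau/\tau')x \ge \tau \iff x \ge \tau'$. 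Hence $\round_\tau(\mL'\mR'^T) = \mB$, and $\rank(\mL'\mR'^T) = \rank(\mL\mR^T) \le k$, giving $\rrank_\tau(\mB) \le \rrank_{\tau'}(\mB)$; equality follows by symmetry.

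\textbf{Equality for mixed $\mB$ with $\tau,\tau'$ of opposite sign (both nonzero).}
This is the main obstacle, since scaling alone cannot flip a sign. I would argue it suffices to handle the symmetric case $\tau' = -\tau$ (with $\tau>0$): combined with the same-sign case above, any two nonzero thresholds of opposite sign are linked through $\tau \to -\tau \to \tau'$. So fix $\tau > 0$ and suppose $\mB = \round_\tau(\mL\mR^T)$ is mixed. The goal is a rank-preserving transformation realizing $\mB$ at threshold $-\tau$. A natural first attempt is negation, $\mA \mapsto -\mA$, which sends $\round_\tau$ to ``$x \le -\tau$'' — the wrong inequality direction and it produces the complement of $\mB$, not $\mB$. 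Instead I expect the argument to use that a rank-$k$ matrix $\mA$ with $\round_\tau(\mA) = \mB$ can, because $\mB$ is mixed, be perturbed so that no entry equals $\tau$ exactly and then translated: since every row contains at least one entry $<\tau$ and at least one entry $\ge\tau$ (mixedness), there is room to shift each row of $\mA$ by a row-dependent constant — but a general row shift adds rank. The cleaner route is to note that mixedness lets us assume, after a column (or row) rescaling of $\mA$ realized inside the factorization, that the strictly-below-$\tau$ entries are bounded away from $-\infty$ relative to $\tau$, and then apply the map $x \mapsto -\tau^2/x$ on the relevant range — but this is nonlinear. Given the subtlety, I expect the intended proof actually reduces the opposite-sign mixed case to the $\tau \ne 0$, $\tau' \ne 0$ \emph{arbitrary} bound (difference $\le 1$) plus a parity/structural observation forcing the difference to be $0$ when $\mB$ is mixed; alternatively, one appends a corrective rank-one term that, thanks to mixedness, can be absorbed back into the existing $k$ columns without increasing rank. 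Pinning down precisely which perturbation stays rank-$k$ — rather than rank-$(k{+}1)$ — is the crux, and mixedness (the absence of all-zero/all-one rows and columns) is exactly the hypothesis that creates the slack needed to absorb the extra dimension.
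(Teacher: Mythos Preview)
Your first two parts are correct and match the paper's approach essentially verbatim: appending an all-ones column to shift the threshold by a constant for the off-by-one bound, and rescaling one factor by $\tau/\tau'>0$ for the same-sign case.

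The third part, however, is a genuine gap. You correctly identify that negation and the nonlinear map $x\mapsto -\tau^2/x$ do not work, but the remaining speculations (a parity argument, or absorbing a rank-one correction into the existing columns) are not proofs, and you acknowledge as much. The paper's argument is entirely different and much cleaner: it is geometric and relies on the Hyperplane Separation Theorem. Keep $\mL$ fixed and only change $\mR$. For each column $j$, the finite point sets $C_j=\{\mL_i:\mB_{ij}=1\}$ and $\bar C_j=\{\mL_i:\mB_{ij}=0\}$ lie on opposite sides of the affine hyperplane $\{\vx:\langle\vx,\mR_j\rangle=\tau\}$, so their convex hulls are disjoint compact sets. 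Mixedness is used precisely to guarantee that both $C_j$ and $\bar C_j$ are nonempty (for every $j$, after possibly transposing). A variant of the separation theorem (the paper's Lemma~\ref{lem:UsefulHyperplaneSeparation}) then says that for \emph{any} prescribed nonzero offset $\tau'$ there is a normal vector $\vr_j'\in\R^k$ with $\langle\mL_i,\vr_j'\rangle>\tau'$ for $\mL_i\in C_j$ and $\langle\mL_i,\vr_j'\rangle<\tau'$ for $\mL_i\in\bar C_j$. Assembling these $\vr_j'$ into $\mR'$ gives $\round_{\tau'}(\mL\mR'^T)=\mB$ with $\mL$ unchanged, so the rank does not increase. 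No rank-one correction or absorption trick is needed; the point is that the \emph{same} $k$-dimensional embedding of the rows already admits separating hyperplanes at any nonzero offset, and only the normals (the rows of $\mR$) have to be replaced.
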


To prove Prop.~\ref{prop:changingRoundingThreshold} we need
Lem.~\ref{lem:UsefulHyperplaneSeparation} below.  The lemma is implied by the
Hyperplane Separation Theorem~\cite[p. 46]{ConvexOptimization}, and we prove it
in the appendix.
\begin{lemma}
  \label{lem:UsefulHyperplaneSeparation}
  Let $A$ and $B$ be two disjoint nonempty convex sets in $\R^d$, one of which
  is compact.  Then for all nonzero $c \in \mathbb{R}$, there exists a
  nonzero vector $\vv \in \R^d$, such that $\langle \vx, \vv \rangle > c$ and
  $\langle \vy, \vv \rangle < c$ for all $\vx \in A$ and $\vy \in B$.
\end{lemma}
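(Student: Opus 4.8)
<br>The plan is to derive the statement from the strict form of the Hyperplane Separation Theorem and then rescale the separating normal so that the dividing hyperplane carries the prescribed offset $c$. Throughout I would assume without loss of generality that $A$ is the compact set (otherwise exchange the roles of $A$ and $B$, which at the end only flips the sign of the normal we return). The guiding observation is that a strict separator comes with an \emph{entire open interval} of admissible offsets, and that scaling the normal by a positive factor slides this offset to any prescribed value of the same sign while leaving both strict inequalities, and their orientation, intact.

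First I would invoke the strict Hyperplane Separation Theorem for the compact convex set $A$ and the (closed) convex set $B$, which yields a nonzero $\vu \in \R^d$ and a scalar $t_0$ with $\langle \vx, \vu\rangle > t_0 > \langle \vy, \vu\rangle$ for all $\vx \in A$ and $\vy \in B$. Because $A$ is compact, the infimum $\alpha = \inf_{\vx \in A}\langle \vx, \vu\rangle$ is attained and strictly exceeds $\beta = \sup_{\vy \in B}\langle \vy, \vu\rangle$; hence every $t \in (\beta,\alpha)$ is again a strict separating offset for the direction $\vu$. Next, for the prescribed $c \neq 0$, I would select an offset $t$ in this open gap and put $\vv = (c/t)\,\vu$. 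When $t$ and $c$ share a sign, the factor $c/t$ is positive, so multiplying $\langle \vx,\vu\rangle > t$ and $\langle \vy,\vu\rangle < t$ by $c/t$ preserves their directions and yields exactly $\langle \vx,\vv\rangle > c$ and $\langle \vy,\vv\rangle < c$; moreover $\vv \neq 0$. This produces the desired normal.

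The step I expect to be the main obstacle is guaranteeing that the separating offset can be chosen with $\sign(t) = \sign(c)$, since only then is $c/t > 0$ and the orientation ($A$ strictly above and $B$ strictly below level $c$) preserved. Geometrically this amounts to positioning the dividing hyperplane so that the origin lies on the side dictated by the sign of $c$: because $\langle \mathbf{0}, \vv\rangle = 0$, the origin must fall in $\{\langle \cdot,\vv\rangle < c\}$ when $c>0$ and in $\{\langle \cdot,\vv\rangle > c\}$ when $c<0$. I would address this by exploiting the remaining freedom in the separation---both the choice of $t$ within the open gap $(\beta,\alpha)$ and the choice of separating direction $\vu$ among all strict separators of $A$ and $B$---to secure an offset of the required sign, using that the compactness of $A$ forces a genuine, sign-stable gap rather than mere touching. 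Establishing that such a sign-matching separator can always be arranged is the crux on which the whole argument turns, and it is the part I would carry out with the most care.
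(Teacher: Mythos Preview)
Your strategy---pull a strict separator from the Hyperplane Separation Theorem and rescale the normal by $c/t$---is exactly the route the paper takes, and you are right that the whole argument hinges on being able to choose the offset $t$ with $\sign(t)=\sign(c)$. The gap is that you do not actually carry this step out, and in fact it \emph{cannot} be carried out in general: the lemma as stated is false. Take $d=1$, $A=\{2\}$, $B=\{1\}$, and $c=-1$. Both sets are nonempty, convex, compact, and disjoint. Any $v\in\R$ with $\langle 1,v\rangle=v<-1$ forces $\langle 2,v\rangle=2v<-2<-1$, so the required inequality $\langle x,v\rangle>c$ fails for the unique $x\in A$; hence no admissible $v$ exists. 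This is precisely the obstruction you anticipate: every strict separator of $A$ and $B$ with $A$ on the high side has its gap $(\beta,\alpha)$ contained in $(0,\infty)$, so no offset of negative sign is available, and ``exploiting the remaining freedom in the separating direction'' cannot help because in $\R^1$ the orientation is forced once you insist that $A$ lie strictly above $B$.

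For what it is worth, the paper's own proof contains the same error: in its Case~1 it sets $\alpha=c/c_2$ and writes $\alpha\langle \vx,\vv'\rangle>\alpha c_2$, tacitly assuming $\alpha>0$; when $\sign(c)\neq\sign(c_2)$ the inequality flips and the argument collapses (and Cases~2 and~3 merely reduce to Case~1). So your identification of the crux is spot on, but the honest conclusion is that the crux is insurmountable at this level of generality. If you want a true statement, you need an extra hypothesis---for instance that neither the closure of $A$ nor the closure of $B$ contains the origin and that some separator already places them on the ``correct'' sides of $0$, or simply that $c$ has the same sign as some admissible offset---or else weaken the conclusion to allow swapping which set lies above the hyperplane.
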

\begin{proof}[Proof of Prop.~\ref{prop:changingRoundingThreshold}]
  First claim: Let $\tau,\tau' \in \R$ be arbitrary and pick $k\in\N$, $\mL \in \R^{m
    \times k}$, $\mR \in \R^{n \times k}$ such that $\round_\tau(\mL \mR^T) =
  \mB$.  Set $c=\tau' - \tau$, then
\begin{align*}
  \mB_{ij}
    = \round_{\tau}([\mL\mR^T]_{ij})
	= \round_{\tau'} ([\mL \mR^T]_{ij} + c).
\end{align*}
Set $\mL'= \begin{pmatrix}\mL & c\mathbf1\end{pmatrix}$ and
$\mR'=\begin{pmatrix}\mR & \mathbf1 \end{pmatrix}$, where $\mathbf1$ denotes the
all-one vector. Then $\round_{\tau'}(\mL'\mR'^T)=\mB$ and thus
$\rrank_{\tau'}(\mB) \leq k+1$.

Second claim: Without loss of generality, assume that $\mB$ contains no all-zero and no all-one columns
(otherwise tranpose the matrix). Let $\tau,\tau' \neq 0$ and let $k$ and $\mL \mR^T$ be as
before. If $\sign(\tau)=\sign(\tau')$, set $c=\tau'/\tau>0$ and
$\mR'=c\mR$. Then $\round_\tau(\mL\mR)=\round_{\tau'}(\mL\mR')$ by construction
so that $\rrank_{\tau'}(\mB)\le k$. By reversing the roles of $\tau$ and $\tau'$
in the argument, we establish $\rrank_{\tau}(\mB)=\rrank_{\tau'}(\mB)$.

Suppose $\tau,\tau' \neq 0$ (not necessarily of same sign) and let $\mB$ be
mixed. We now treat the rows of $\mL$ as points $\mL_1, \dots, \mL_m$ in
$\mathbb{R}^k$ and show that there exists an $n \times k$ matrix $\mR'$
consisting of normal vectors of affine hyperplanes in $\mathbb{R}^k$ in its rows
such that the hyperplanes separate the points with rounding threshold $\tau'$,
thereby establishing $\rrank_{\tau'}(\mB)\le\rrank_{\tau}(\mB)$. Again, by
reversing the roles of $\tau$ and $\tau'$, we obtain equality. To construct the
$j$'th row of $\mR'$, let $C_j= \{ \mL_i : \mB_{ij} = 1 \}$ and $\bar C_j = \{
\mL_i : \mB_{ij} = 0 \}$.  Notice that since $\mB$ is mixed, both $C_j$ and
$\bar C_j$ are non-empty.  We observe that the convex hulls of $C_j$ and
$\bar C_j$ are separated by the affine hyperplane with the $j$'th row of $\mR$
as its normal vector and offset from the origin $\tau$. Thus, we apply
Lem.~\ref{lem:UsefulHyperplaneSeparation} to obtain a vector $\vr'_j$ s.t.\
$\langle \vr'_j, \vc \rangle > \tau'$ for all $\vc \in C_j$ and $\langle \vr'_j,
\bar\vc \rangle < \tau'$ for all $\bar\vc \in \bar C$.  We set $\vr'_j$ to
be the $j$'th row of $\mR'$. To obtain $\mR'$, we repeating this procedure for
each of its $n$ rows.
\end{proof}

The above proof can be adopted to show that if $\mB$ is mixed, even using a
different (non-zero) rounding threshold for each row (or column) does not affect
the rounding rank.

\paragraph{Non-negative rounding rank}
While the gap between rank and non-negative rank can be arbitrarily
large~\cite{NMFGap}, for rounding rank and non-negative rounding rank this is
not the case.
\begin{proposition}
  \label{prop:rrankp_vs_rrank}
$\rrankp_\tau(\mB) \leq \rrank_\tau(\mB) + 2$.
\end{proposition}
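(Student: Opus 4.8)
The plan is to start from a rounding rank decomposition $\mB=\round_\tau(\mL\mR^T)$ with $\mL\in\R^{m\times k}$, $\mR\in\R^{n\times k}$ and $k=\rrank_\tau(\mB)$, translate both factors into the non-negative orthant, and spend two extra columns to repair the resulting perturbation of the product. First I would normalize: rescaling each row of $\mR$ by a factor $1+\varepsilon_j>1$ leaves the rounding unchanged (for $\tau>0$ it only pushes entries that equal $\tau$ strictly above it, and for small enough $\varepsilon_j$ no entry below $\tau$ reaches $\tau$), so we may assume no entry of $\mL\mR^T$ equals $\tau$, and hence $\gamma:=\min_{i,j}\lvert[\mL\mR^T]_{ij}-\tau\rvert>0$; then every ``$1$''\nobreakdash-entry is $\ge\tau+\gamma$ and every ``$0$''\nobreakdash-entry is $\le\tau-\gamma$.

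Next I would pick $\mathbf a,\mathbf b\ge 0$ componentwise (for instance $\mathbf a_\ell=\max\{0,-\min_i\mL_{i\ell}\}$, and $\mathbf b$ analogously) so that $\mL+\mathbf1_m\mathbf a^T\ge 0$ and $\mR+\mathbf1_n\mathbf b^T\ge 0$. Multiplying out, $[(\mL+\mathbf1_m\mathbf a^T)(\mR+\mathbf1_n\mathbf b^T)^T]_{ij}=[\mL\mR^T]_{ij}+(\mL\mathbf b)_i+(\mR\mathbf a)_j+\langle\mathbf a,\mathbf b\rangle$, i.e.\ the translation has added a row-dependent term, a column-dependent term, and a constant. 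I would cancel the first two by appending to $\mL+\mathbf1_m\mathbf a^T$ the two columns $\mathbf1_m$ and $\beta\mathbf1_m-\mL\mathbf b$, and to $\mR+\mathbf1_n\mathbf b^T$ the two columns $\alpha\mathbf1_n-\mR\mathbf a$ and $\mathbf1_n$, where $\alpha\ge\max_j(\mR\mathbf a)_j$ and $\beta\ge\max_i(\mL\mathbf b)_i$ keep the new columns non-negative. These two extra rank-one terms exactly kill the row- and column-dependent parts, so the resulting non-negative $(k+2)$-column factors $\mL^{+},\mR^{+}$ satisfy $\mL^{+}(\mR^{+})^T=\mL\mR^T+C\,\mathbf1_m\mathbf1_n^T$ with $C=\langle\mathbf a,\mathbf b\rangle+\alpha+\beta$.

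Adding a constant $C$ to every entry of $\mL\mR^T$ is the same as rounding at threshold $\tau-C$, so it suffices to arrange $\lvert C\rvert<\gamma$: then no entry crosses $\tau$, $\round_\tau(\mL^{+}(\mR^{+})^T)=\mB$, and $\rrankp_\tau(\mB)\le k+2$. Write $C_{\min}=\langle\mathbf a,\mathbf b\rangle+\max_j(\mR\mathbf a)_j+\max_i(\mL\mathbf b)_i$ for the smallest value $C$ can take. If $C_{\min}\le 0$, then the matrix with entries $-(\mL\mathbf b)_i-(\mR\mathbf a)_j-\langle\mathbf a,\mathbf b\rangle$ is entrywise non-negative and of the form $f(i)+g(j)$, hence of non-negative rank at most $2$; using it as the pair of correction terms makes the total perturbation vanish, giving $\mL^{+}(\mR^{+})^T=\mL\mR^T$ exactly. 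If $0<C_{\min}<\gamma$, take $C=C_{\min}$.

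The one case that remains, and that I expect to be the main obstacle, is $C_{\min}\ge\gamma$, where this construction breaks down: $C_{\min}$ is governed by how negative $\mL$ and $\mR$ are, $\gamma$ by how far $\mL\mR^T$ stays from $\tau$, and for an arbitrary decomposition neither dominates the other (rescaling the factors against each other changes neither quantity). To handle it I would exploit the freedom in the choice of the starting decomposition --- changing $\tau$ among positive values, or rescaling rows/columns, alters neither $\rrank_\tau$ nor the column count of a non-negative decomposition (cf.\ the argument for Prop.~\ref{prop:changingRoundingThreshold}) --- and argue that $\mB$ always admits a rank-$\rrank_\tau(\mB)$ decomposition whose factors are non-negative ``enough'' compared with its margin, i.e.\ with $C_{\min}<\gamma$; combined with the two-column repair above this yields $\rrankp_\tau(\mB)\le\rrank_\tau(\mB)+2$.
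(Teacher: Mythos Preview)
Your construction is sound up to the point where you obtain non-negative $(k{+}2)$-column factors $\mL^{+},\mR^{+}$ with $\mL^{+}(\mR^{+})^T=\mL\mR^T+C\,\mathbf1_m\mathbf1_n^T$. The gap is the final case $C_{\min}\ge\gamma$: your proposed resolution---``argue that $\mB$ always admits a rank-$\rrank_\tau(\mB)$ decomposition whose factors are non-negative `enough' compared with its margin''---is a wish, not an argument, and in fact there is no reason such a decomposition should exist. As you yourself note, rescaling the factors against each other leaves both $C_{\min}$ and $\gamma$ invariant, so the obvious degree of freedom does not help.

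The fix is much simpler than the route you are attempting, and it removes the need for the margin $\gamma$ altogether. Your identity $\mL^{+}(\mR^{+})^T=\mL\mR^T+C$ says precisely that $\round_{\tau+C}(\mL^{+}(\mR^{+})^T)=\round_\tau(\mL\mR^T)=\mB$, so you already have a non-negative rank-$(k{+}2)$ decomposition at threshold $\tau+C$. For $\tau>0$ and $C\ge 0$ (which you can always arrange), set $\mR^{++}=\frac{\tau}{\tau+C}\mR^{+}\ge 0$; then $[\mL^{+}(\mR^{++})^T]_{ij}\ge\tau$ iff $[\mL^{+}(\mR^{+})^T]_{ij}\ge\tau+C$ iff $\mB_{ij}=1$. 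This is exactly the threshold-changing trick from Prop.~\ref{prop:changingRoundingThreshold} that you mention, but applied to the \emph{output} of your construction rather than to the input decomposition. With this, your translate-and-correct argument goes through cleanly for $\tau>0$.

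For comparison, the paper's proof takes a different route that avoids the residual constant entirely. It embeds the rows of $\mL$ into $\R^{k+2}$ as non-negative vectors and then $L^1$\nobreakdash-normalises them; the point of the normalisation is that adding the constant vector $(\tfrac12,\dots,\tfrac12)$ to the (bounded, sign-indefinite) $\mR$-side then shifts every inner product by exactly $\tfrac12$, independent of $i$ and $j$. This lets the paper move the $\mR$-side into the non-negative orthant while landing exactly on threshold $\tfrac12$ after a single per-row rescaling, with no leftover $C$ to absorb. Your approach is arguably more transparent once patched; the paper's is more self-contained but relies on a less obvious normalisation trick.
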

This can be shown using ideas similar to the ones
in~\cite{ProbabilisticCommunicationComplexity} by a simple but lengthy computation.
We give a proof in the appendix.

\subsection{Computational Complexity}
The following proposition asserts that rounding rank is \NP-hard to compute
regardless of the rounding threshold.

\begin{proposition}
  \label{prop:rrNPhard}
  It is \NP-hard to decide if $\rrank_0(\mB) \leq k$ for all $k > 2$.  For $\tau
  \neq 0$, it is \NP-hard to decide if $\rrank_\tau(\mB) \leq k$ for all $k > 1$.
\end{proposition}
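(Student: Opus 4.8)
The plan is to prove the theorem by reduction from the classical realizability/stretchability problems of combinatorial geometry, handling $\tau=0$ first and then $\tau\neq0$. For $\tau=0$, invoke the geometric characterization of Th.~\ref{Thm:CharacterisationRoundingRank}: $\rrank_0(\mB)\le k$ holds iff $\mB$ is the ``above/below'' pattern of $m$ points and $n$ linear (through-the-origin) hyperplanes in $\R^k$. This is exactly the question whether a prescribed rank-$k$ point/hyperplane arrangement is realizable over $\R$; for $k=3$ this realizability problem is \NP-hard (it contains, e.g., the stretchability of pseudoline arrangements), and it stays \NP-hard for every fixed $k\ge3$ by a rank-boosting gadget. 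So I would reduce from this realizability problem, sending a hard instance to the binary matrix $\mB$ whose sign pattern it prescribes, and use Th.~\ref{Thm:CharacterisationRoundingRank} to conclude that a realization in $\R^k$ exists iff $\rrank_0(\mB)\le k$. A point that needs care is that $\rrank_0$ permits the realizing matrix $\mA$ to contain $0$ entries (unlike sign rank)---a point may lie \emph{on} a hyperplane where $\mB_{ij}=1$, since $\round_0(0)=1$---but any realizable finite arrangement can be perturbed into general position without changing the ambient dimension, so such degeneracies never save a dimension and $\rrank_0(\mB)$ on the produced instances equals the realizability dimension exactly.

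For $\tau\neq0$ and $k\ge2$, Th.~\ref{Thm:CharacterisationRoundingRank} says $\rrank_\tau(\mB)\le k$ iff $\mB$ is realized by $m$ points and $n$ \emph{affine} hyperplanes at offset $\tau$ in $\R^k$. Homogenizing (appending one coordinate, as in the proof of Prop.~\ref{prop:changingRoundingThreshold}) turns this into a linear realization in $\R^{k+1}$ with the extra demand that all $m$ points lie on a fixed affine hyperplane missing the origin---i.e., it is precisely the affine realizability problem in $\R^k$, which is already \NP-hard for $k=2$ (arrangements of points and straight lines in the plane; stretchability of pseudoline arrangements). This matches the threshold $k>1$ in the statement: for $k=1$, $\rrank_\tau(\mB)\le1$ reduces to recognizing a nested/staircase structure and is solvable in polynomial time (cf.\ Th.~\ref{Thm:NestedMatrices}). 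I would reduce from affine realizability in $\R^k$, outputting the prescribed $\mB$. Two details need attention: first, the construction must realize the offset exactly $\tau$ rather than an arbitrary nonzero value---by the remark after Prop.~\ref{prop:changingRoundingThreshold}, once $\mB$ is mixed one may rescale each column independently to normalize all offsets to $\tau$ without changing the rounding rank, and the reduction can be arranged to output mixed matrices (padding with a few dummy rows/columns); second, one must pass between the strict ``which side'' relations of an arrangement and the ``$\ge\tau$ vs.\ ${<}\tau$'' semantics of $\round_\tau$, handled again by an arbitrarily small general-position perturbation.

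The main obstacle in both cases is exact rank control: the geometric translation through Th.~\ref{Thm:CharacterisationRoundingRank} is essentially free, but one must guarantee the constructed matrix has rounding rank exactly the target $k$ and not $k\pm1$. Concretely this means (i) checking that the rank-boosting gadget lifting the base case ($k=3$ for $\tau=0$, $k=2$ for $\tau\neq0$) to every larger fixed $k$ adds exactly the intended number of dimensions to the rounding rank, and (ii) discharging the ``$\mA$ may be zero'' slack for $\tau=0$ and the ``normalize the offset to $\tau$'' and ``open vs.\ closed side'' issues for $\tau\neq0$. All of these amount to perturbing realizations into general position and rescaling; the \NP-hardness of the underlying arrangement-realizability problems is classical and can be cited directly.
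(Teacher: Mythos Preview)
The paper does not give its own proof of this proposition; it simply cites existing results---sign-rank \NP-hardness for $\tau=0$ and the dot-product-graph dimension hardness of Kang--M\"uller for $\tau\neq 0$. Your proposal sketches precisely the geometric realizability/stretchability argument that underlies those cited proofs, so in substance you are reconstructing the same route rather than a different one.

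Two places in your outline deserve a little more care if you want a self-contained argument. First, the ``rank-boosting gadget'' is not as routine for rounding rank as for ordinary rank: block-diagonal constructions do \emph{not} add rounding ranks (witness $\rrank(\mI_n)=2$), so you cannot simply append an independent block of known rounding rank to lift the base case $k=3$ (resp.\ $k=2$) to all larger $k$. The cited papers get hardness for every fixed $k$ by boosting on the \emph{realizability} side (coning an oriented matroid, or embedding a rank-$3$ instance into a rank-$k$ one so that realizability is preserved and reflected), which then transfers through Th.~\ref{Thm:CharacterisationRoundingRank}; you should make explicit that the boost happens there and not on the matrix side. Second, for $\tau\neq 0$ the paper's citation is to the \emph{symmetric} dot-product setting, whereas $\rrank_\tau$ is asymmetric; your affine point--line realizability reduction targets the asymmetric problem directly and is therefore cleaner for the statement as written, but you should say so rather than leave the reader to bridge the symmetric/asymmetric gap. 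Your handling of the $\rrank_0$ vs.\ $\srank$ slack and the offset-normalization via mixed matrices is fine.
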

For sign rank (i.e. $\tau = 0$), this was proven in
\cite[Th.~1.2]{SignRankNP},\cite[Sec.~3]{VisibilityConstraints}. 
Moreover, Alon
et al.~\cite{SignRankVsVCDimension} argue that computing the sign rank is
equivalent to the existential theory of the reals. For $\tau \neq 0$,
NP-hardness was proven in~\cite[Th.~10]{SphereAndDotProduct}.

It is an open problem whether sign rank or rounding rank computation is in \NP.
Assume we store a matrix $\mA$ that achieves the rounding rank of $\mB$ by
representing all entries with rational numbers. The following proposition
asserts that in general, the space needed to store a $\mA$ can be exponential in
the size of $\mB$.  Hence, the proposition rules out proving that computing
rounding rank is in \NP{} by nondeterministically guessing a matrix $\mA$ of
small rank and rounding it.

\begin{proposition}
\label{prop:hugeEntries}
For all sufficiently large $n$, there exist $n \times n$ binary matrices $\mB$
with $\rrank(\mB) = 3$ s.t.\ 
for each matrix $\mA$ with $\rank(\mA) = 3$ and
$\round(\mA) = \mB$,
it takes $\Theta(\exp(n))$ bits to store the entries of $\mA$
using rational numbers.
\end{proposition}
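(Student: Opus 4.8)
The plan is to route everything through the geometric characterization of Theorem~\ref{Thm:CharacterisationRoundingRank}. Write a rank-$3$ matrix $\mA$ with $\round(\mA)=\mB$ as $\mA=\mL\mR^T$ and pass to the affine chart in which the third coordinate of every row of $\mL$ equals $1$; then a realization of $\mB$ is nothing but a configuration of $m$ labelled points $\vx_1,\dots,\vx_m\in\R^2$ together with $n$ labelled affine lines $\ell_1,\dots,\ell_n\subset\R^2$ (line $\ell_j$ carrying the normal vector and offset read off from the $j$-th row of $\mR$), where $\mB_{ij}$ records on which side of $\ell_j$ the point $\vx_i$ lies. So it suffices to exhibit a bipartite point/line sidedness pattern --- i.e.\ a matrix $\mB$ --- of size $O(n)\times O(n)$ that is realizable (hence $\rrank(\mB)\le 3$) but for which every realization forces some projective invariant of the point set to have magnitude $2^{2^{\Omega(n)}}$.

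For the forcing I would adapt the machinery behind Mn\"ev's universality theorem in its quantitative, rank-$3$ form. Von Staudt's classical constructions simulate the field operations $x\mapsto x+y$ and $x\mapsto x\cdot y$ on coordinates lying on a fixed line by a constant number of auxiliary points, lines and incidences in the plane, once a projective frame of four points in general position has been pinned down; chaining $t=\Theta(n)$ copies of the squaring gadget $x\mapsto x\cdot x$, started from the coordinate $2$, yields a planar configuration on $O(n)$ points and $O(n)$ lines in which one distinguished point is pinned, relative to the frame, to the coordinate $2^{2^{t}}$. Two points need care when turning this into the bit matrix $\mB$. First, $\round_\tau$ encodes only strict ``which side'' relations and never the equalities ``$\vx_i\in\ell_j$'', so one must use the general-position (uniform) version of the construction, in which the arithmetic is witnessed by orientations of point triples --- these are strict sidedness relations and go straight into the $mn$ bits of $\mB$ --- and one must likewise record, as strict sidedness, that the frame is in general position, so that the gadget is rigid. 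Second, with strict inequalities each squaring step pins its output only to an interval $[(1-\varepsilon)x^2,(1+\varepsilon)x^2]$; choosing $\varepsilon$ a small enough constant (which the gadget can enforce with a few extra ``sandwiching'' points), the relative errors compound to at worst a factor $e^{O(\varepsilon 2^{t})}$, so the distinguished coordinate is still forced into $2^{\Theta(2^{t})}=2^{2^{\Theta(n)}}$ in magnitude in every realization. The canonical real realization that actually places the points at the intended coordinates witnesses $\rrank(\mB)\le3$, and $\rrank(\mB)\ge3$ is obtained by additionally embedding a fixed constant-size binary submatrix of rounding rank exactly $3$ (such matrices exist --- e.g.\ the sidedness pattern of a small arrangement of points and lines in $\R^2$ not reproducible by any arrangement whose lines all share the offset $\tau$); since rounding rank only decreases under taking submatrices, this rules out $\rrank(\mB)\le2$.

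The bit-size lower bound then follows. Let $\mA$ be any rational rank-$3$ matrix with $\round(\mA)=\mB$ and every entry of bit-size at most $M$; a rational rank factorization $\mA=\mL\mR^T$ has entries --- hence points $\vx_i$ --- of bit-size $\operatorname{poly}(M,n)$. The distinguished projective invariant (a bounded ratio of products of $3\times3$ minors of the $\vx_i$) then has bit-size $O(\operatorname{poly}(M,n))$, whereas the embedded squaring chain forces its magnitude to be $2^{2^{\Theta(n)}}$, i.e.\ its bit-size to be $2^{\Omega(n)}$; therefore $M=2^{\Omega(n)}$, and as $\mA$ has $n^2$ entries its rational storage costs $\exp(n)$ bits. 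Conversely, the set of valid $\mA$ is a nonempty semialgebraic set cut out by $O(n^2)$ polynomial (in)equalities of degree $2$ in $O(n)$ variables, so by effective quantifier elimination it contains a rational point whose numerators and denominators are bounded by $2^{2^{O(n)}}$, i.e.\ of bit-size $2^{O(n)}$; together these bounds give $\Theta(\exp(n))$.

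I expect the main obstacle to be exactly the engineering in the second paragraph: building a single combinatorial $\mB$ that simultaneously encodes the squaring chain using only strict orientations, is globally realizable in rank exactly $3$, and is rigid enough that the doubly-exponential coordinate survives in every realization (not just the canonical one) with only bounded multiplicative slop per stage. This is precisely the content of the uniform/quantitative strengthenings of Mn\"ev's theorem, so the proof is largely a careful specialization of those statements to the bipartite point/hyperplane setting underlying rounding rank; the error-accumulation estimate across the $\Theta(n)$ squarings is the one genuinely delicate computation.
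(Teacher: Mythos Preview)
The paper does not give its own proof here; it simply asserts that the proposition ``can be derived from the proof of \cite[Th.~4]{SphereAndDotProduct}.'' Your sketch --- a Mn\"ev-style chain of $\Theta(n)$ von Staudt squaring gadgets, encoded by strict planar point/line orientations and read through Theorem~\ref{Thm:CharacterisationRoundingRank} as the bipartite sidedness matrix $\mB$ --- is exactly the construction underlying that cited result, so your approach and the one the paper borrows coincide.

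One technical point is worth making explicit. Your passage from an arbitrary rank-$3$ witness $\mA=\mL\mR^T$ to the affine chart where every row of $\mL$ has third coordinate $1$ is not automatic: a row of $\mL$ might have third coordinate zero, and if it is negative then rescaling to $1$ flips every sidedness relation in that row. The repair is standard --- first hit $\mL$ on the right (and $\mR$ by the inverse transpose) by an invertible $3\times 3$ matrix determined by a projective frame of anchor rows that your construction places in known general position --- but since the bit-size lower bound must hold for \emph{every} rank-$3$ $\mA$ rounding to $\mB$, not just the canonical one, this normalization step needs to be spelled out.
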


Prop.~\ref{prop:hugeEntries} can be
derived from the proof of \cite[Th.~4]{SphereAndDotProduct}.

\begin{lemma}
  The \minErrorRRproblem{$k$} problem is \NP-hard to solve exactly. It is also
  \NP-hard to approximate within any polynomial-time computable factor.
\end{lemma}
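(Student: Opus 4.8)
The plan is to derive both statements from a single observation: the objective $\norm{\mB-\mC}_F^2$ of \minErrorRRproblem{$k$} equals the (integer) number of entries on which $\mB$ and $\mC$ disagree, and it attains the value $0$ exactly when $\rrank(\mB)\le k$ (namely, by taking $\mC=\mB$). Hence there is a gap of $1$ between the case ``optimum $=0$'' and the case ``optimum $\ge 1$'', and since deciding whether $\rrank_{1/2}(\mB)\le k$ is \NP-hard for every $k\ge 2$ by Prop.~\ref{prop:rrNPhard}, this gap will yield both exact hardness and inapproximability.

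First, for the exact case I would reduce from the decision problem ``is $\rrank(\mB)\le k$?''. Given $\mB$, run a hypothetical exact solver for \minErrorRRproblem{$k$} on it. If $\rrank(\mB)\le k$, then $\mC=\mB$ is feasible, so the optimum is $0$; conversely, if the optimum is $0$, then the returned matrix $\mC$ equals $\mB$ and is feasible, so $\rrank(\mB)=\rrank(\mC)\le k$. Thus the solver decides an \NP-hard problem, and \minErrorRRproblem{$k$} is \NP-hard.

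Second, for inapproximability, suppose $\mathcal{A}$ runs in polynomial time and, on every instance, returns a feasible $\mC$ with $\norm{\mB-\mC}_F\le \rho\cdot\mathrm{OPT}$, where $\rho\ge 1$ is any polynomial-time computable function of the input size. Running $\mathcal{A}$ on $\mB$: if $\rrank(\mB)\le k$ then $\mathrm{OPT}=0$, so $\norm{\mB-\mathcal{A}(\mB)}_F\le\rho\cdot 0=0$, i.e.\ $\mathcal{A}(\mB)=\mB$; if $\rrank(\mB)>k$ then every feasible $\mC$ differs from $\mB$ in at least one entry, so $\norm{\mB-\mathcal{A}(\mB)}_F^2\ge 1$. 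Hence checking whether $\norm{\mB-\mathcal{A}(\mB)}_F=0$ decides ``$\rrank(\mB)\le k$'' in polynomial time, which is impossible unless $\mathrm{P}=\NP$.

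There is no genuine obstacle here: the optimization problem inherits its hardness wholesale from the rounding-rank decision problem, and the inapproximability is the standard consequence of a minimization problem whose optimum value can be exactly $0$ (so any multiplicative factor must still return that optimum). The only point that needs care is the admissible range of $k$: because \minErrorRRproblem{$k$} uses the default threshold $\tau=1/2\ne 0$, Prop.~\ref{prop:rrNPhard} gives hardness only for $k\ge 2$, so the lemma is to be read for every fixed $k\ge 2$ (equivalently, with $k$ given as part of the input).
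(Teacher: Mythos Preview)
Your proposal is correct and follows essentially the same approach as the paper's own proof: both reduce from the decision problem of Prop.~\ref{prop:rrNPhard} by observing that \minErrorRRproblem{$k$} has optimum $0$ iff $\rrank(\mB)\le k$, so any exact or multiplicative-factor approximation would decide that \NP-hard question. Your write-up simply spells out the details (including the observation about the admissible range $k\ge 2$ for $\tau=1/2$) that the paper leaves implicit in its one-sentence argument.
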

\begin{proof}
  Both claims follow from Prop.~\ref{prop:rrNPhard}. If in polynomial time we
  could solve the \minErrorRRproblem{$k$} problem exactly or within any factor,
  then we could also decide if $\rrank(\mB) \leq k$ by checking if the result
  for \minErrorRRproblem{$k$} is zero.
\end{proof}


\section{Computing the Rounding Rank}
\label{sec:comp-round-rank}

In this section, we provide algorithms approximating $\rrank(\mB)$ and for
approximately solving the \minErrorRRproblem{$k$} problem. The algorithms are
based on some of the most common paradigms for algorithm design in data mining.
The \proje{} algorithm makes use of randomized projections, \SVD{} uses
truncated SVD, \lpca{} uses logistic PCA, and \asso{} is a Boolean matrix
factorization algorithm.
%
%
For each algorithm, we first discuss how to obtain an approximation to
$\rrank(\mB)$ (in the form of an upper bound) and then discuss extensions
to solve \minErrorRRproblem{$k}$.

\paragraph{Projection-based algorithm (\proje)}
We first describe a Monte Carlo algorithm to decide whether
$\rrank(\mB) \leq d$ for a given matrix $\mB$ and $d \in \mathbb{N}$. The
algorithm can output \textsc{yes} or \textsc{unknown}. If the algorithm outputs \textsc{yes}, it also
produces a rounding rank decomposition. We use this algorithm for different
values of $d$ to approximate $\rrank(\mB)$.

The decision algorithm is inspired by a simple observation: Considering an $m
\times n$ binary matrix $\mB$, we have $\mB = \round(\mB \mI)$, where $\mI$
denotes the $n \times n$ identity matrix.  We interpret each row $\mB_i$ of
$\mB$ as a point in $\R^n$ and each column $\mI_j$ of $\mI$ as the normal vector
of a hyperplane in $\R^n$. The hyperplane given by $\mI_j$ separates the
points $\mB_i$ into the classes $C_j = \{ \mB_{i} : \mB_{ij} = 1 \}$ and $\bar
C_j = \{ \mB_{i} : \mB_{ij} = 0 \}$ by the $j$'th column of $\mB$, since
$\mB_{ij} = \round(\langle \mB_i, \mI_j \rangle)$. The idea of \proje is to
take the points $\mB_i$ (the rows of $\mB$) and to project them into 
lower-dimensional space $\R^d$, $d \ll n$, to obtain vectors $\mL_1, \dots,
\mL_m \in \R^d$. We use a randomized projection that approximately preserves the
distances of the $\mB_i$ and---if $\mB$ has rounding rank at most $d$---try (or
hope) to maintain the separability of the points by hyperplanes by doing
so. Given the projected vectors in $\R^d$, we check separability by affine
hyperplanes and find the corresponding normal vectors $\mR_1, \dots, \mR_n$
using a linear program. If the $\mL_i$ turn out to be separable, we have
$\mB_{ij} = \round(\langle \mL_i, \mR_j \rangle)$ for all $i,j$ and thus $\mB =
\round(\mL \mR^T)$, where $\mL$ and $\mR$ have the $\mL_i$'s and $\mR_j$'s in
their rows, respectively. We conclude $\rrank(\mB) \leq d$ and output
\textsc{yes}. If the $\mL_i$ are not separable, no conclusions can be drawn and the
algorithm outputs \textsc{unknown}.

The Johnson--Lindenstrauss Lemma~\cite{JohnsonLindenstraussLemma} asserts that
there exists a linear mapping $\mA$ that projects points from a high-dimensional
space into a lower-dimensional space while approximately preserving the
distances. We use the projections proposed by Achlioptas~\cite{Achlioptas03Database} to obtain $\mA$.
We set $\mL_i=\mB_i\mA$. The linear program (LP) to compute the normal vector $\mR_j$ is
\begin{align*}
    \text{find} \quad & \mR_j \in\R^{d} 	   & & \\
  \text{subject to} \quad & \sum_{k = 1}^d \mL_{ik} \mR_{jk} \geq \tau + \varepsilon
			& \text{if }\mB_{ij} = 1, \\
                            & \sum_{k = 1}^d \mL_{ik} \mR_{jk} \leq \tau - \varepsilon
			& \text{if }\mB_{ij} = 0. 
\end{align*}
We enforce strict separability by introducing an offset $\varepsilon >
0$. In practice, we set $\varepsilon$ to the smallest positive number
representable by the floating-point hardware. Notice that the LP only aims at
finding a feasible solution; it has $m$ constraints and $d$ variables.


To approximate the rounding rank, we repeatedly run the above
algorithm with increasing values of $d$ until it outputs \textsc{yes}; i.e., $d=1,2,\ldots$.
Alternatively, we could use some form of binary search to find a suitable
value of $d$. In practice, however, solving the LP for large
values of $d$ slows down the binary search too much.  

\eat{
Algorithm~\ref{Alg:RandomisedDecisionAlgo} gives the pseudocode
for \proje.
We utilise Algorithm~\ref{Alg:RandomisedDecisionAlgo} to compute
an approximation of $\rrank(\mB)$: For $d = 1$ we query if
$\rrank(\mB) \leq d$. If the result is true, then we output $d$ as an approximation
for the rounding rank. Otherwise, we set $d = d + 1$ and repeat the procedure.

\begin{algorithm}[tb]
  \DontPrintSemicolon
  \KwData{A binary matrix $\mB \in \{ 0, 1 \}^{m \times n}$, a dimensionality $d \in \mathbb{N}$
	  		and a rounding threshold $\tau \in \R$.}
  \KwResult{\textit{True} and matrices $\mL$ and $\mR$ with $\round_\tau(\mL \mR^T) = \mB$
	  			if the algorithm found a rounding rank decomposition in $\R^d$;
			\textit{False} otherwise.}
  Sample a dimensionality-reduction mapping $\mA \in \R^{n \times d}$ as in Equation~2 of \cite{Achlioptas03Database} \;
  Set $\mL \leftarrow \mB \mA$ \;
  \For{ $j \leftarrow 1$ to $n$ } {
	  Find $\vr_j$ from \eqref{Eq:HyperplaneLP} \;
  	  \If{ the LP had no feasible solution }{
	      \Return{ False } \;
      }
  }
  Set $\mR \leftarrow \begin{pmatrix} \vr_1 & \cdots & \vr_n \end{pmatrix}^T$ \;
  \Return{ True and $\mL$ and $\mR$ } \;
  \caption{Algorithm deciding if $\rrank(\mB) \leq d$.}
  \label{Alg:RandomisedDecisionAlgo}
\end{algorithm}
}

To solve \minErrorRRproblem{$k$}, we modify the LP of \proje to output
an approximate solution. For this purpose, we introduce non-negative
slack-variables $\vc_i$ as in soft-margin SVMs to allow for errors, and an
objective function that minimizes the $L_1$ norm of the slack variables.  We
obtain the following LP:

\begin{alignat}{4}
	\min_{\substack{\vc \in \R_{\geq 0}^m \\ \mR_j \in \R^d}} 	& \quad &  \sum_{i = 1}^m &\;\vc_i \notag \\
  \text{subject to} & & \sum_{k = 1}^d &\;\mL_{ik} \mR_{jk} + \vc_i &\;\geq\;& \tau + \varepsilon,
			&\qquad& \text{if }\mB_{ij} = 1, \notag  \\
    & & \sum_{k = 1}^d &\;\mL_{ik} \mR_{jk} - \vc_i &\;\leq\;& \tau - \varepsilon,
			& & \text{if }\mB_{ij} = 0. \notag 
\end{alignat}

\paragraph{Rounded SVD algorithm (\SVD)}
We use rounded SVD to approximate $\rrank(\mB)$.  The algorithm
is greedy and similar to the one in \cite{NeighborhoodData}.
Given a binary matrix $\mB$, the algorithm sets $k = 1$.  Then
it computes the rank-$k$ truncated SVD of $\mB$ and rounds it.  If the rounded
matrix and $\mB$ are equal, it returns $k$, otherwise, it sets $k = k+1$
and repeats. The underlying reasoning is that the rank-$k$ SVD is the
real-valued rank $k$ matrix minimizing the distance to $\mB$ w.r.t.\ the
Frobenius norm. Hence, also its rounded version should be ``close'' to $\mB$.

To approximately solve \minErrorRRproblem{$k$}, we compute the truncated
rank-$\ell$-SVD of $\mB$ for all $\ell = 1, \dots, k$ and return the rounded
matrix with the smallest error.

\paragraph{Logistic Principal Component Analysis (\lpca)}
The logistic function $f(x) = \left(1 + e^{-x}\right)^{-1}$ is a differentiable
surrogate of the rounding function and it can be used to obtain a smooth approximation of the rounding. 

\lpca~\cite{schein03generalized} models each $\mB_{ij}$ as a Bernoulli random
variable with success probability $f(\langle \mL_i,\mR_j\rangle)$, where $\mL
\in \R^{m \times k}$ and $\mR \in \R^{n \times k}$ are unknown parameters. Given
$\mB$ and $k \in \mathbb{N}$ as input, L-PCA obtains (approximate)
maximum-likelihood estimates of $\mL$ and $\mR$. If each $f(\langle
\mL_i,\mR_j\rangle)$ is a good estimate of $\mB_{ij}=1$, then $\lVert \mB
- \round(\mL\mR^T) \rVert_F$ should be small.


To approximate the $\rrank(\mB)$, we
run \lpca on $\mB$ for $k = 1$ and check if $\round(\mL \mR^T) = \mB$.
If this is the case, we return $k$, otherwise, we set $k = k + 1$ and repeat.

To use \lpca to compute an approximation of \minErrorRRproblem{$k$}, we
simply run \lpca and apply rounding.

\paragraph{Permutation algorithm (\shay)}
The only known algorithm to approximate the sign rank of a $n \times n$ matrix
in polynomial time was given in~\cite{SignRankVsVCDimension}; it guarantees an
upper bound within an approximation ratio of $O(n / \log n)$. By
Prop.~\ref{prop:changingRoundingThreshold}, we can use this method to
approximate the rounding rank.  The algorithm permutes the rows of the
input matrix $\mB$ s.t.\ the maximum number of bit flips over all
columns is approximately minimized. It then algebraically
approximates $\rrank(\mB)$ by evaluating a certain polynomial
based on the occurring bit flips. 
\eat{
\sn{Is this intuition OK? In case you are interested, the polynomial part works the following way: 
	Consider the matrix $\mB$ after reordering. We build a polynomial for each column of $\mB$.
	A bit flip from 0 to 1 (or vice versa) in a column of $\mB$ corresponds to a root of the polynomial
	which as input gets the number of the row in which the flip occurs.
	Now we create the factorization $\mL$ and $\mR$ by writing the coefficients of the polynomial into
	a row of $\mL$ and the points at which we evaluate the polynomial into the column of $\mR$.}
}
      The algorithm cannot solve the \minErrorRRproblem{$k$} problem.

\paragraph{Nuclear norm algorithm (\nuclear)}
The nuclear norm $\norm{\mX}_*$ of a matrix $\matr{X}$ is the sum of the
singular values of $\matr{X}$ and is a convex and differentiable surrogate of
the rank function of matrix. A common relaxation for minimum-rank matrix
factorization is to minimize $\norm{\mX}_*$ instead of $\rank(\mX)$. In our
setting, we obtain the following minimization problem:
\begin{alignat}{3}
  \mX^* = &\argmin_{\mX \in \R^{m\times n}} &\quad &  \norm{\mX}_*  & & \notag \\
               &\;\text{subject to}                      &      & \mX_{ij} \geq \tau &\qquad& \text{if }\mB_{ij} = 1, \notag  \\
               &                                                &      & \mX_{ij} < \tau &              &\text{if }\mB_{ij} = 0. \notag 
\end{alignat}

This method has some caveats: While $\mX^*$ must have small singular values, it
may still have \emph{many}. Additionally, by Prop.~\ref{prop:hugeEntries}, some
entries of a matrix $\mA$ achieving the rounding rank might be extremely
large. In such a case, some of the singular values of $\mA$ must also be large,
and consequently the nuclear norm of the matrix is large. Thus, $\mX^*$ might
have a too large rank. This algorithm cannot be extended to solve
\minErrorRRproblem{$k$}.


\section{Nested Matrices}
\label{sec:nestedness}

A binary matrix is nested if we can reorder its
columns such that after the reordering, the one-entries in each row form a
contiguous segment starting from the first column~\cite{MannilaTerziNested}.
Intuitively, nested matrices model subset/superset relationships between the
rows and columns of a matrix. Such structures are, for example, found in
presence/absence data of locations and species~\cite{MannilaTerziNested}.

We show that nested matrices are exactly the matrices with non-negative rounding
rank~1. Formally,
  a binary matrix $\mB$ is \emph{directly nested} if for each one-entry
  $\mB_{ij}=1$, we have $\mB_{i'j'}=1$ for all $i' \in \{1, \dots, i - 1\}$
  and $j' \in \{1, \dots, j-1\}$.
  A binary matrix $\mB$ is \emph{nested} if there exist permutation
  matrices $\mP_1$ and $\mP_2$, such that $\mP_1\mB\mP_2$ is directly nested.

\begin{theorem}
\label{Thm:NestedMatrices}
  Let $\bm0\neq\mB \in \{ 0, 1\}^{m \times n}$. Then $\mB$ is nested if and only if
  $\rrankp(\mB) = 1$.
\end{theorem}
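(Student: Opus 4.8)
The plan is to read the nested (staircase) structure of $\mB$ off a non-negative rank-one rounding decomposition, using that multiplication is order-preserving on $\nR$. Throughout $\tau=\tfrac12$, so $\rrankp(\mB)\le1$ says there are vectors $\ell\in\nR^m$ and $r\in\nR^n$ with $\round(\ell r^T)=\mB$, i.e.\ $\ell_i r_j\ge\tfrac12$ exactly when $\mB_{ij}=1$; since $\mB\neq\bm0$ we always have $\rrankp(\mB)\ge1$, so it is enough to characterise when $\rrankp(\mB)\le1$. I first note that $\rrankp(\cdot)\le1$ is invariant under row and column permutations, because $\round(\ell r^T)=\mB$ implies $\round\bigl((\mP_1\ell)(\mP_2 r)^T\bigr)=\mP_1\mB\mP_2$ with $\mP_1\ell,\mP_2 r\ge0$; and ``nested'' is by definition closed under permutations, with ``directly nested'' serving as a canonical form. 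Hence in the forward direction I may assume $\mB$ is already directly nested, and in the backward direction it suffices to produce permutations making $\mB$ directly nested.

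Backward direction ($\rrankp(\mB)\le1\Rightarrow\mB$ nested): take $\ell,r\ge0$ as above, permute the rows so that the entries of $\ell$ are non-increasing and the columns so that the entries of $r$ are non-increasing, and let $\mB'$ be the permuted matrix (with $\ell,r$ now denoting the sorted vectors, so that $\mB'_{ij}=1\iff\ell_i r_j\ge\tfrac12$). If $\mB'_{ij}=1$, then $\ell_i r_j\ge\tfrac12$, and for every $i'\le i$ and $j'\le j$ order-preservation of multiplication on $\nR$ yields $\ell_{i'}r_{j'}\ge\ell_i r_j\ge\tfrac12$, hence $\mB'_{i'j'}=1$. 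Thus $\mB'$ is directly nested, so $\mB$ is nested.

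Forward direction ($\mB$ nested $\Rightarrow\rrankp(\mB)\le1$): assume $\mB$ directly nested. Then each row is a prefix of ones and the prefix lengths $c_1\ge\cdots\ge c_m$ (with $c_i\in\{0,\dots,n\}$) are non-increasing, so $\mB_{ij}=1\iff j\le c_i$. Fix any strictly decreasing positive reals $r_1>\cdots>r_n>0$, and set $\ell_i=1/(2r_{c_i})$ when $c_i\ge1$ and $\ell_i=0$ when $c_i=0$; then $\ell,r\ge0$. For a row with $c_i\ge1$ we get $\ell_i r_j=r_j/(2r_{c_i})\ge\tfrac12$ iff $r_j\ge r_{c_i}$ iff $j\le c_i$ (using strict monotonicity of $r$), which is exactly $\mB_{ij}=1$; for a row with $c_i=0$ we get $\ell_i r_j=0<\tfrac12$ for all $j$, again matching $\mB_{ij}=0$. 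All-zero columns require no separate treatment: if column $j$ is all-zero then $c_i<j$ for every $i$, so $r_{c_i}>r_j$ and $\ell_i r_j<\tfrac12$. Hence $\round(\ell r^T)=\mB$ and $\rrankp(\mB)\le1$.

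The only genuinely delicate ingredient is \emph{non-negativity}. The implication $a\ge a'\ge0$, $b\ge b'\ge0\Rightarrow ab\ge a'b'$ is exactly what turns the sorted decomposition into the staircase pattern, and this step breaks down without the sign constraint: an unrestricted rank-one rounding decomposition can realise the $2\times2$ matrix whose only ones lie on the anti-diagonal (take $\ell=(1,-1)$, $r=(-1,1)$), and that matrix is not nested. The remaining pieces---the permutation-invariance remark and the bookkeeping for all-zero rows and columns---are routine, so I expect no further obstacle.
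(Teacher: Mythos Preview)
Your proof is correct and follows essentially the same approach as the paper: the backward direction is identical (sort $\ell$ and $r$ decreasingly and use monotonicity of multiplication on $\nR$), and for the forward direction both you and the paper explicitly construct non-negative vectors from the prefix lengths of a directly nested matrix---the paper chooses $r_j=2^{-j}$ and $\ell_i=2^{c_i-1}$, whereas you allow any strictly decreasing positive $r$ and take $\ell_i=1/(2r_{c_i})$, which is a mild generalisation of the same idea. Your explicit treatment of all-zero rows and the closing remark on why non-negativity is essential are nice additions not spelled out in the paper.
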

\begin{proof}
  $\Rightarrow$: We reorder the rows and columns of
  $\mB$ by the number of $1$s they contain in descending order. This gives us
  permutation matrices $\mP_1$ and $\mP_2$ s.t.\ $\mB' = \mP_1 \mB \mP_2$ is
  directly nested.  Set $\vp = \mB' \bm1$, i.e., $\vp$ is the vector containing
  the row sums of $\mB'$.  Then for $\vec{l'}$ and $\vec{r'}$ with $\vec{l}'_i =
  2^{\vec{p}_i - 1}$ and $\vec{r}'_j = 2^{-j}$, $\mB' = \round(\vec{l}'
  \cdot (\vec{r}')^T)$.  Setting $\vec{l} = \mP_1^T \vec{l}'$ and $\vec{r} =
  \mP_2 \vec{r}'$, we get $\mB = \round(\vl \cdot \vr^T)$. Hence, we have
  $\rrank(\mB) = 1$.

  $\Leftarrow$: Let $\vl \geq 0$ and $\vr \geq 0$ be s.t.\ $\mB = \round(\vl
  \vr^T)$.  Then there exist permutation matrices $\mP_1$ and $\mP_2$ s.t.\ for
  $\vl' = \mP_1 \vl$ we have $\vl_1' \geq \dots \geq \vl_m'$ and for $\vr' =
  \mP_2^T \vr$ we have $\vr_1' \geq \dots \geq \vr_n'$.  Set $\mB' = \round(\vl'
  (\vr')^T)$ and observe $\vl_i' \vr_j' \geq \vl_{i+1}' \vr_j'$ for all
  $i,j$. Therefore, for each entry of $\mB'$, $\mB'_{ij} = \round(\vl_i' \vr_j')
  \geq \round(\vl_{i+1}' \vr_j') = \mB'_{(i+1)j}$.  Similarly,
  $\mB'_{ij} = \round(\vl_i' \vr_j') \geq \round(\vl_i' \vr_{j+1}') =
  \mB'_{i(j+1)}$.  Therefore, $\mB'$ is directly nested.  We conclude that $\mB
  = \vl \vr^T$ is nested since $\mB = \round(\vl \vr^T) = \mP_1^{T}
  \round(\mP_1(\vl \vr^T)\mP_2) \mP_2^{T} = \mP_1^{T} \mB' \mP_2^{T}$.
\end{proof}

Binary matrices with rounding rank~1 are also closely related to nested matrices.

\begin{proposition}
  \label{prop:rrank1_eq_block_nested}
  Let $\bm0\neq\mB \in \{ 0, 1\}^{m \times n}$. The following statements are
  equivalent:
  \begin{enumerate}
    \item $\rrank(\mB) = 1$.
	\item \eat{$\mB$ is nested or} there exist permutation matrices $\mP_1$ and $\mP_2$
		  and nested matrices $\mB_1$ and $\mB_2$, such that
	      \begin{align*}
			  \mB = \mP_1
			  \begin{pmatrix}
				\mB_1 & 0     \\
				0     & \mB_2
			  \end{pmatrix}\mP_2.
		  \end{align*}
  \end{enumerate}
\end{proposition}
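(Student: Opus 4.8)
I would prove the two implications separately, invoking Theorem~\ref{Thm:NestedMatrices} as a black box and exploiting the sign pattern of a rank-one factorization; throughout I use the (easy) fact that rounding rank is unchanged by permuting rows or columns. For the direction $2\Rightarrow 1$, it suffices to show that $\mC := \begin{pmatrix}\mB_1 & \bm 0\\ \bm 0 & \mB_2\end{pmatrix}$ has $\rrank(\mC)\le 1$, since then $\rrank(\mB)=\rrank(\mC)=1$ (using $\mB\neq\bm 0$). If $\mB_1$ or $\mB_2$ is the zero matrix, $\mC$ is a zero-padded copy of a single nested matrix and the claim follows directly from Theorem~\ref{Thm:NestedMatrices}, since appending zero rows and columns preserves nestedness. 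Otherwise the explicit construction in the proof of Theorem~\ref{Thm:NestedMatrices} supplies \emph{strictly positive} vectors with $\mB_1=\round(\vl_1\vr_1^T)$ and $\mB_2=\round(\vl_2\vr_2^T)$. Put $\vl=\begin{pmatrix}\vl_1\\ -\vl_2\end{pmatrix}$ and $\vr=\begin{pmatrix}\vr_1\\ -\vr_2\end{pmatrix}$: the diagonal blocks of $\vl\vr^T$ are $\vl_1\vr_1^T$ and $(-\vl_2)(-\vr_2)^T=\vl_2\vr_2^T$, which round to $\mB_1$ and $\mB_2$, while the off-diagonal blocks $-\vl_1\vr_2^T$ and $-\vl_2\vr_1^T$ have only strictly negative entries and round to $\bm 0$. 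Hence $\round(\vl\vr^T)=\mC$, and since $\vl,\vr\neq\bm 0$ we get $\rrank(\mC)\le 1$.

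For the direction $1\Rightarrow 2$, write $\mB=\round(\vl\vr^T)$ with nonzero $\vl\in\R^m$, $\vr\in\R^n$. Split the rows into $I_+,I_0,I_-$ according to whether $\vl_i$ is positive, zero, or negative, and the columns into $J_+,J_0,J_-$ analogously. If $i\in I_0$ or $j\in J_0$ then $(\vl\vr^T)_{ij}=0<\tfrac12$, so $\mB_{ij}=0$; if the row block and column block of $(i,j)$ have opposite signs then $(\vl\vr^T)_{ij}<0$, so again $\mB_{ij}=0$. Thus every one-entry of $\mB$ lies in $(I_+\times J_+)\cup(I_-\times J_-)$. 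The submatrix $\mB[I_+,J_+]$ is $\round$ of the outer product of the two strictly positive subvectors $\vl|_{I_+}$ and $\vr|_{J_+}$, hence has $\rrankp\le 1$ and is nested by Theorem~\ref{Thm:NestedMatrices} (a zero submatrix being trivially nested); likewise $\mB[I_-,J_-]=\round\bigl(\vl|_{I_-}\vr|_{J_-}^T\bigr)$ is nested, since flipping the sign of both subvectors makes them positive. Reordering the rows as $I_+,I_-,I_0$ and the columns as $J_+,J_-,J_0$ turns $\mB$ into a block-diagonal matrix with blocks $\mB[I_+,J_+]$, $\mB[I_-,J_-]$, and an all-zero block; absorbing the zero rows and zero columns into one of the two nested blocks (again using that appending zero rows/columns preserves nestedness) leaves exactly the two-block form claimed.

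I expect both directions to be short once the sign bookkeeping is set up, so getting that bookkeeping right is the main obstacle. In $1\Rightarrow 2$ one has to observe that every block meeting a zero row or column, and every block between opposite-sign index sets, is forced to be all-zero, that the two surviving same-sign blocks are covered by Theorem~\ref{Thm:NestedMatrices} after an innocuous sign flip, and that the leftover all-zero rows and columns can be folded into one nested block so that precisely two blocks remain. In $2\Rightarrow 1$ the only delicate point is that Theorem~\ref{Thm:NestedMatrices} only yields \emph{non-negative} factors, whereas the cross-block cancellation requires them strictly positive; this is available because the factors produced in that theorem's proof are $2^{p_i-1}$ and $2^{-j}$, which are positive.
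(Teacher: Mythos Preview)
Your proposal is correct and follows essentially the same route as the paper: sign-split the rank-$1$ factors to obtain two nested blocks for $1\Rightarrow 2$, and stack sign-flipped non-negative factors from Theorem~\ref{Thm:NestedMatrices} for $2\Rightarrow 1$. One minor remark: your worry that the cross-block cancellation ``requires'' strictly positive factors is unfounded, since with $\tau=\tfrac12$ any non-positive entry already rounds to $0$; the paper's proof simply uses non-negative factors and the same argument goes through.
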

The proof is in the appendix.


\paragraph{Algorithms}
Mannila and Terzi~\cite{MannilaTerziNested} introduced the \emph{Bidirectional
  Minimum Nestedness Augmentation} (\BMNA) problem: Given a binary matrix $\mB$,
find the nested matrix $\mA$ which minimizes $\norm{\mB - \mA}_F$.  We will
discuss three algorithms to approximately solve this problem.

\cite{MannilaTerziNested} gave an algorithm, \nmt, which approximates
a solution for the \BMNA problem by iteratively eliminating parts of the matrix
that violate the nestedness.

Next, we propose a alternating minimization algorithm, \nexhaust, which exploits
Th.~\ref{Thm:NestedMatrices}. 
\nexhaust maintains two vectors $\vec{l} \in \nR^m$ and $\vec{r} \in \nR^n$ and
iteratively minimizes the error $\lVert\mB - \round(\vec{l} \cdot
\vec{r}^T)\rVert_F$.  It starts by fixing $\vec{r}$ and updates $\vec{l}$, such
that the error is minimized. Then $\vec{l}$ is fixed and $\vec{r}$ is updated.
This procedure is repeated until the error stops reducing or we have reached a
certain number of iterations.

We describe an update of $\vec{l}$ for fixed $\vec{r}$; updating $\vr$ for given
$\vl$ is symmetric.  Observe that changing entry $\vl_i$ only alters the $i$'th
row of $\mA = \vl \cdot \vr^T$, and consequently $\mA_i$ is not affected by any
$\vl_k$ with $k \neq i$.  Hence, we only describe the procedure for updating
$\vl_i$.  Define the set $V_i = \{ \vr_j : \mB_{ij} = 1\}$ of all values of
$\vr$ where $\mB_i$ is non-zero.  We make the following observations: If we set
$\vl_i < \frac{1}{2 \max(\vr)}$, then $\mA_i$ only contains zeros after the
update.  If $\frac{1}{2 \max(\vr)} < \vl_i < \frac{1}{2 \max(V_i)}$, then after
the update all non-zeros of $\mA_i$ will be in entries where $\mB_i$ has a zero.
If $\vl > \frac{1}{2 \min(V_i)}$, we add too many $1$s to $\mA_i$.  Thus, all
values that we need to consider for updating $\vl_i$ are $\frac{1}{2 \max(\vr)}$
and the values in $\{ \frac{1}{2v} : v \in V_i\}$. The algorithm tries all
possible values for $\vl_i$ exhaustively and computes the error at each step.

We can also use the results of \nmt as initialization for \nexhaust:
We run \nmt and obtain a nested matrix $\mB$.  Now we use the construction from
step 1 of the proof of Th.~\ref{Thm:NestedMatrices} to obtain $\vl$ and $\vr$
with $\mB = \round(\vl \vr^T)$, and try to improve using \nexhaust.

Finally, we can use \SVD to solve the \BMNA problem approximately. By the
Perron--Frobenius Theorem \cite[Ch. 8.4]{horn13matrix}, the principal left and
right singular vectors of a non-negative matrix are also non-negative. Hence we
may use the \SVD algorithm to obtain the rank-$1$ truncated SVD and round. By
Th.~\ref{Thm:NestedMatrices}, the result must be nested.


\section{Experiments}
\label{sec:experiments}

We conducted an experimental study on synthetic and real-world datasets to
evaluate the relative performance of each algorithm for estimating the rounding
rank or for \minErrorRRproblem$k$.

\subsection{Implementation Details}
\label{sec:exp:implementation}
For
\lpca, we used the implementation by the authors of \cite{schein03generalized}.
We implemented \nmt and \shay in C and  all other algorithms in
Matlab. For \nuclear, we used the CVX package with the SeDuMi solver
\cite{cvx}. For solving the linear programs in \proje, we used Gurobi.

Due to numerical instabilities, \nuclear often returned a matrix with only
positive singular values (i.e.\ of full rank). We countered this by zeroing the smallest singular values of the returned matrix that did not affect to the result of the rounding.

All experiments were conducted on a computer with eight Intel Xeon E5530
processors running at 2.4\,GHz and 48\,GB of main memory. All our
algorithms and the synthetic data generators are available online.\!\footnote{\url{http://dws.informatik.uni-mannheim.de/en/resources/software/rounding-rank/}}

\subsection{Results With Synthetic Data}
\label{sec:exp:synthetic}

We start by studying the behavior of the algorithms under controlled synthetic datasets. 

\subsubsection{Data generation}
\label{sec:exp:synth:data}

We generated synthetic data by sampling two matrices $\mL\in\R^{m\times
  k}$ and $\mR\in\R^{n\times k}$ and then rounding their product to obtain $\mB =
  \round_\tau(\mL\mR^T)$ with rounding rank \emph{at
  most} $k$. The actual rounding rank of $\mB$ can be lower, however, because
there may be matrices $\mL'\in\R^{m\times k'}$ and $\mR'\in\R^{n\times
  k'}$ with $k' < k$ and $\round_\tau(\mL'\mR'^T) = \mB$. (In fact, we
sometimes found such matrices.) In some experiments, we additionally applied
noise by flipping elements selected uniformly at random. We report as
\emph{noise level} $p$ the ratio of the number of flipped elements to the number
of non-zeros in the original noise-free matrix.

We sampled every element of $\mL$ and $\mR$ i.i.d.~using two families of
distributions: uniform and normal distribution. For both distributions, we first
pick a desired expected value $\mu = \E[(\mL\mR^T)_{ij}]$ of each entry in
$\mL\mR^T$. We then parameterize the distributions such that the expected value
for an element of $\mL$ or $\mR$ is $q = \sqrt{\mu/k}$. For the normal
distribution, we set the variance to $1$, and for the uniform distribution, we
sampled from range $[q-1/2, q+1/2]$. 


We generated two sets of matrices. In the first set, the matrices were very
small, and it was used to understand the behavior of the slower algorithms. In
the second set, the matrices were medium-sized, to give us more realistic-sized
data, but we could use only some of the methods with these data. When generating
the data, we varied four different parameters: number of rows $m$, the planted
rank $k$, the expected value $\mu$, and the level of noise $p$. In all
experiments, we varied one of these parameters, while keeping the others
fixed. We generated all datasets with rounding threshold $\tau=1/2$. For the
small data, we used $n=100$ columns and the number of rows varied from $60$ to
$220$ with steps of $40$ with the default value being $n=100$.
The rank $k$ in the small matrices varied from $5$ to $30$
with steps of $5$, default being $k=10$; the expected value $\mu$ varied from
$0.1$ to $0.7$ with $0.1$ steps (default was $\mu=0.5$); the noise level $p$
varied from $0.05$ to $0.5$ with steps of $0.05$, and by default we did not
apply any noise. We generated ten random matrices with each parameter setting to
test the variation of the results.

For the medium-sized matrices, we used $n=300$ columns and the number of rows
varied from $400$ to $600$ with steps of $50$ the default being $m=500$; the
planted rank $k$ varied from $40$ to $100$ with default value $k = 60$; the
expected value and the noise were as with the small data. We generated five
random matrices with each parameter setting.

\subsubsection{Rounding rank}
\label{sec:exp:synth:rank}

In our first set of experiments, we studied the performance of the different
algorithms for estimating the rounding rank. The results for the small synthetic
datasets are summarized in Fig.~\ref{fig:synth:small}. The results are given for
the uniformly distributed factor matrices; the results with normally distributed
factors were largely similar and are postponed to the appendix.

\begin{figure*}[tb]
  \centering
  \includegraphics[height=\legendheight]{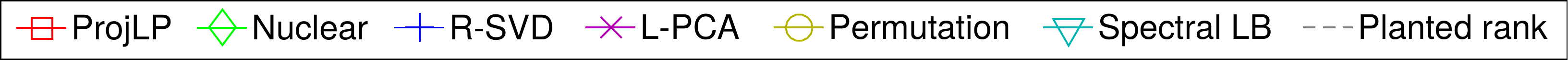} \\
  \rotatebox[origin=l]{90}{\hspace*{1em}\small Uniform dist.}\hspace*{\smallfigsep}%
  \subfigure[Rank, vary $m$]{%
    \label{fig:synth:small:n:unif:rank}%
    \includegraphics[width=\smallfigwidth]{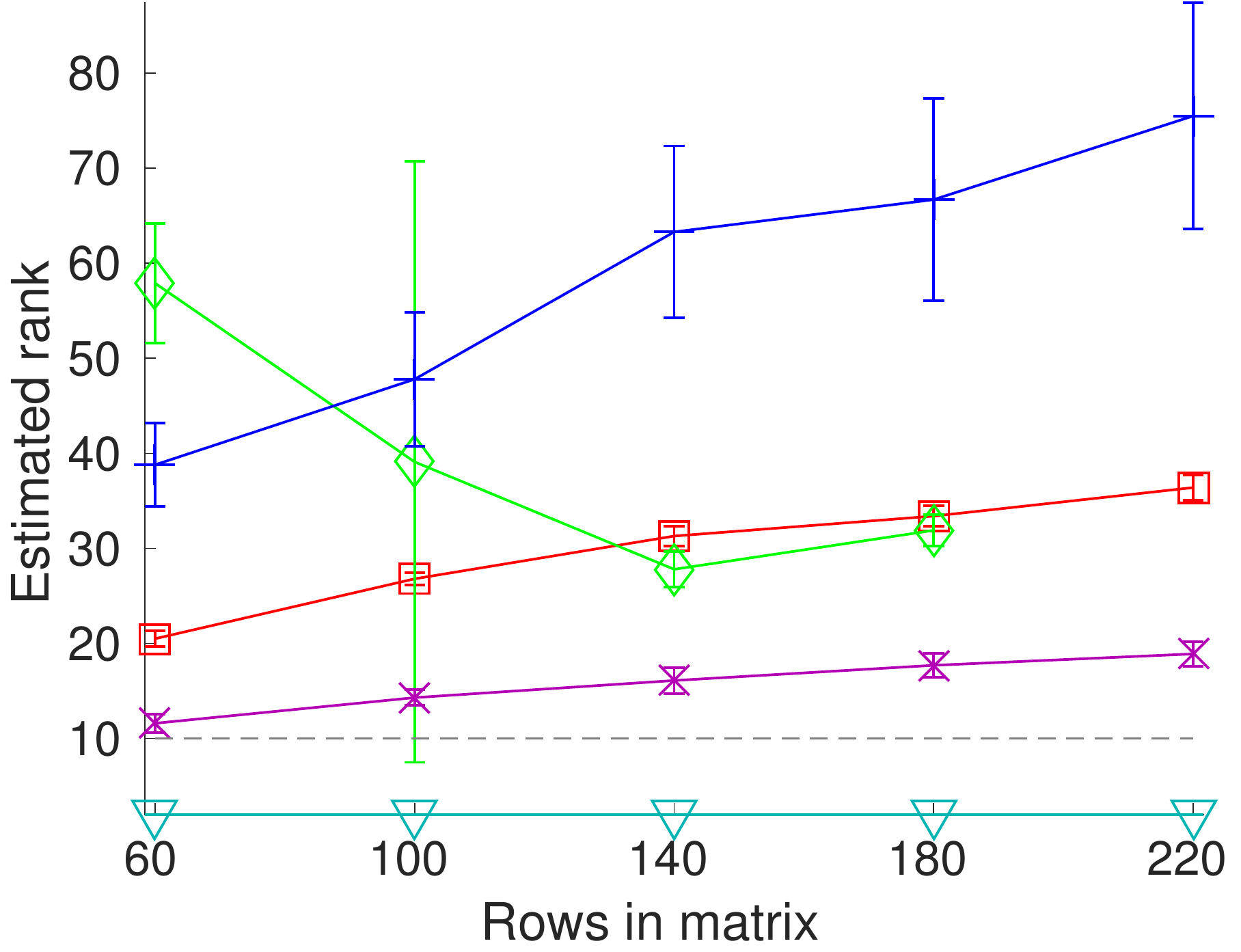}%
  }\hspace*{\smallfigsep}%
  \subfigure[Rank, vary $k$]{%
    \label{fig:synth:small:k:unif:rank}%
    \includegraphics[width=\smallfigwidth]{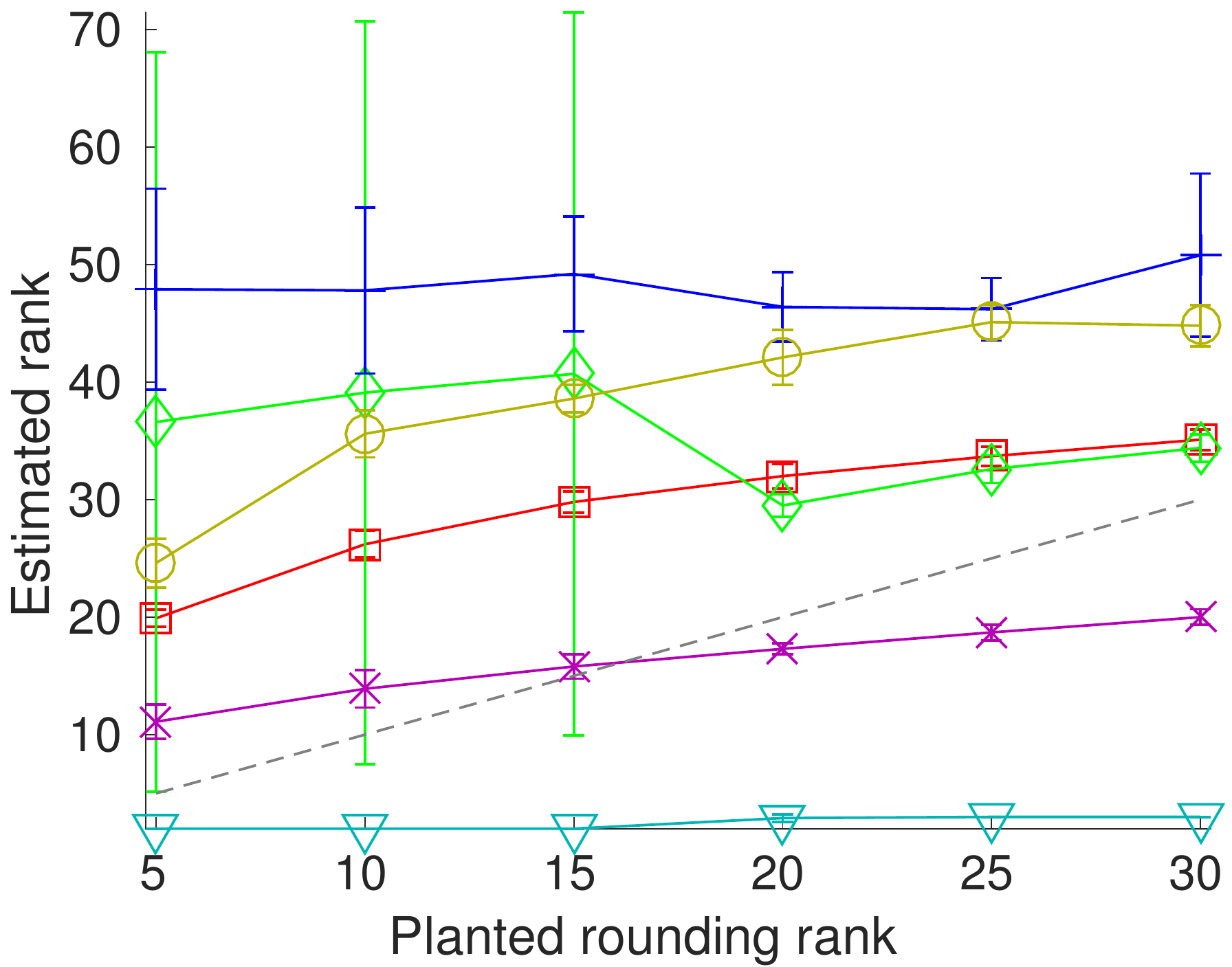}%
  }\hspace*{\smallfigsep}%
  \subfigure[Rank, vary $\mu$]{%
    \label{fig:synth:small:dens:unif:rank}%
    \includegraphics[width=\smallfigwidth]{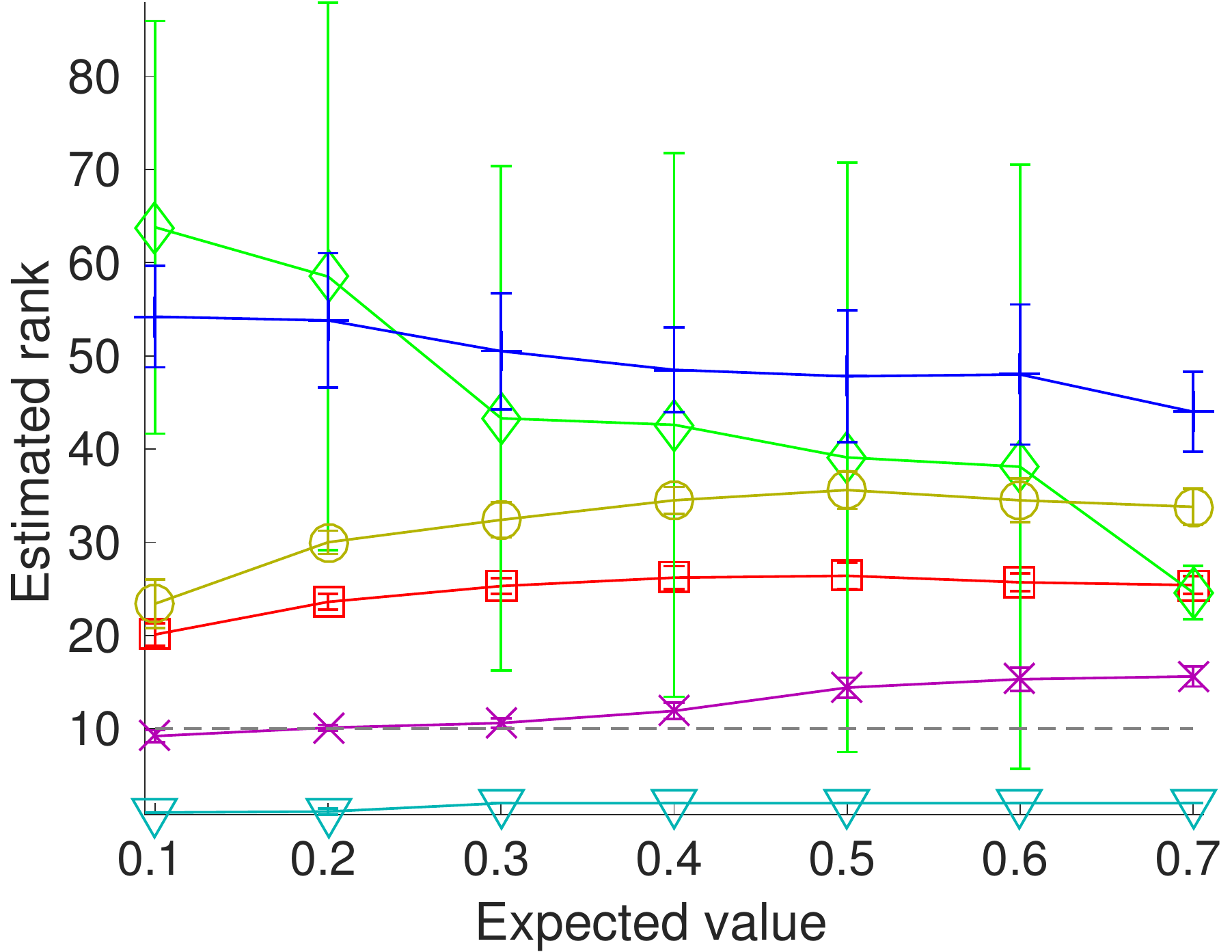}%
  }\hspace*{\smallfigsep}%
  \subfigure[Rank, vary $p$]{%
    \label{fig:synth:small:noise:unif:rank}%
    \includegraphics[width=\smallfigwidth]{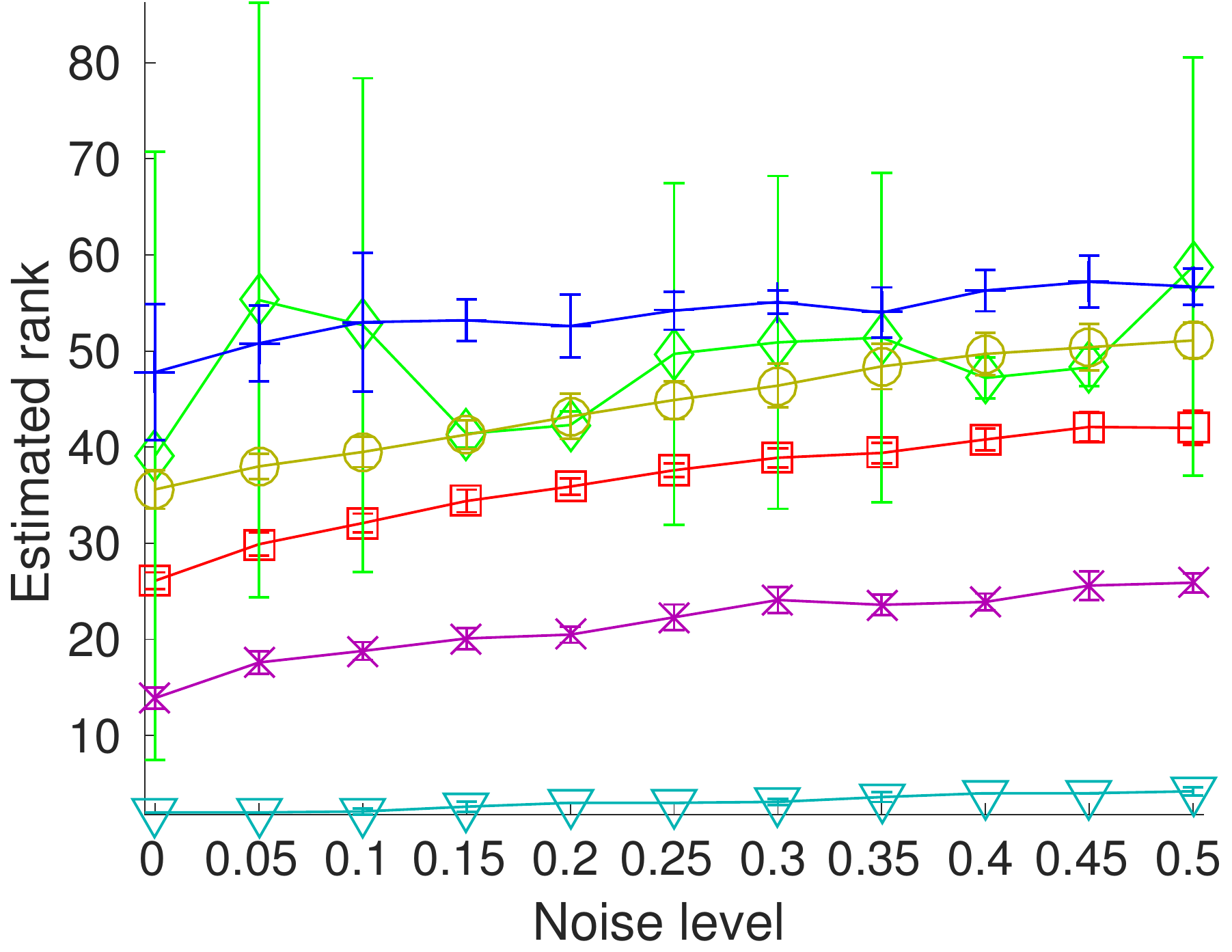}%
  }\\
  \rotatebox[origin=l]{90}{\hspace*{3em}\small Time}\hspace*{\smallfigsep}%
  \subfigure[Time, vary $m$]{%
    \label{fig:synth:small:n:unif:time}%
    \includegraphics[width=\smallfigwidth]{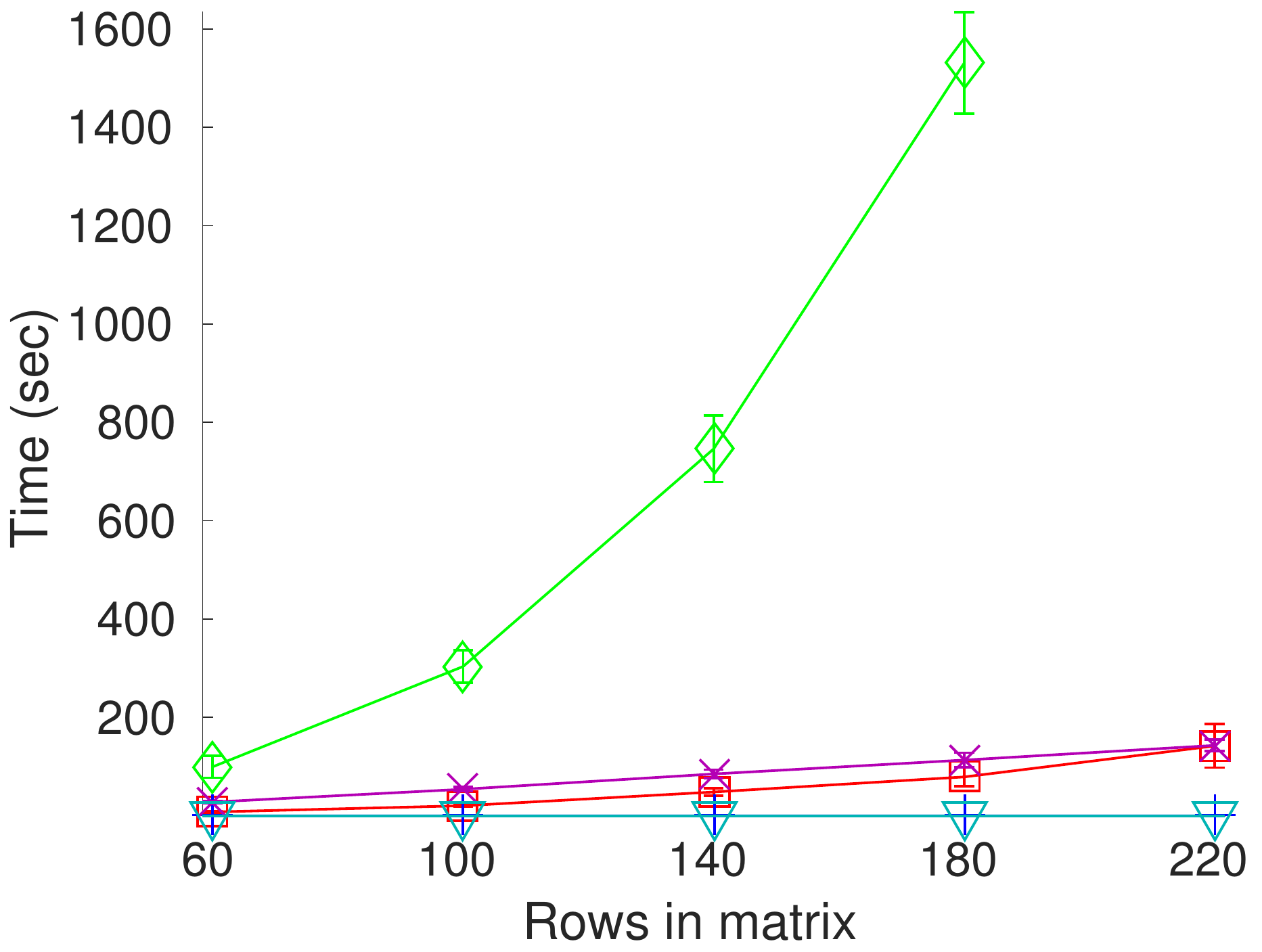}%
  }\hspace*{\smallfigsep}%
  \subfigure[Time, vary $k$]{%
    \label{fig:synth:small:k:unif:time}%
    \includegraphics[width=\smallfigwidth]{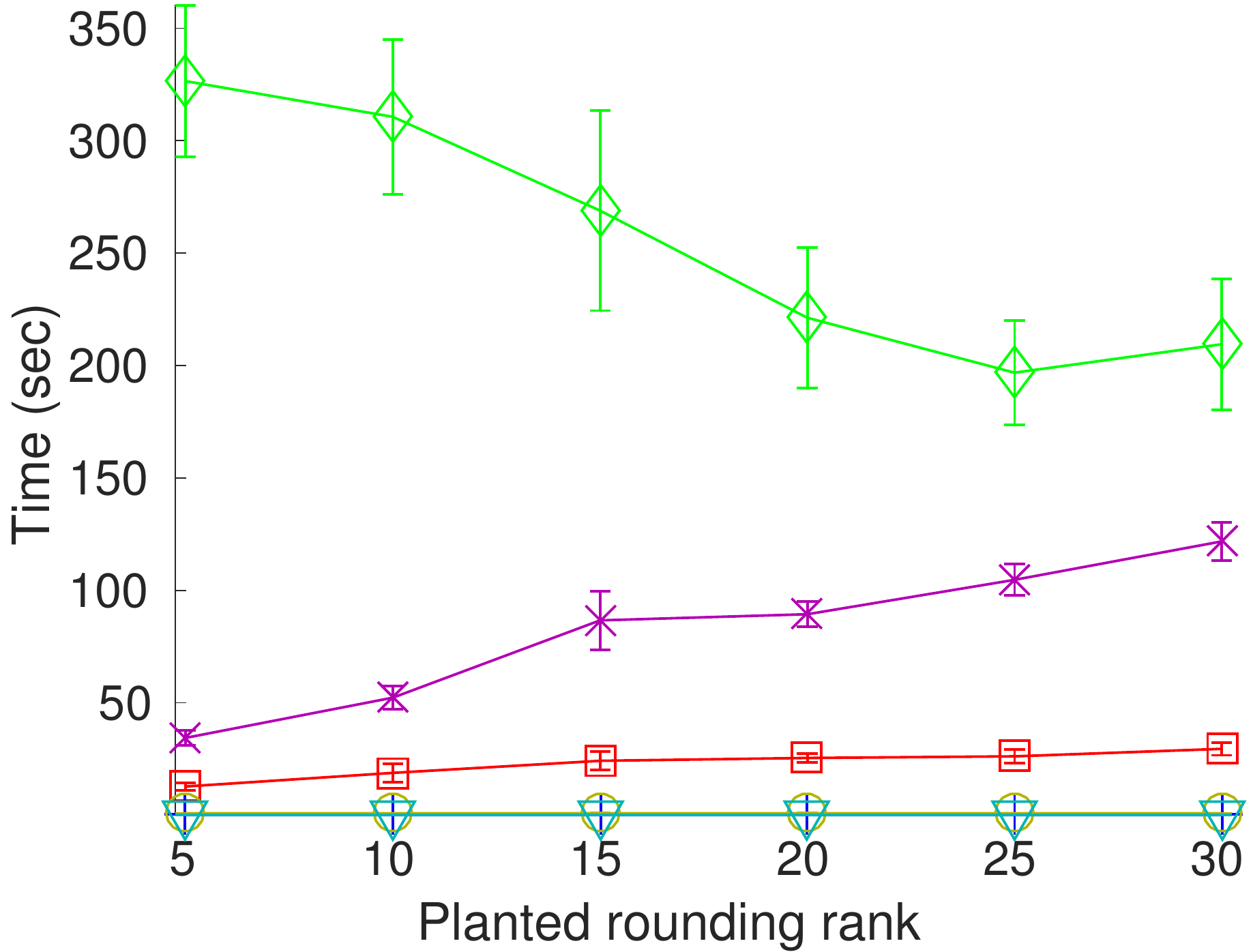}%
  }\hspace*{\smallfigsep}%
  \subfigure[Time, vary $\mu$]{%
    \label{fig:synth:small:dens:unif:time}%
    \includegraphics[width=\smallfigwidth]{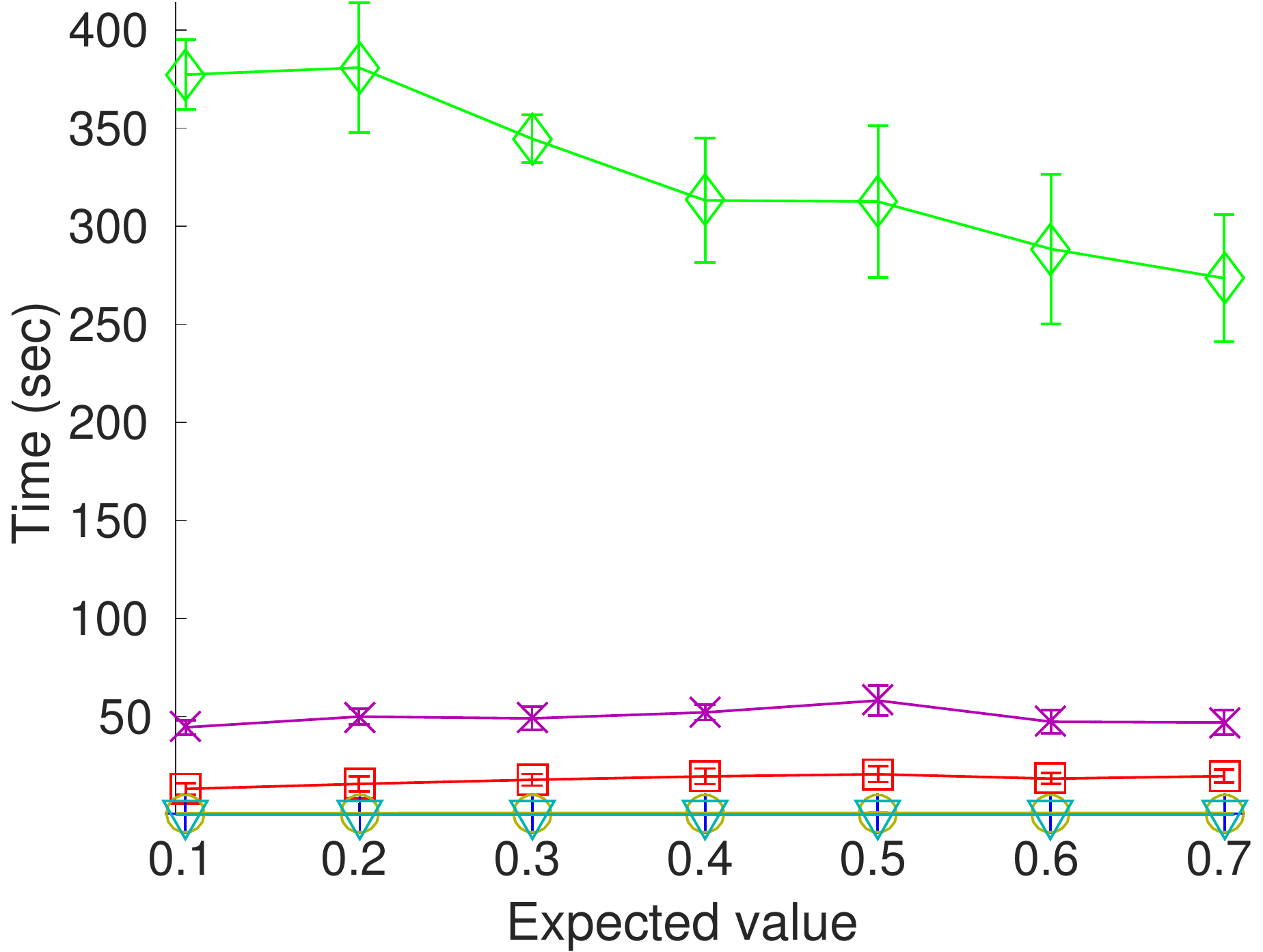}%
  }\hspace*{\smallfigsep}%
  \subfigure[Time, vary $p$]{%
    \label{fig:synth:small:noise:unif:time}%
    \includegraphics[width=\smallfigwidth]{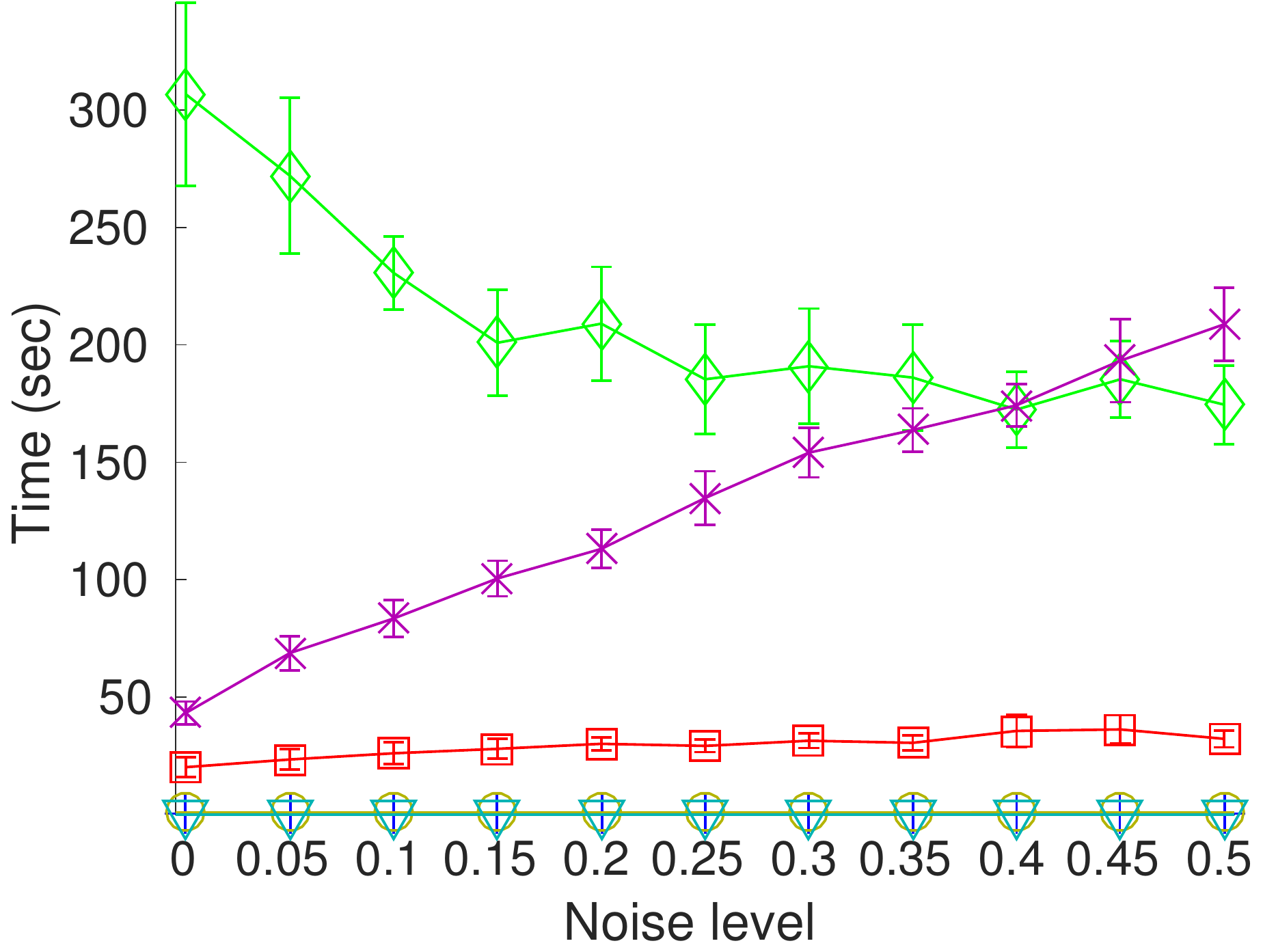}%
  }%
  \caption{Estimated rounding ranks and running times on small synthetic data varying different parameters. The top row gives estimated rank for uniformly distributed factor matrices and the bottom row shows running times. \shay can only run on square matrices and was excluded from the ``vary $m$'' experiments. All data points are averages over 10 random matrices and the width of the error bars is twice the standard deviation. }
  \label{fig:synth:small}
\end{figure*}

We used \proje, \nuclear, \SVD, and \lpca. We also used \shay in all experiments
except when we varied the number of rows (\shay only works with square
matrices). We also computed a lower bound \lb~on $\rrank_0$ using
Prop.~\ref{prop:LowerBound}. Finally, in experiments with no noise, we also plot
the planted rank (inner dimension of the factor matrices), which acts as an
\emph{upper bound} of the actual rounding rank.

As can be seen from Fig.~\ref{fig:synth:small}, the estimated lower bound is
almost always less than 3, even when the data contains significant amounts of noise. It
seems reasonable to assume that the true rounding rank of the data is therefore
closer to the upper bound of our planted rank than the estimated lower bound
given by \lb.

Of the algorithms tested here, \proje, and \shay are the only ones that aim
directly to find the rounding rank, with \shay being the only one with
approximation guarantees (albeit weak ones). Our experiments show that \shay is
not competitive to most other methods; good theoretical properties do not ensure
a good practical behavior. \proje performs much better, being typically the
second-best method.
\SVD is commonly employed in the literature, but our experiments show clearly that for computing the rounding rank, it is not recommended. 


\lpca consistently produced the smallest (i.e.~best) rank estimate but it was also the second-slowest
method. \proje, the  second-best method for
estimating the rank, was much faster. 
\SVD often produced the worst estimates, but it is also the fastest
method. 
The running times are broadly as expected: \nuclear has to solve a semidefinite programming problem, \lpca solves iteratively dense least-squares problems, \proje only needs to solve linear equations, and \SVD computes a series of orthogonal projections. 

Varying the different parameters yielded mostly expected results with the most interesting result being how little the rank and noise had effect to the results. We assume that this is (at least partially) due to the robustness of the rounding rank: increasing the noise, say, might not have increased the rounding rank of the matrix. This is clearly observed when the rank is varied (Fig.~\ref{fig:synth:small:k:unif:rank}), where \lpca actually obtains smaller rounding rank than the planted one.

\subsubsection{Minimum-error decomposition} \label{sec:exp:synth:err}

We now study the algorithms' capability to return low-error fixed-rank
decompositions. We leave out \shay and \nuclear as they only approximate
rounding rank. Instead, we add a method to compare against: \truncSVD. 
It computes the standard truncated SVD, that is, we do not apply
any rounding. \truncSVD is used for providing a baseline: in principle, the methods
that apply rounding should give better results as they utilize the added
information that the final matrix must be binary. At the same time, however, the
rounding procedure may emphasize small errors (e.g., incorrectly representing a
$1$ with $0.49$ contributes $\approx 0.26$ to the sum of squares; after
rounding, the contribution is $1$). We also tested the \asso~\cite{miettinen08discrete} algorithm for Boolean matrix factorization (BMF).
Like any BMF algorithm, \asso returns a rounding rank decomposition restricted to binary factor matrices.
The performance of \asso's approximations was so much worse than the performance
of the other methods that we decided to omit it from the results.

To compare the algorithms, we use the relative reconstruction error, that is,
the squared Frobenius norm of the distance between the data and its
representation relative to the squared norm of the data. For all method except
\truncSVD, the relative reconstruction error agrees with the absolute number of
errors divided by the number of non-zeros in the data.

\begin{figure*}[tb]
  \centering
  \includegraphics[height=\legendheight]{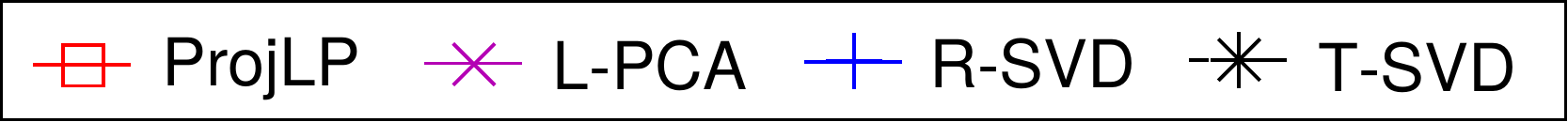} \\
  \rotatebox[origin=l]{90}{\hspace*{1em}\small Uniform dist.}\hspace*{\smallfigsep}%
  \subfigure[Error, vary $m$]{%
    \label{fig:synth:big:n:unif:err}%
    \includegraphics[width=\smallfigwidth]{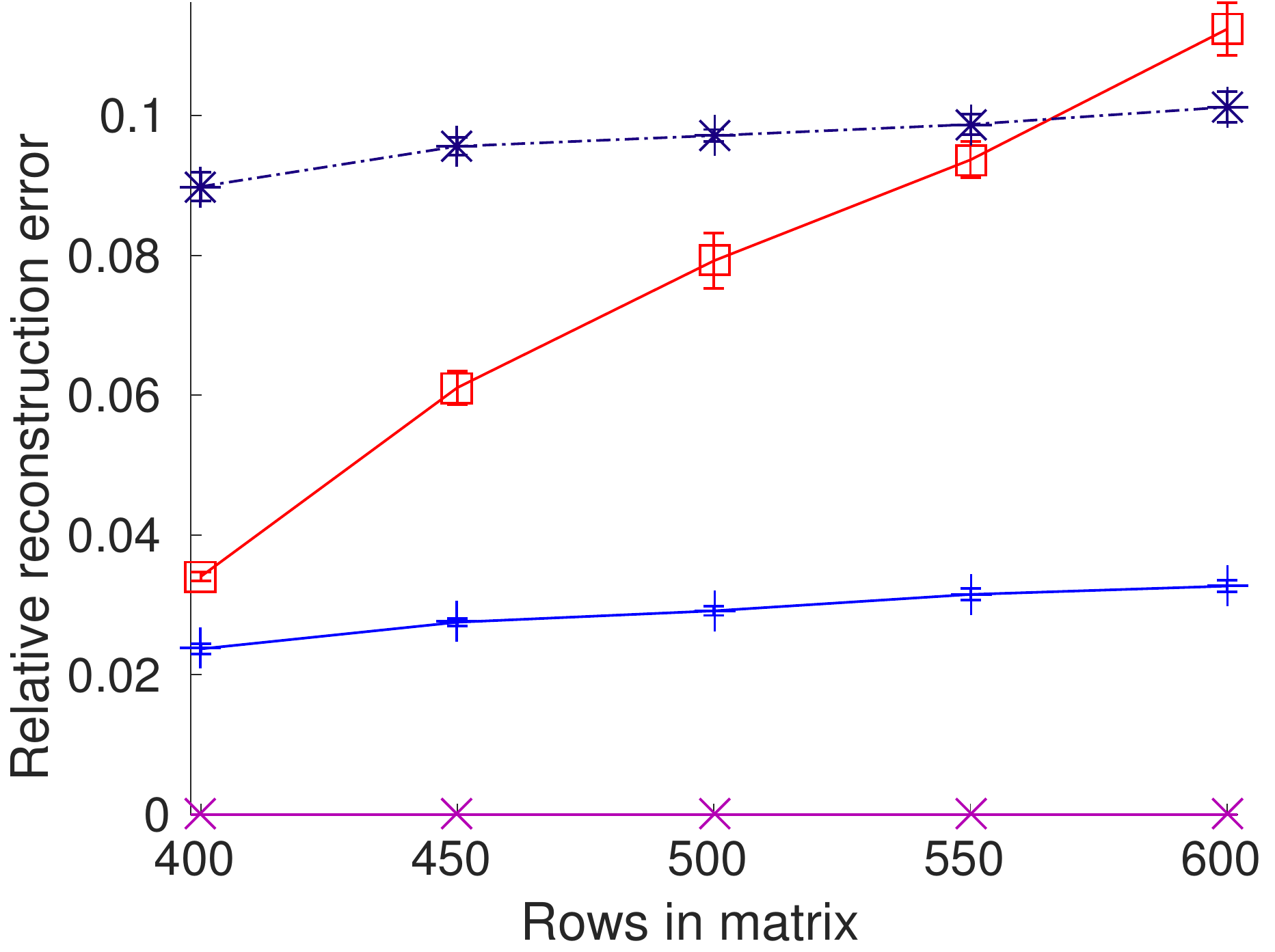}%
  }\hspace*{\smallfigsep}%
  \subfigure[Error, vary $k$]{%
    \label{fig:synth:big:k:unif:err}%
    \includegraphics[width=\smallfigwidth]{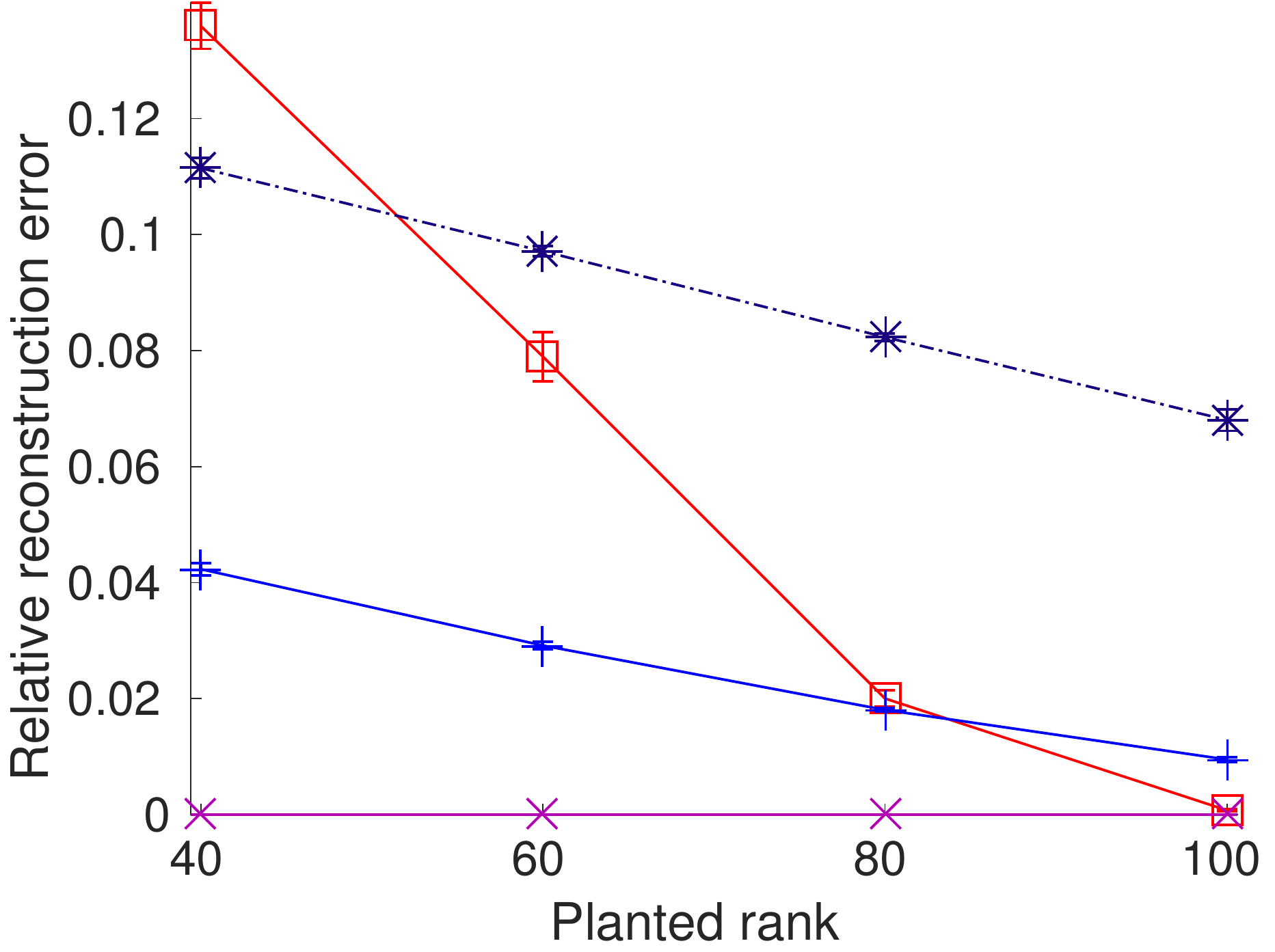}%
  }\hspace*{\smallfigsep}%
  \subfigure[Error, vary $\mu$]{%
    \label{fig:synth:big:dens:unif:err}%
    \includegraphics[width=\smallfigwidth]{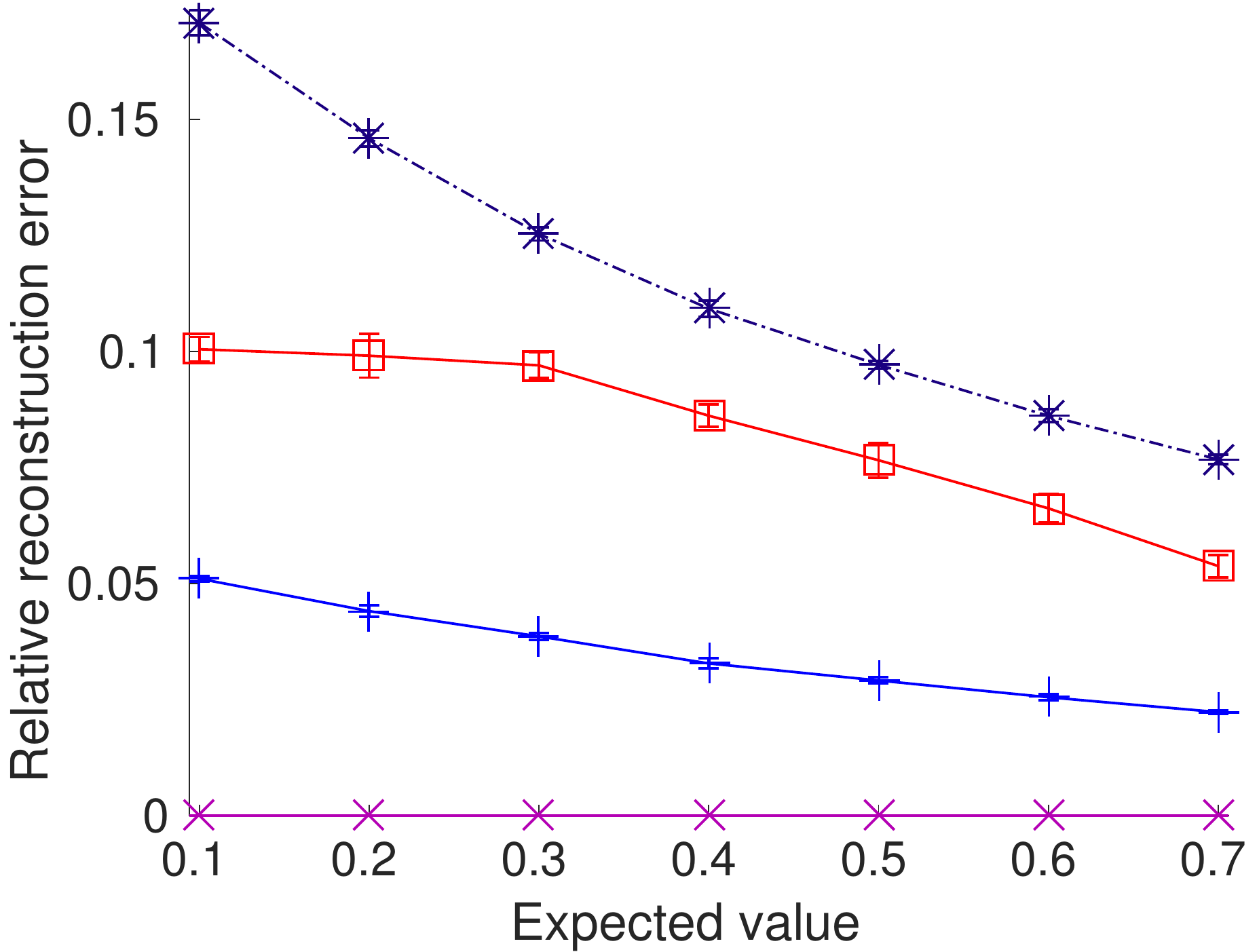}%
  }\hspace*{\smallfigsep}%
  \subfigure[Error, vary $p$]{%
    \label{fig:synth:big:noise:unif:err}%
    \includegraphics[width=\smallfigwidth]{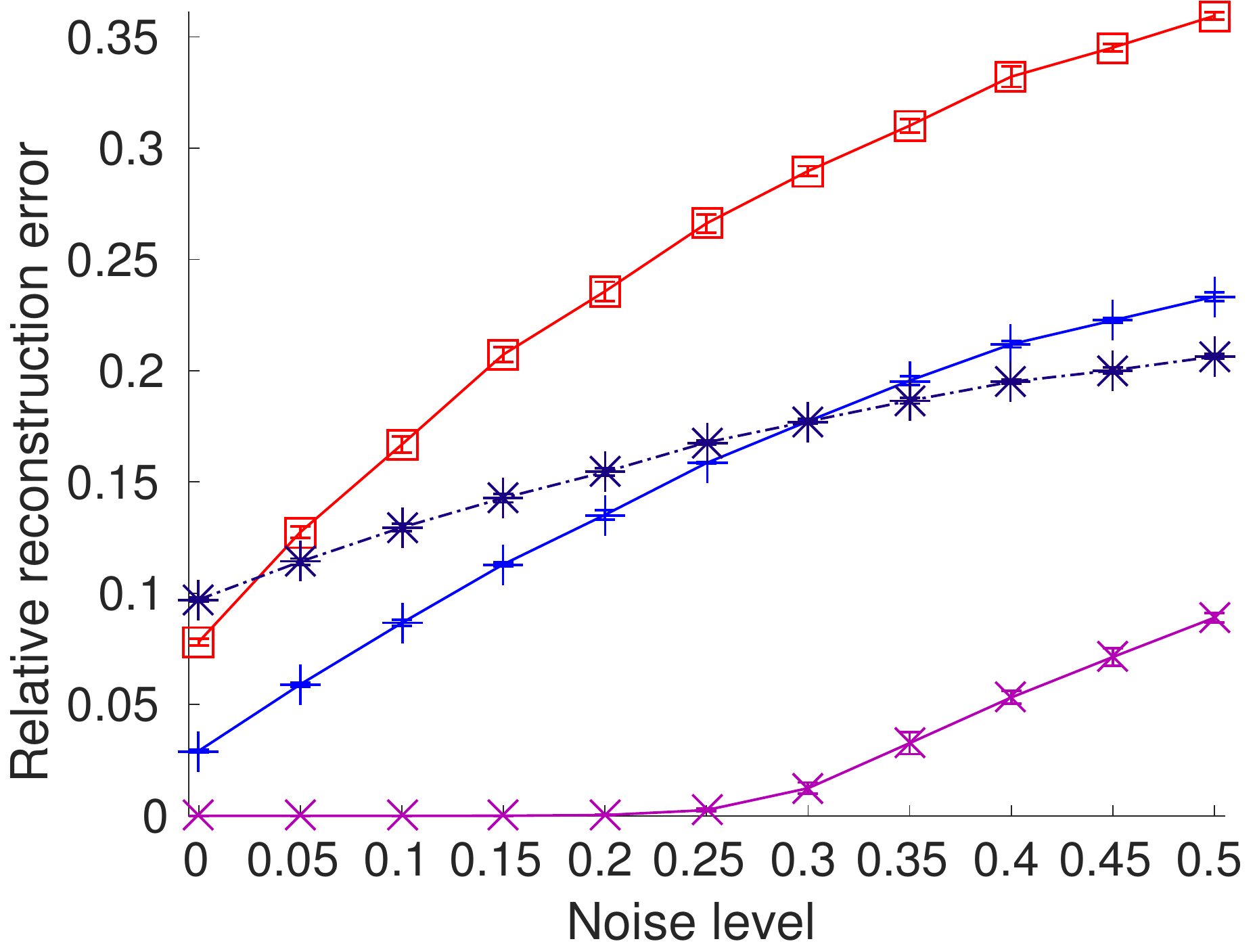}%
  }
  \caption{Relative reconstruction errors on medium-sized synthetic data with uniformly distributed factors. The results of \asso are omitted as they were significantly worse than the other results. All data points are averages over 10 random matrices and the width of the error bars is twice the standard deviation. }
  \label{fig:synth:err}
\end{figure*}

The results for these experiments are presented in Fig.~\ref{fig:synth:err}.  We
only report the reconstruction with uniformly distributed factors:
the running times were as with the above experiments, and the results with normally
distributed factors were generally similar to the reported ones. The other results are in the appendix. As in the above
experiments, \lpca is the best method, and the slowest as well, taking sometimes
an order of magnitude longer than \proje.  The best all-rounder here, though, is
the \SVD method: it provided reasonable results and was by far the fastest method.

\subsection{Results with Real-World Data} \label{sec:exp:real}

We now turn our attention to real-world datasets. For these experiments we
used only \proje, \lpca, and \SVD to estimate the rounding rank, and added
\truncSVD and \asso for the minimum-error decompositions.

\paragraph{Datasets} The basic properties of the datasets are listed in
Tab.~\ref{tab:real:rank:int}. The \abstracts data
set\footnote{\url{http://kdd.ics.uci.edu/databases/nsfabs/nsfawards.html}} is a
collection of project abstracts that were submitted to the National Science
Foundation of the USA in applications for funding. The data is
documents-by-terms matrix giving the appearance of terms in documents.  The
\dblp data\footnote{\url{http://dblp.uni-trier.de/db/}} is an
authors-by-conferences matrix containing information who published where. The
\now data set\footnote{\url{http://www.helsinki.fi/science/now/}} contains
information about the locations at which fossils of certain species were
found. It was fetched by \cite{NOWData} and preprocessed according to
\cite{NOWData2}.  The \dialect data~\cite{DialectsData1,DialectsData2} contains
information about which linguistic features appear in the dialect spoken in
various parts of Finland. The \apj dataset is a binary matrix containing access
control rules from Hewlett-Packard~\cite{RoleMining}.

\paragraph{Rounding rank} First we computed the upper bounds for the rounding
ranks with the different methods. The results and running times are shown in
Tab.~\ref{tab:real:rank:int}. As with the synthetic experiments, \lpca is
again giving the best results, followed by \proje and \SVD, the latter of which
returns often significantly worse results than the other two. In the running
times the order is reversed, \lpca taking orders of magnitude longer than
\proje, which is still slower than \SVD.

\begin{table}[tb] 
  \caption{Upper bounds for rounding rank with $\tau=0.5$ for the
    real-world data. Known Boolean ranks from~\cite{belohlavek15from-below}. \lpca did not finish on the \abstracts data in reasonable time.} 
  \label{tab:real:rank:int} \label{tab:real:rank}\label{tab:real:ranktime}
  \centering \small { \setlength{\tabcolsep}{3pt} 
    \scalebox{0.95}{
\begin{tabular}{@{}lrrrrrrr@{}} \toprule 
                      & \multicolumn{4}c{Dataset properties} & \multicolumn{3}c{Upper bounds on $\rrank$} \\
\cmidrule{2-5} \cmidrule(l){6-8}
Dataset           & $m$   & $n$  & $\rank$ & $\rankB$ & \proje        & \lpca            & \SVD         \\ \midrule
\abstracts & 12841 & 4894 & 4893    & --       & 449           & --               & 4421  \\
           &       &      &         &          & \emph{(437h)} & --               & \emph{(9h)}   \\
\apj       & 2044  & 1164 & 455     & 453      & 29            & 9                & 443          \\
           &       &      &         &          & \emph{(151s)} & \emph{(109min)}   & \emph{(35s)} \\
\dblp      & 19    & 6980 & 19      & 19       & 12            & 11               & 19           \\
           &       &      &         &          & \emph{(46s)}  & \emph{(77min)}   & \emph{(2s)}  \\
\dialect   & 1334  & 506  & 506     & --       & 91            & 78               & 445          \\
           &       &      &         &          & \emph{(527s)} & \emph{(54h)} & \emph{(17s)} \\
\now       & 124   & 139  & 123     & --       & 26            & 13               & 68           \\ 
           &       &      &         &          & \emph{(10s)}  & \emph{(271s)}    & \emph{(1s)}  \\ \bottomrule
\end{tabular}
}


  } 
\end{table}



Note that the estimated rounding
ranks in Tab.~\ref{tab:real:rank:int} are significantly less than the respective
normal or Boolean ranks. For example, for the \apj data, the normal rank is
$455$, the Boolean rank is $453$, but \lpca~shows that the rounding rank is at
most $9$. Similarly, the normal and Boolean ranks for \dblp are $19$, while the
rounding rank is no more than $11$. In most cases, the rounding rank is about an
order of magnitude smaller than the real rank. 
This shows that the expressive power of the methods significantly increases by applying the rounding.

\paragraph{Minimum-error decompositions} The relative reconstruction errors for
the real-world datasets together with running times are presented in
Tab.~\ref{tab:real:err}. Again, \lpca is often---but not always---the best
method, especially with higher ranks. Again, the running time was high
though. An exception to this is the \abstracts data, where \lpca is in fact
faster than \proje (although it is still extremely slow). Again, \proje is often
the second-best, and more consistently so with higher ranks.


\begin{table*}[t] \caption{Reconstruction errors relative to the number of
non-zeros and running times in real-world data.} \label{tab:real:err} \centering \small
{\setlength{\tabcolsep}{3pt} \scalebox{0.95}{
\begin{tabular}{@{}lrrrrrrrrrrrrrrr@{}}
\toprule 
	 & \multicolumn{3}{c}{\abstracts} & \multicolumn{3}{c}{\apj} & \multicolumn{3}{c}{\dblp} & \multicolumn{3}{c}{\dialect} & \multicolumn{3}{c}{\now} \\
	 \cmidrule{2-4}	 \cmidrule(l){5-7}	 \cmidrule(l){8-10}	 \cmidrule(l){11-13}	 \cmidrule(l){14-16}
\multicolumn{1}{r}{$k=$} & $10$ & $50$ & $100$ & $5$ & $10$ & $15$ & $5$ & $10$ & $15$ & $10$ & $30$ & $50$ & $5$ & $10$ & $20$ \\
\midrule 
\multicolumn{15}{@{}l}{\it Relative reconstruction error} \\
\proje & $1.152$ & $1.091$ & $0.842$ & $\mathbf{0.626}$ & $0.302$ & $0.099$ & $0.408$ & $0.060$ & $0.003$ & $0.378$ & $0.130$ & $0.036$ & $0.701$ & $0.360$ & $0.037$  \\
\lpca & $0.993$ & $0.863$ & $\mathbf{0.459}$ & $0.631$ & $\mathbf{0.194}$ & $\mathbf{0.034}$ & $\mathbf{0.150}$ & $\mathbf{0.003}$ & $\mathbf{0.000}$ & $\mathbf{0.200}$ & $\mathbf{0.031}$ & $\mathbf{0.002}$ & $0.552$ & $\mathbf{0.089}$ & $\mathbf{0.000}$  \\
\SVD & $0.995$ & $0.937$ & $0.843$ & $0.641$ & $0.611$ & $0.573$ & $0.488$ & $0.225$ & $0.058$ & $0.258$ & $0.137$ & $0.094$ & $0.697$ & $0.516$ & $0.260$  \\
\truncSVD & $\mathbf{0.917}$ & $\mathbf{0.838}$ & $0.766$ & $0.640$ & $0.596$ & $0.559$ & $0.382$ & $0.198$ & $0.064$ & $0.212$ & $0.120$ & $0.089$ & $\mathbf{0.515}$ & $0.410$ & $0.283$  \\
\asso & $0.988$ & $0.971$ & $0.960$ & $0.663$ & $0.637$ & $0.603$ & $0.531$ & $0.347$ & $0.187$ & $0.442$ & $0.358$ & $0.333$ & $0.793$ & $0.706$ & $0.602$  \\
\midrule
\multicolumn{15}{@{}l}{\it Running time (seconds)} \\
\proje & $9627$ & $27201$ & $72956$ & $54$ & $64$ & $63$ & $12$ & $13$ & $12$ & $26$ & $83$ & $169$ & $2$ & $2$ & $2$  \\
\lpca & $675$ & $17849$ & $39954$ & $39$ & $235$ & $297$ & $98$ & $114$ & $109$ & $96$ & $166$ & $218$ & $11$ & $12$ & $17$  \\
\SVD & $3$ & $6$ & $13$ & $1$ & $1$ & $1$ & $1$ & $1$ & $1$ & $1$ & $1$ & $1$ & $1$ & $1$ & $1$  \\
\truncSVD & $2$ & $6$ & $13$ & $1$ & $1$ & $1$ & $1$ & $1$ & $1$ & $1$ & $1$ & $1$ & $1$ & $1$ & $1$  \\
\asso & $2366$ & $11701$ & $23023$ & $8$ & $17$ & $23$ & $29$ & $52$ & $75$ & $49$ & $145$ & $238$ & $1$ & $1$ & $1$  \\
\bottomrule 
\end{tabular} 
}
 } \end{table*}




\subsection{Nestedness} \label{sec:exp:nestedness}

Here we studied the possibility to use the non-negative
rounding rank-1 decomposition to solve the \BMNA problem. For these purposes, we
generated nested matrices, perturbed them with noise, and tried to find the
closest nested matrix using \nmt, \nexhaust, their combination \nmtexhaust, and
\SVD. All nested matrices were \by{200}{300} and we varied the density of the
data (from $0.1$ to $0.7$ with steps of $0.1$) and the noise level (from $0.05$
to $0.5$ with steps of $0.05$). A default density of $\mu=0.5$ was used when the
noise was varied, and noise level $p=0.15$ was used when the density was
varied. 

Our results are shown in Fig.~\ref{fig:nested}. \nmt and \nexhaust produced
similar results, with \nmt being slightly better. The combined \nmtexhaust is no
better than \nmt, and \SVD is significantly worse. In the running times, though,
we see that \nmt takes much more time than the other approaches.

\begin{figure*}[tbp]
  \centering
  \includegraphics[height=\legendheight]{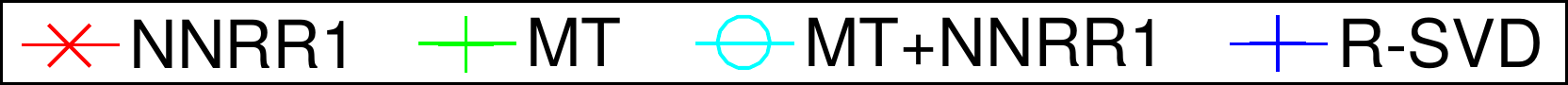}
  \subfigure[Error, vary density]{%
    \label{fig:nested:dens:err}%
    \includegraphics[width=\smallfigwidth]{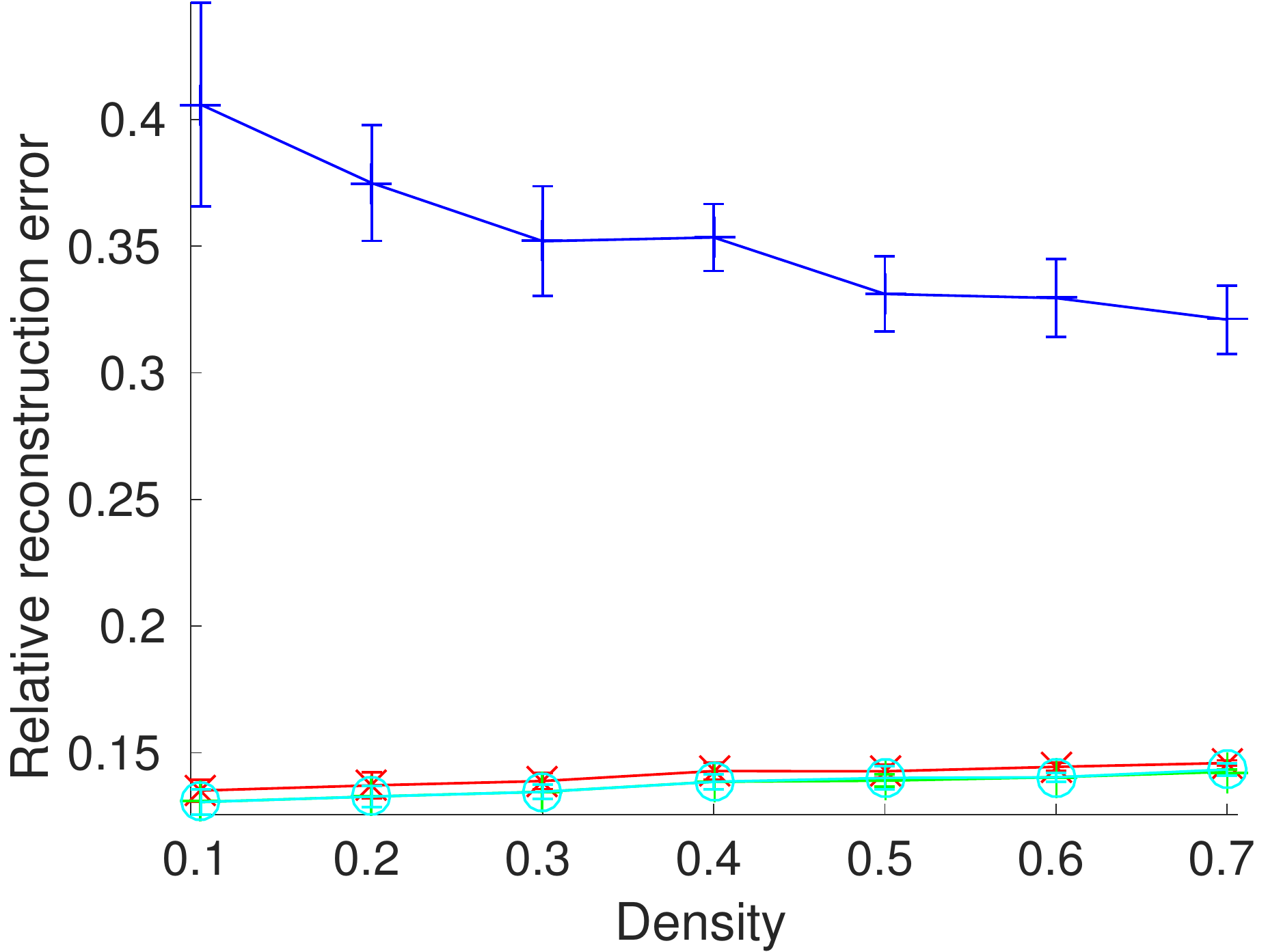}%
  }\hspace*{\smallfigsep}%
  \subfigure[Error, vary noise level]{%
    \label{fig:nested:noise:err}%
    \includegraphics[width=\smallfigwidth]{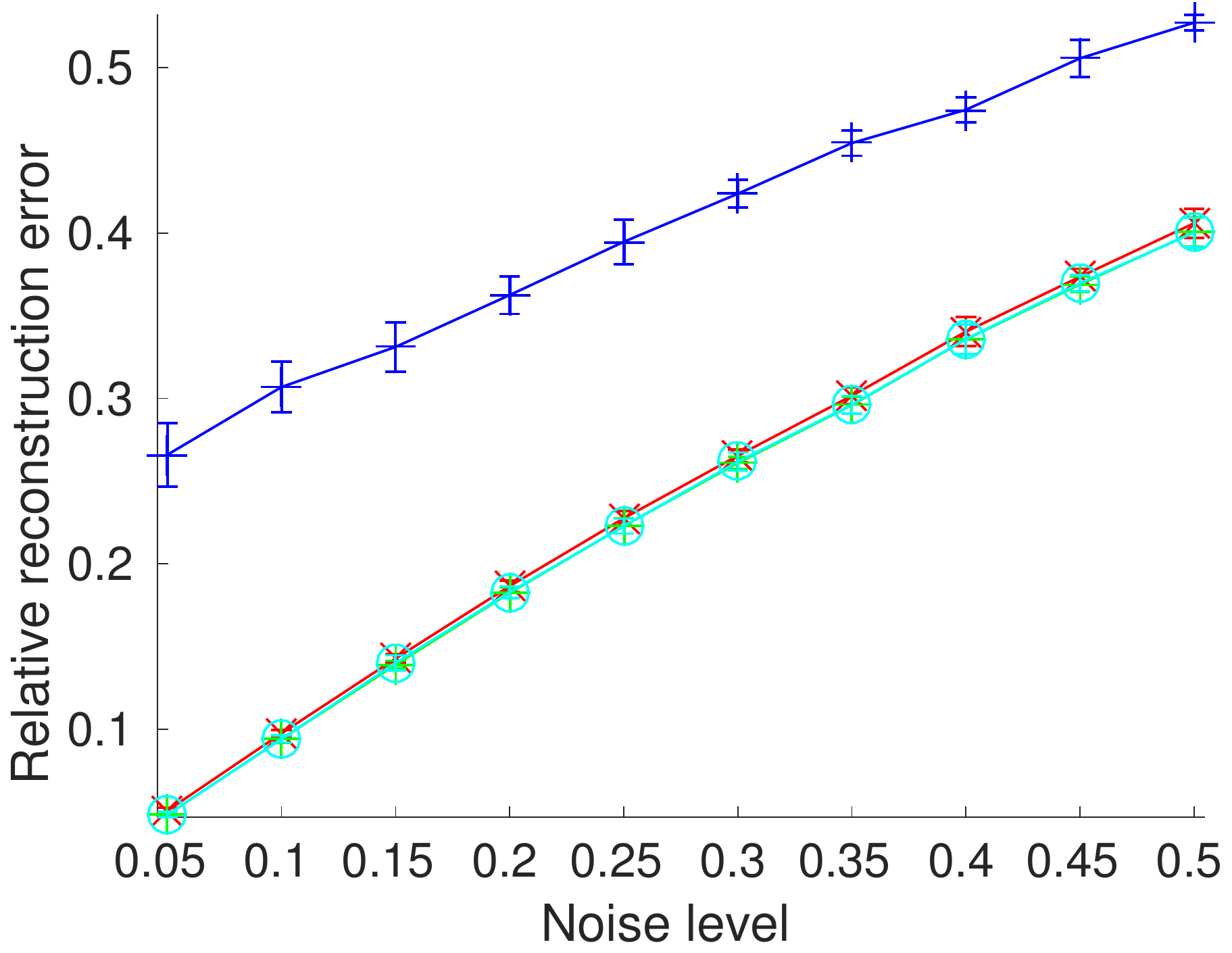}%
  }\hspace*{\smallfigsep}%
  \subfigure[Time, vary density]{%
    \label{fig:nested:dens:time}%
    \includegraphics[width=\smallfigwidth]{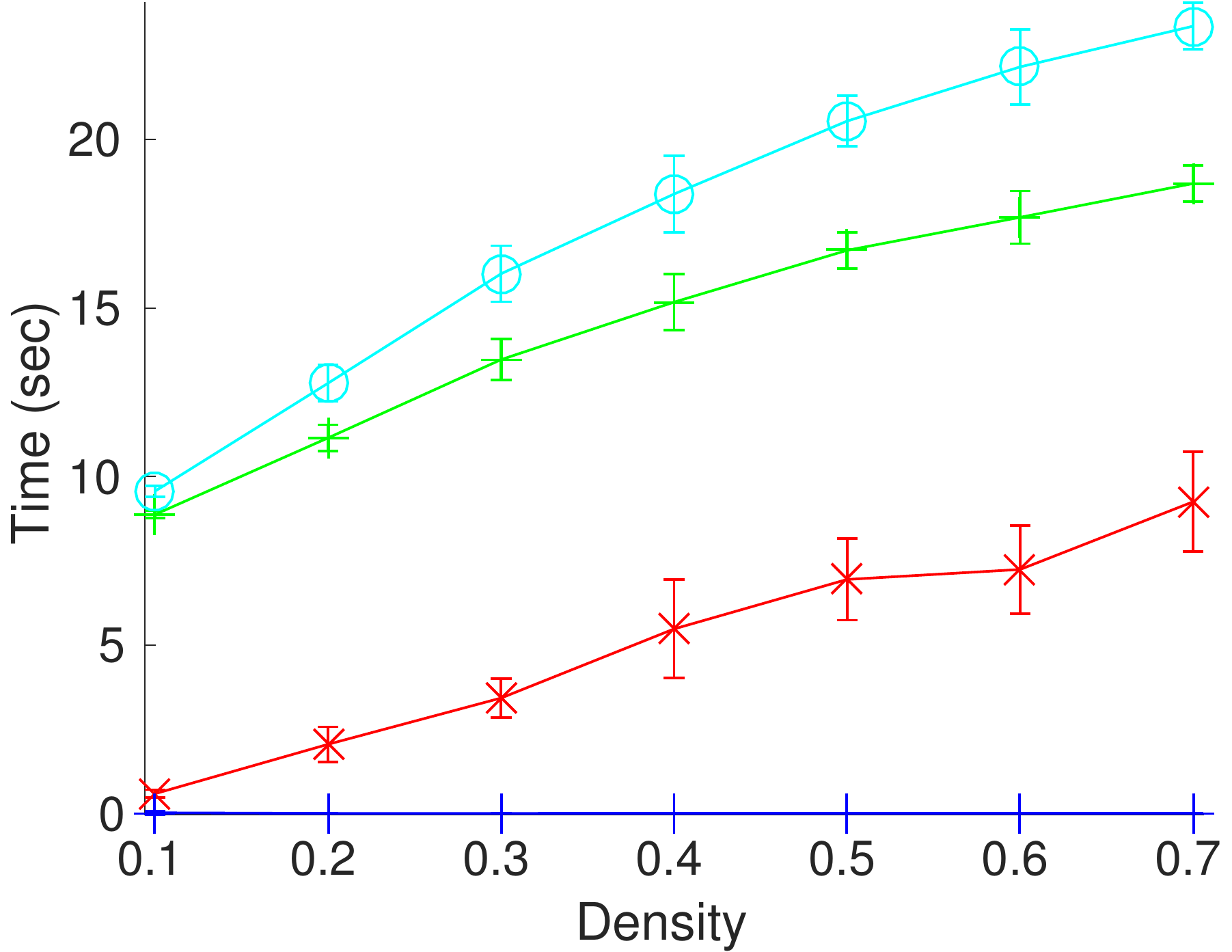}%
  }\hspace*{\smallfigsep}%
  \subfigure[Time, vary noise]{%
    \label{fig:nested:noise:time}%
    \includegraphics[width=\smallfigwidth]{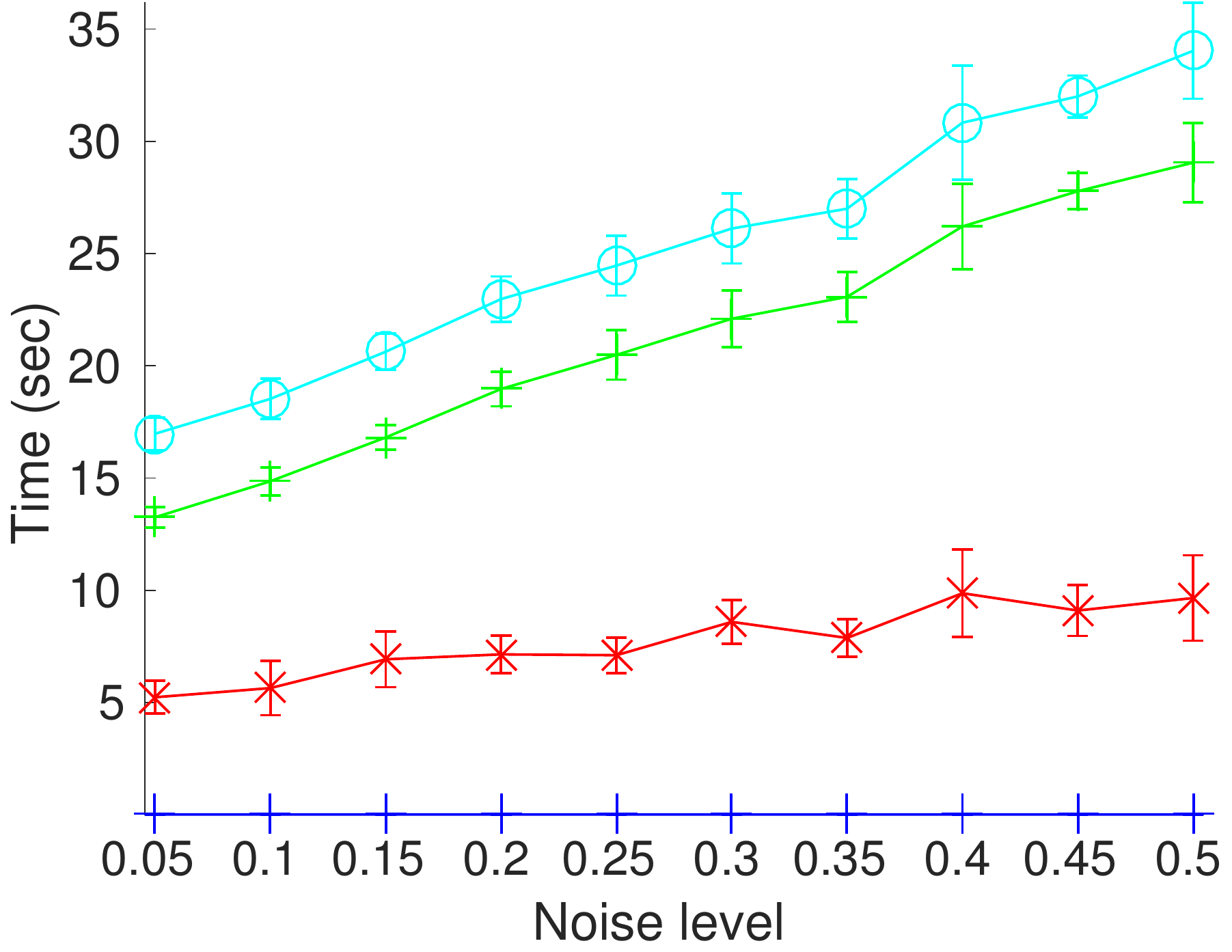}%
  }%
  \caption{Relative reconstruction errors and running times on nested data. The running times for \nmtexhaust\ exclude the running time of the \nmt algorithm. All data points are averages over 10 random matrices and the width of the error bars is twice the standard deviation. }
  \label{fig:nested}
\end{figure*}



\section{Conclusions}
\label{sec:conclusions}

Rounding rank is a natural way to characterize the commonly-applied rounding
procedure.
Rounding rank has some significant differences to real rank: for example,
restricting the factor matrices to be non-negative has almost no consequences to
rounding rank. Rounding rank provides a robust definition of an intrinsic
dimension of a data, and as we saw in the experiments, real-world data sets can
have surprisingly small rounding ranks.  At the same time, rounding rank-related
problems appear naturally in various different fields of data analysis; for
example, the connection to nested matrices is somewhat surprising, and allowed
us to develop new algorithms for the \BMNA problem.

Unfortunately, computing the rounding rank, and the related minimum-error decomposition, is computationally very hard. We have studied a
number of algorithms---based on common algorithm design paradigms in data
mining---in order to understand how well they behave in our problems. None of
the tested algorithms emerges as a clear winner, though.

The most obvious future research direction is to find better
algorithms that aim directly for good rounding rank decompositions and scale to
larger data sizes.  Another question is if the factors obtained by a rounding
rank decomposition reveal \emph{interpretable} insights into the data.
The
connections of rounding rank to other problems also propose natural follow-up
questions.
For example, communities in graphs are often nested (sub-)matrices~\cite{metzler16hyperbolae}. Could rounding rank decompositions be used to find non-clique-like communities?



\bibliographystyle{abbrv}
\bibliography{bibliography}

\appendix
\renewcommand{\theequation}{A.\arabic{equation}}
\setcounter{equation}{0}
\renewcommand{\thetheorem}{A.\arabic{theorem}}
\setcounter{theorem}{0}
\newtheorem*{lemma*}{Lemma}
\newtheorem*{proposition*}{Proposition}


We will first provide proofs of the lemmata and propositions omitted in the main
text, and then provide additional results of our experimental evaluation.

\subsection{Identity Matrices Have Rounding Rank 2}
\label{Sec:IdentityMatrices}

For $n \geq 3$, let $\mI_n \in \mathbb{R}^{n \times n}$ be the identity matrix.
From Proposition~\ref{prop:rrank1_eq_block_nested} we get that $\rrank(\mI_n) > 1$, since
the identity matrix is not nested.
We look at the matrix
\begin{align*}
	\mA &=
\begin{pmatrix}
   1      	& -\frac{1}{2}   	 	\\
   2^{-1}   & -\frac{1}{2} 4^{-1} 	\\
   \vdots 	& \vdots 			 	\\
   2^{-n+1} & -\frac{1}{2} 4^{-n+1}
\end{pmatrix}
\begin{pmatrix}
	1 & 2 & \hdots & 2^{n-1} \\
	1 & 4 & \hdots & 4^{n-1}
\end{pmatrix} \\
&=
\begin{pmatrix}
	1 - \frac{1}{2} & 2-\frac{4}{2} & 4-\frac{16}{2} & \cdots \\
	\frac{1}{2} - \frac{1}{8} & 1-\frac{4}{8} & 2-\frac{16}{8} & \cdots \\
	\frac{1}{4} - \frac{1}{32} & \frac{1}{2}-\frac{4}{32} & 1-\frac{16}{32} & \cdots \\
	\vdots & \vdots & \vdots & \ddots
\end{pmatrix},
\end{align*}
and observe that $\mA_{ij} = \frac{1}{2}$, if $i = j$, and $\mA_{ij} < \frac{1}{2}$, otherwise.
Thus, we get $\round(\mA) = \mI_n$ and therefore $\rrank(\mI_n) = 2$.

\subsection{Proof of Proposition~\ref{prop:LowerBound}}
\cite{BetterLinearLowerBound} proved a lower bound for sign rank.

\begin{lemma}[{\hspace*{-.5em}\cite[Th.~5]{BetterLinearLowerBound}}]
  Let $\mB \in \{-1, +1\}^{m \times n}$.
  Let $r = \rank(\mB)$ and let $\sigma_1(\mB) \geq \dots \geq \sigma_r(\mB) > 0$ be
  the singular values of $\mB$. Denote the sign rank of $\mB$ by $d$. Then
  \begin{align*}
    d \sum_{i = 1}^d \sigma_i^2(\mB) \geq mn.
  \end{align*}
\end{lemma}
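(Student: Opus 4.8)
The plan is to follow Forster's method for lower-bounding sign rank and to sharpen the final estimate by pairing singular values. First I would set $d = \srank(\mB)$ and fix a real matrix $\mA$ with $\rank(\mA) = d$ and $\sign(\mA) = \mB$ entrywise; such an $\mA$ exists by the definition of sign rank, and it has no zero entries. Writing $\mA = \mL\mR^T$ with $\mL \in \R^{m\times d}$ and $\mR \in \R^{n\times d}$, I would regard the rows $\vx_1,\dots,\vx_m$ of $\mL$ and the rows $\vc_1,\dots,\vc_n$ of $\mR$ as vectors in $\R^d$ satisfying $\sign(\langle\vx_i,\vc_j\rangle) = \mB_{ij}$. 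Since only the signs of the inner products matter, I may scale each $\vx_i$ and each $\vc_j$ by a positive constant, so I can assume all of them are unit vectors. The key step is to invoke Forster's radial isotropic position lemma: there is an invertible $T \in \R^{d\times d}$ such that, after replacing $\vx_i$ by the unit vector $T\vx_i/\|T\vx_i\|$ and $\vc_j$ by the unit vector obtained from $T^{-T}\vc_j$, one has $\sum_{i=1}^m \vx_i\vx_i^T = \tfrac md\mI_d$. Applying $T$ to the $\vx_i$ and $T^{-T}$ to the $\vc_j$ leaves every inner product unchanged (since $T^T T^{-T} = \mI_d$), and the subsequent positive rescalings preserve all signs, so the resulting matrix $\mA$ with entries $\langle\vx_i,\vc_j\rangle$ still satisfies $\rank(\mA) \le d$ and $\sign(\mA) = \mB$.

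With the vectors in this normalized, isotropic configuration I would run two estimates on the quantity $S = \sum_{i,j}\mB_{ij}\langle\vx_i,\vc_j\rangle = \sum_{i,j}\lvert\langle\vx_i,\vc_j\rangle\rvert = \langle\mB,\mA\rangle_F$. For the lower bound, since $\vx_i$ and $\vc_j$ are unit vectors we have $\lvert\langle\vx_i,\vc_j\rangle\rvert\le 1$, hence $\lvert\langle\vx_i,\vc_j\rangle\rvert\ge\langle\vx_i,\vc_j\rangle^2$, and isotropy gives $\sum_i\langle\vx_i,\vc_j\rangle^2 = \vc_j^T\bigl(\sum_i\vx_i\vx_i^T\bigr)\vc_j = \tfrac md$ for each $j$; summing over $j$ yields $S \ge \tfrac{mn}d$. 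The same computation shows $\|\mA\|_F^2 = \sum_{i,j}\langle\vx_i,\vc_j\rangle^2 = \tfrac{mn}d$ exactly. For the upper bound I would apply von Neumann's trace inequality to obtain $S = \langle\mB,\mA\rangle_F \le \sum_{i\ge1}\sigma_i(\mB)\sigma_i(\mA)$; because $\rank(\mA) = d$, only the first $d$ singular values of $\mA$ are nonzero, so the sum effectively runs to $d$, and Cauchy--Schwarz then gives $S \le \bigl(\sum_{i=1}^d\sigma_i^2(\mB)\bigr)^{1/2}\bigl(\sum_{i=1}^d\sigma_i^2(\mA)\bigr)^{1/2} = \bigl(\sum_{i=1}^d\sigma_i^2(\mB)\bigr)^{1/2}\|\mA\|_F$.

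Combining the two bounds gives $\tfrac{mn}d \le S \le \bigl(\sum_{i=1}^d\sigma_i^2(\mB)\bigr)^{1/2}(\tfrac{mn}d)^{1/2}$; dividing by $(mn/d)^{1/2}$ and squaring yields $\tfrac{mn}d \le \sum_{i=1}^d\sigma_i^2(\mB)$, i.e.\ $d\sum_{i=1}^d\sigma_i^2(\mB)\ge mn$, as claimed. The only nonelementary ingredient, and the step I expect to be the main obstacle, is Forster's radial isotropic position lemma: showing that the transformation $T$ exists requires a compactness/continuity argument (minimizing a suitable potential, or a fixed-point argument) together with some care to handle degenerate point configurations, typically via a perturbation that is removed in the limit. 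Everything after the reduction to isotropic position is elementary linear algebra, namely the inequality $\lvert x\rvert\ge x^2$ on $[-1,1]$, von Neumann's trace inequality, and Cauchy--Schwarz.
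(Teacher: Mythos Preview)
The paper does not actually prove this lemma: it is quoted verbatim as \cite[Th.~5]{BetterLinearLowerBound} and then used as a black box to derive Proposition~\ref{prop:LowerBound}. So there is no ``paper's own proof'' to compare against beyond the citation.

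Your argument is correct and is precisely Forster's method with the standard sharpening. The core steps---place the left vectors in radial isotropic position so that $\sum_i \vx_i\vx_i^T = (m/d)\mI_d$, observe that this forces $\|\mA\|_F^2 = mn/d$ and $S \ge mn/d$ via $|t|\ge t^2$ on $[-1,1]$, and then bound $S = \langle \mB,\mA\rangle_F$ from above by von Neumann's trace inequality plus Cauchy--Schwarz on the first $d$ singular values---are exactly how the cited result is obtained. The improvement over Forster's original bound $d \ge \sqrt{mn}/\sigma_1(\mB)$ comes from your use of von Neumann's inequality (pairing singular values) in place of the cruder estimate $\langle \mB,\mA\rangle_F \le \sigma_1(\mB)\|\mA\|_*$; this is the refinement that gives $d\sum_{i=1}^d \sigma_i^2(\mB) \ge mn$ rather than $d^2\sigma_1^2(\mB) \ge mn$.

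Your identification of the one nontrivial ingredient is also accurate: the existence of the radial isotropic transformation $T$ (Forster's lemma, equivalently a special case of Barthe's theorem) requires that not too many of the $\vx_i$ lie in any proper subspace, and the usual workaround is a small generic perturbation of $\mA$ followed by a limiting argument. Everything else is elementary. There is no gap.
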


We use the previous lemma to prove a lower bound on rounding rank.

\begin{varthm}[Proposition~\ref{prop:LowerBound} (again)]
  Let $r = \rank(\mB^\pm)$ and let $\sigma_1(\mB^\pm) \geq \dots \geq \sigma_r(\mB^\pm) > 0$ be
  the singular values of $\mB^\pm$. Then
  \begin{align*}
	  (\rrank_0(\mB) + 1) \sum_{i = 1}^{\rrank_0(\mB)} \sigma_i^2(\mB^\pm) \geq mn.
  \end{align*}
\end{varthm}
\begin{proof}
	As argued in the main text, $\rrank_0(\mB) \leq \srank(\mB)$ for all $\mB$.
	This implies
	\begin{align*}
		\sum_{i=1}^{\srank(\mB)} \sigma_i^2(\mB^\pm) \geq \sum_{i=1}^{\rrank_0(\mB)} \sigma_i^2(\mB^\pm).
	\end{align*}
	Using the previous lemma for sign rank and $\rrank_0(\mB) + 1 \geq \srank(\mB)$, we get
	\begin{align*}
    	\rrank_0(\mB) + 1 &\geq \srank(\mB) \\
			&\geq \frac{mn}{\sum_{i=1}^{\srank(\mB)} \sigma_i(\mB^\pm)}\\
			&\geq \frac{mn}{\sum_{i=1}^{\rrank_0(\mB)} \sigma_i(\mB^\pm)}.
	\end{align*}
	After multiplying with the denominator of the last equation, we obtain the desired result.
\end{proof}

\subsection{Proof of Lemma~\ref{lem:UsefulHyperplaneSeparation}}
We revisit the Hyperplane Separation Theorem.

\begin{lemma*}[Hyperplane Separation Theorem {\cite[page 46]{ConvexOptimization}}]
Let $A$ and $B$ be two disjoint nonempty closed convex sets in $\mathbb{R}^d$, one of which is compact.
Then there exists a nonzero vector $\vv \in \mathbb{R}^d$ and real numbers $c_1 < c_2$,
such that $\langle \vx, \vv\rangle > c_2$ and $\langle \vy, \vv \rangle < c_1$
for all $\vx \in A$ and $\vy \in B$.
\end{lemma*}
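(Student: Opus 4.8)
The plan is to prove strict separation via the classical nearest-point argument applied to the Minkowski difference, reducing a two-set problem to a point-versus-set problem. First I would form the set $C = A - B = \{\vx - \vy : \vx \in A,\ \vy \in B\}$. Since $A$ and $B$ are convex, $C$ is convex, and since $A \cap B = \emptyset$ we have $0 \notin C$. The entire problem then reduces to strictly separating the single point $0$ from the convex set $C$: the normal direction $\vv$ that works for $A$ and $B$ will be inherited directly from the one separating $0$ from $C$, and the uniform gap will supply the two distinct thresholds $c_1 < c_2$.

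The main obstacle — and the only place where the compactness hypothesis is genuinely used — is showing that $C$ is closed. In general the Minkowski difference of two merely closed convex sets need not be closed (for instance a branch of a hyperbola and its asymptote are disjoint closed convex sets that admit only weak, not strict, separation), so compactness is essential here. Assuming without loss of generality that $A$ is compact and $B$ is closed, I would take any sequence with $\vx_n - \vy_n \to \vec z$; by compactness of $A$ a subsequence satisfies $\vx_n \to \vx \in A$, whence $\vy_n \to \vx - \vec z$, which lies in $B$ by closedness, so $\vec z = \vx - (\vx - \vec z) \in C$. Thus $C$ is closed.

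With $C$ closed, convex, and not containing the origin, I would let $\vp$ be the point of $C$ nearest to $0$; it exists because intersecting $C$ with a large enough closed ball gives a nonempty compact set on which $\vec z \mapsto \|\vec z\|$ attains its minimum, and $\vp \neq 0$ because $0 \notin C$. Setting $\vv = \vp$, the projection (obtuse-angle) inequality gives $\langle \vec z - \vp,\ \vp\rangle \geq 0$ for every $\vec z \in C$, hence $\langle \vec z,\ \vv\rangle \geq \|\vp\|^2 =: \delta > 0$. I would justify this by convexity: for $\vec z \in C$ and $t \in (0,1]$ the point $\vp + t(\vec z - \vp)$ lies in $C$, so its squared norm is at least $\|\vp\|^2$; expanding and letting $t \to 0^+$ yields $\langle \vec z - \vp, \vp\rangle \geq 0$.

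Finally I would translate back and insert the two thresholds. Taking $\vec z = \vx - \vy$ yields the uniform gap $\langle \vx, \vv\rangle - \langle \vy, \vv\rangle \geq \delta$ for all $\vx \in A$ and $\vy \in B$. Writing $\beta = \sup_{\vy \in B} \langle \vy, \vv\rangle$, this gives $\langle \vx, \vv\rangle \geq \beta + \delta$ for all $\vx \in A$ while $\langle \vy, \vv\rangle \leq \beta$ for all $\vy \in B$. Choosing $c_1 = \beta + \delta/3$ and $c_2 = \beta + 2\delta/3$ ensures $c_1 < c_2$, and the strictly positive gap $\delta$ makes both required inequalities strict: $\langle \vx, \vv\rangle \geq \beta + \delta > c_2$ and $\langle \vy, \vv\rangle \leq \beta < c_1$. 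This establishes the theorem, with the closedness of the Minkowski difference being the crux that truly relies on compactness.
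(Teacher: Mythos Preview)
Your proof is correct: the Minkowski-difference reduction, the closedness argument using compactness of one set, the nearest-point projection inequality, and the final choice of $c_1,c_2$ inside the gap all go through without issue. One small remark worth making explicit is that $\beta=\sup_{\vy\in B}\langle\vy,\vv\rangle$ is finite, but this follows immediately from the uniform gap since for any fixed $\vx_0\in A$ we have $\langle\vy,\vv\rangle\le\langle\vx_0,\vv\rangle-\delta$.

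As for comparison with the paper: there is nothing to compare. The paper does not prove this lemma at all; it simply quotes it as a classical result from~\cite[p.~46]{ConvexOptimization} and then uses it as a black box to derive Lemma~\ref{lem:UsefulHyperplaneSeparation}. Your nearest-point argument is exactly the standard textbook proof, so you have supplied what the paper deliberately omitted.
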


Now we prove Lemma~\ref{lem:UsefulHyperplaneSeparation} of the paper.

\begin{varthm}[Lemma~\ref{lem:UsefulHyperplaneSeparation} (again)]
Let $A$ and $B$ be two disjoint nonempty convex sets in $\mathbb{R}^d$, one of which is compact.
Then for all $c \in \mathbb{R} \setminus \{ 0 \}$ there exists a nonzero vector $\vv \in \mathbb{R}^d$,
such that $\langle \vx, \vv \rangle > c$ and $\langle \vy, \vv \rangle < c$ for all $\vx \in A$ and $\vy \in B$.
\end{varthm}
\begin{proof}
Let $c \in \mathbb{R} \setminus \{ 0 \}$ be arbitrary.
We apply the Hyperplane Separation Theorem to $A$ and $B$
to obtain a vector $\vv'$ and numbers $c_1 < c_2$
with $\langle \vx, \vv' \rangle > c_2$ and $\langle \vy, \vv' \rangle < c_1$
for all $\vx \in A$ and all $\vy \in B$.
Now we consider three cases.

Case 1: $c_1 \neq 0$ and $c_2 \neq 0$ and $\sign(c_1) = \sign(c_2)$.
We set $\alpha = c / c_2$ and $\vv = \alpha \vv'$.
Then we get for $\vx \in A$:
\begin{align*}
\langle \vx, \vv \rangle = \alpha \langle \vx, \vv' \rangle > \alpha c_2 = c,
\end{align*}
as well as for $\vy \in B$:
\begin{align*}
\langle \vy, \vv \rangle = \alpha \langle \vy, \vv' \rangle < \alpha c_1 = \frac{c_1}{c_2} c < c,
\end{align*}
where in the last inequality we used that $0 < c_1/c_2 < 1$.

Case 2: $c_1 \neq 0$ and $c_2 \neq 0$ and $\sign(c_1) \neq \sign(c_2)$.
Since the signs of $c_1$ and $c_2$ disagree, we have $c_1 < 0 < c_2$.
Thus, we can pick $c_1' \in (0,c_2)$ arbitrarily
and still maintain all properties guaranteed by the Hyperplane Separation Theorem
for $\vv$, $c_1'$ and $c_2$.
Now we are in case 1.

Case 3: $c_1 = 0$ or $c_2 = 0$.
We can pick numbers $d_1, d_2 \in (c_1, c_2)$ with $d_1 < d_2$.
Observe that both $d_1$ and $d_2$ are non-zero.
Then we have $\langle \vx, \vv' \rangle > c_2 > d_2$
and $\langle \vy, \vv' \rangle < c_1 < d_1$
for all $\vx \in A$ and all $\vy \in B$.
Now we can use case 1 for $\vv'$, $d_1$ and $d_2$.
\end{proof}

\subsection{Proof of Proposition~\ref{prop:rrankp_vs_rrank}}
The proof follows ideas of \cite{ProbabilisticCommunicationComplexity} and adds some details
	to achieve the non-negativity.

\begin{varthm}[Proposition~\ref{prop:rrankp_vs_rrank} (again)]
  Let $\mB \in \{0, 1\}^{m \times n}$ be a binary matrix. Then
  \begin{align*}
    \rrank(\mB) \leq \rrank_+(\mB) \leq \rrank(\mB) + 2.
  \end{align*}
\end{varthm}
\begin{proof}
  The first inequality is trivial as the standard rounding rank
  is more general than the non-negative rounding rank.

  The trickier part is the second inequality.
  The idea of the proof is to take points and hyperplanes achieving the rounding rank of the matrix
  and to project them into a higher dimensional space, where they are non-negative.
  This projection happens via an explicit construction, that gets somewhat technical.

  Let $k = \rrank(\mB)$.
  Then by definition there exist matrices $\mL \in \mathbb{R}^{m \times k}$ and $\mR \in \mathbb{R}^{n \times k}$
  	with $\mB = \round(\mL \mR^T)$ for rounding threshold $\frac{1}{2}$. As before we will interpret the rows
	$\vl_1, \dots, \vl_m$ of $\mL$ as points in $\mathbb{R}^k$ and the rows $\vr_1, \dots, \vr_n$ of $\mR$
	as normal vectors of affine hyperplanes in $\mathbb{R}^k$.

  For each $\vr_j = \left(\vr_{j1}, \dots, \vr_{jk}\right)$, we set
    $\vr_j' = \left(\vr_{j1}, \dots, \vr_{jk}, - \frac{1}{2}, \frac{1}{2} - \sum_{m = 1}^k \vr_{jm}\right) \in \mathbb{R}^{k+2}$
    and observe that these vectors define hyperplanes in $\mathbb{R}^{k+2}$ containing the origin, i.e.\ we have
	$0 \in \left\{ \vx \in \mathbb{R}^{k+2} : \langle \vx, \vr_j' \rangle = 0 \right\}$.
    We set $d_j = \max\{ | \vr_{jm}' | : m = 1, \dots, k+2 \}$ and define $\vr_j'' = \frac{1}{2d_j} \vr_j'$.
	Observe that for all $m = 1, \dots, k+2$, we have $-\frac{1}{2} \leq \vr_{jm}'' \leq \frac{1}{2}$.

  For each $\vl_i = \left(\vl_{i1}, \dots, \vl_{ik} \right)$, we set $c_i = \max\{ |\vl_{i1}|, \dots, |\vl_{ik}|, 1 \}$ and
	we further define $\vl_i' = (c_i + \vl_{i1}, \dots, c_i + \vl_{ik}, c_i + 1, c_i) \in \mathbb{R}^{k+2}$
	and observe that
	$\vl_i'$ is non-zero and non-negative. By $\vl_i''$ we denote $\vl_i'$ after normalizing with the $L^1$-norm, i.e.\ 
	$\vl_i'' = \vl_i' / || \vl_i' ||_1$, where $|| \vl_i' ||_1 = \sum_{m = 1}^{k+2} | \vl_{im} | $.


  Now we do a short intermediate computation that shows that the
  $\vl_i''$ and $\vr_j''$ indeed still round to matrix $\mB$
  with rounding threshold $0$:
  \begin{align}
    \langle \vr_j'', \vl_i'' \rangle
    &= \frac{1}{||\vl_i'||_1} \langle \vr_j'', \vl_i' \rangle \nonumber \\
    &= \frac{1}{2d_j ||\vl_i'||_1} \langle \vr_j', \vl_i' \rangle \nonumber \\
    &= \frac{1}{2d_j ||\vl_i'||_1} \left( \sum_{m=1}^k \vr_{jm} (c_i + \vl_{im}) - \frac{1}{2}(c_i + 1) \right. \nonumber \\
	   & \hspace{1cm} \left. + \left(\frac{1}{2} - \sum_{m = 1}^k \vr_{jm}\right)c_i \right) \nonumber \\
    &= \frac{1}{2d_j ||\vl_i'||_1} \left( \sum_{m=1}^k \vr_{jm} \vl_{im} - \frac{1}{2} \right) \nonumber \\
    &= \frac{1}{2d_j ||\vl_i'||_1} \left( \langle \vr_j, \vl_i \rangle - \frac{1}{2} \right) \label{eq:technicalThingy1} \\
    &=
      \begin{cases}
  	    \geq 0, & \text{if $\langle \vr_j, \vl_i \rangle \geq \frac{1}{2}$}, \\
  	    < 0, & \text{otherwise}.
  	  \end{cases} \nonumber
  \end{align}

  We move on to define $\vr_j''' \in \mathbb{R}^{k+2}$ by setting
  $\vr_{jl}''' = \frac{1}{2} + \vr_{jl}''$ for all $l = 1, \dots, k+2$.
  Observe that each component of $\vr_j'''$ is non-negative.
  We perform another intermediate computation, that we will need later:
  \begin{align}
    \langle \left(\frac{1}{2}, \dots, \frac{1}{2}\right), \vl_i'' \rangle \nonumber
    &= \frac{1}{2} \sum_{m=1}^{k+2} \vl_{im}'' \nonumber \\ 
    &= \frac{1}{2 ||\vl_i'||_1} \sum_{m=1}^{k+2} \vl_{im}' \nonumber \\
    &= \frac{1}{2 ||\vl_i'||_1} \left( \sum_{m=1}^{k} (c_i + \vl_{im}) + 2c_i + 1 \right) \nonumber \\
    &= \frac{1}{2 ||\vl_i'||_1} \left( (k+2) c_i + 1 + \sum_{m=1}^{k} \vl_{im} \right). \label{eq:technicalThingy2}
  \end{align}

  Now we observe that the $\vr_j'''$ and $\vl_i''$ give a non-negative rounding rank
  decomposition of $\mB$ for different rounding thresholds,
  where we use \eqref{eq:technicalThingy1} and \eqref{eq:technicalThingy2} in the second step:
  \begin{align}
    \langle \vr_j''', \vl_i'' \rangle \nonumber
    &= \langle \vr_j'' + \left( \frac{1}{2}, \dots, \frac{1}{2} \right), \vl_i'' \rangle \nonumber \\
    &= \frac{ \langle \vr_j, \vl_i \rangle - \frac{1}{2} }{2d_j ||\vl_i'||_1}
		 + \frac{(k+2) c_i + 1 + \sum_{m=1}^{k} \vl_{im}}{2 ||\vl_i'||_1}. \label{eq:techroundingthresh}
  \end{align}

  Notice that the first summand of~\eqref{eq:techroundingthresh} is non-negative
  	iff $\langle \vr_j, \vl_i \rangle \geq \frac{1}{2}$.
  Thus, if we use the second summand as rounding threshold, then we would round correctly.
  The issue is that this rounding threshold depends on $\vl_i$.

  To solve this problem and to get everything to rounding threshold $\frac{1}{2}$, we rescale the $\vl_i''$.
    We denote the second summand of~\eqref{eq:techroundingthresh} by $\alpha$ and observe that $\alpha \geq 0$
	by choice of $c_i$. Now we set $\vl_i''' = \frac{1}{2 \alpha} \vl_i''$ and obtain:
    \begin{align}
      \langle \vr_j''', \vl_i''' \rangle
      &= \frac{1}{2\alpha} \langle \vr_i''', \vl_i'' \rangle \nonumber \\
      &= \frac{ \langle \vr_j, \vl_i \rangle - \frac{1}{2} }{4 \alpha d_j ||\vl_i'||_1}
	  			+ \frac{1}{2}, \label{eq:finaltechnicality}
    \end{align}
    where we used~\eqref{eq:techroundingthresh} in the last step.
    The inner product is non-negative by choice of $\vl_i'''$ and $\vr_j'''$
    and the first summand of~\eqref{eq:finaltechnicality} is non-negative
	  iff $\langle \vr_j, \vl_i \rangle \geq \frac{1}{2}$.
    Thus, $\langle \vr_j''', \vl_i''' \rangle \geq \frac{1}{2}$
	  iff $\langle \vr_j, \vl_i \rangle \geq \frac{1}{2}$
	  iff $\mB_{ij} = 1$.
    Therefore, the $\vr_j'''$ and $\vl_i'''$ give a non-negative rounding rank decomposition of $\mB$ for rounding threshold $\frac{1}{2}$.
\end{proof}

\subsection{Proof of Proposition~\ref{prop:rrank1_eq_block_nested}}

\begin{varthm}[Proposition~\ref{prop:rrank1_eq_block_nested} (again)]
  Let $\mB \in \{ 0, 1\}^{m \times n}$ with $\mB \neq 0$. The following statements are equivalent:
  \begin{enumerate}
    \item $\rrank(\mB) = 1$.
	\item $\mB$ is nested or there exist permutation matrices $\mP_1$ and $\mP_2$
		  and nested matrices $\mB_1$ and $\mB_2$, such that
	      \begin{align*}
			  \mB = \mP_1
			  \begin{pmatrix}
				\mB_1 & 0     \\
				0     & \mB_2
			  \end{pmatrix}\mP_2.
		  \end{align*}
  \end{enumerate}
\end{varthm}
\begin{proof}
  $1 \Rightarrow 2$:
  Let $\mB=\round(\vl\vr^T)$. If $\vl$ (or $\vr$) is non-negative or non-positive, 
  $\mB$ is nested. To see this, observe that $\round(\vl \vr^T)$ remains unmodified
  if we replace entries of opposite sign in $\vr$ (or $\vl$) by $0$ and then
  take absolute values. Then we can apply Theorem~\ref{Thm:NestedMatrices}.

  Otherwise, both $\vl$ and $\vr$ contain both strictly negative and
  strictly positive entries.  Then there exists some permutation matrix $\mP_1$, such that
  $\mP_1^{-1}\vl$ is non-increasing. We pick the vectors $\vl_+\ge 0$ and
  $\vl_-\le 0$, such that $\mP_1^{-1}\vl=\begin{pmatrix}\vl_+ \\ \vl_- \end{pmatrix}$.
  Similarly, there is some permutation matrix $\mP_2$, such that
  $\mP_2\vr$ is non-increasing and we set $\vr_+$ and $\vr_-$ accordingly.

  Using this notation we can do a quick computation,
  \begin{align*}
    \mB 
    &= \round(\vl\vr^T) \\
	&= \round(\mP_1(\mP_1^T\vl)(\mP_2\vr)^T\mP_2) \\
    &= \mP_1 \round\left(
				\begin{pmatrix}
				  \vl_+ \\
				  \vl_-
				\end{pmatrix}
				\begin{pmatrix}
				  \vr_+ \\ 
				  \vr_-
				\end{pmatrix}^T
			  \right)
	    \mP_2  \\
    &= \mP_1 
       \begin{pmatrix}
	   		\round(\vl_+\vr_+^T) & \round(\vl_+\vr_-^T) \\
      		\round(\vl_-\vr_+^T) & \round(\vl_-\vr_-^T)
       \end{pmatrix}^T
	   \mP_2 \\
	&=  \mP_1
		\begin{pmatrix}
      		\mB_1 & 0     \\
			0     & \mB_2
    	\end{pmatrix}
		\mP_2,
  \end{align*}
  where $\mB_1=\round(\vl_+\vr_+^T)$ and $\mB_2=\round(\vl_-\vr_-^T)=\round((-\vl_-)(-\vr_-^T))$.
  The last equality holds since $\round(\vl_+\vr_-^T)=\bm0$ and
  $\round(\vl_-\vr_+^T)=\bm0$. Finally, we observe that $\mB_1$ and $\mB_2$ are
  nested matrices by Theorem~11.

  $2 \Rightarrow 1$:
  If $\mB$ is nested, then $\rrank(\mB) \leq \rrank_+(\mB) = 1$ by Theorem~11.
  Suppose $\mB$ is not nested and we are given $\mP_1$, $\mP_2$, $\mB_1$ and $\mB_2$
  as in the statement of the Lemma.
  Then $\mB_1$ and $\mB_2$ are non-zero (otherwise $\mB$ would be nested) and they have non-negative
  rounding rank one by Theorem~11. Thus, we can assume that $\mB_1 = \round(\vl_1 \vr_1^T)$
  and $\mB_2 = \round(\vl_2 \vr_2^T)$ for some non-negative vectors $\vl_1$, $\vl_2$, $\vr_1$ and $\vr_2$.

  Now we observe that
  \begin{align*}
	  \round \left(
		\begin{pmatrix}
		  \vl_1 \\ -\vl_2
		\end{pmatrix} 
		\begin{pmatrix}
		  \vr_1 \\
		  -\vr_2
		\end{pmatrix}^T
		\right)
	  = 
	  \begin{pmatrix}
		\mB_1 & 0     \\
		0     & \mB_2
	  \end{pmatrix}.
  \end{align*}
  Thus, by setting $\vl = \mP_1 \begin{pmatrix} \vl_1 \\ -\vl_2 \end{pmatrix}$ and
  $\vr = \mP_2^T \begin{pmatrix} \vr_1 \\ -\vr_2 \end{pmatrix}$ we get $\mB = \round(\vl \vr^T)$.
\end{proof}

\subsection{Experimental Results}
\label{sec:experimental-results}

The results on estimating the rounding rank on small synthetic data with normally distributed factors are presented in Figure~\ref{fig:apx:synth:rank:times}. The results on the minimum error fixed rounding rank experiments with medium-sized, normally distributed data are presented in Figures~\ref{fig:apx:synth:unif:time} (for timing results with uniformly-distributed factors) and~\ref{fig:apx:synth:err} (for results with normally distributed factors). In all cases, the results are essentially similar to the corresponding results presented in the main paper. 

\begin{figure*}
  \centering
  \includegraphics[height=\legendheight]{rank_small_legend} \\
  \rotatebox[origin=l]{90}{\hspace*{1em}\small Normal dist.}\hspace*{\smallfigsep}%
  \subfigure[Rank, vary $m$]{%
    \label{fig:synth:small:n:norm:rank}%
    \includegraphics[width=\smallfigwidth]{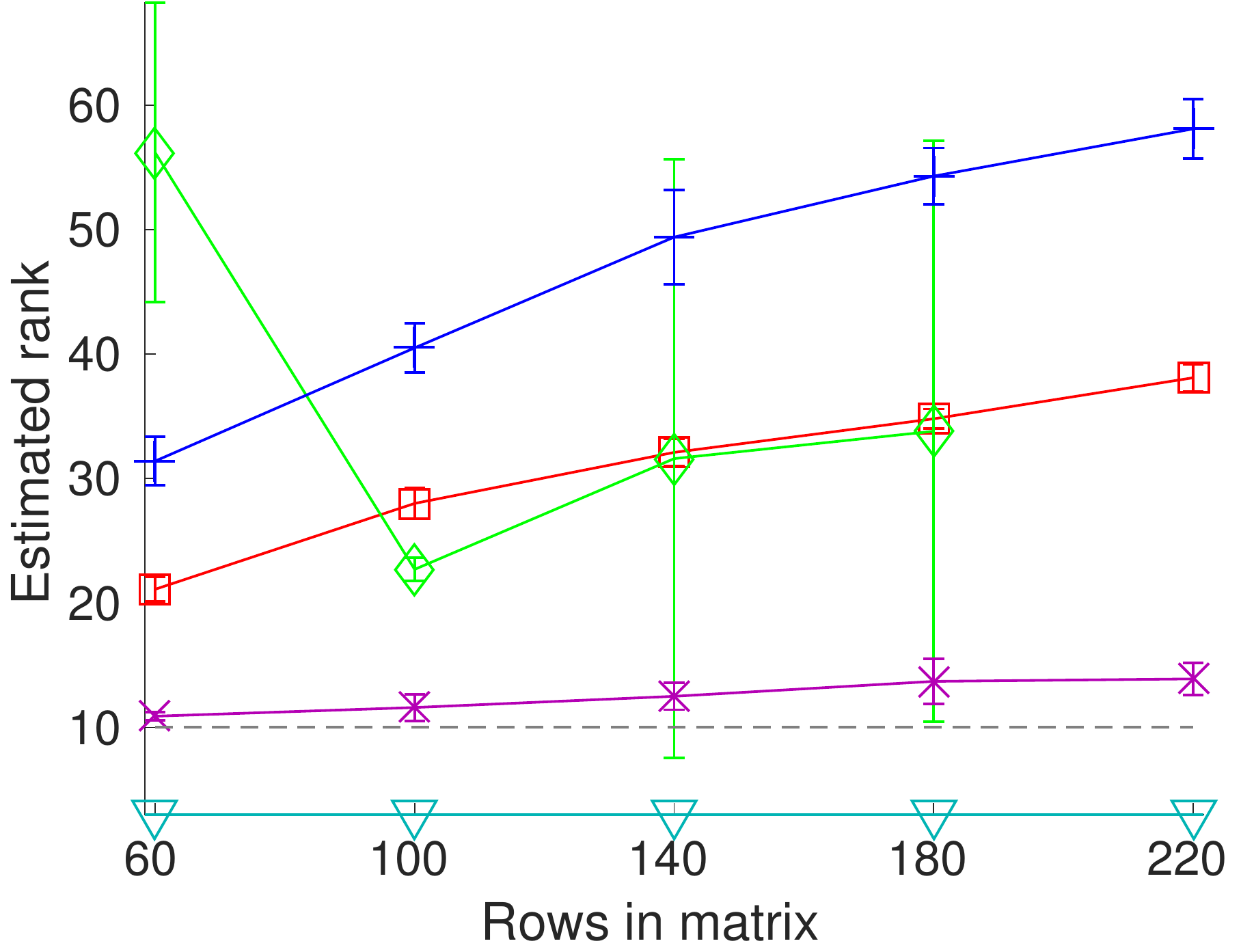}%
  }\hspace*{\smallfigsep}%
  \subfigure[Rank, vary $k$]{%
    \label{fig:synth:small:k:norm:rank}%
    \includegraphics[width=\smallfigwidth]{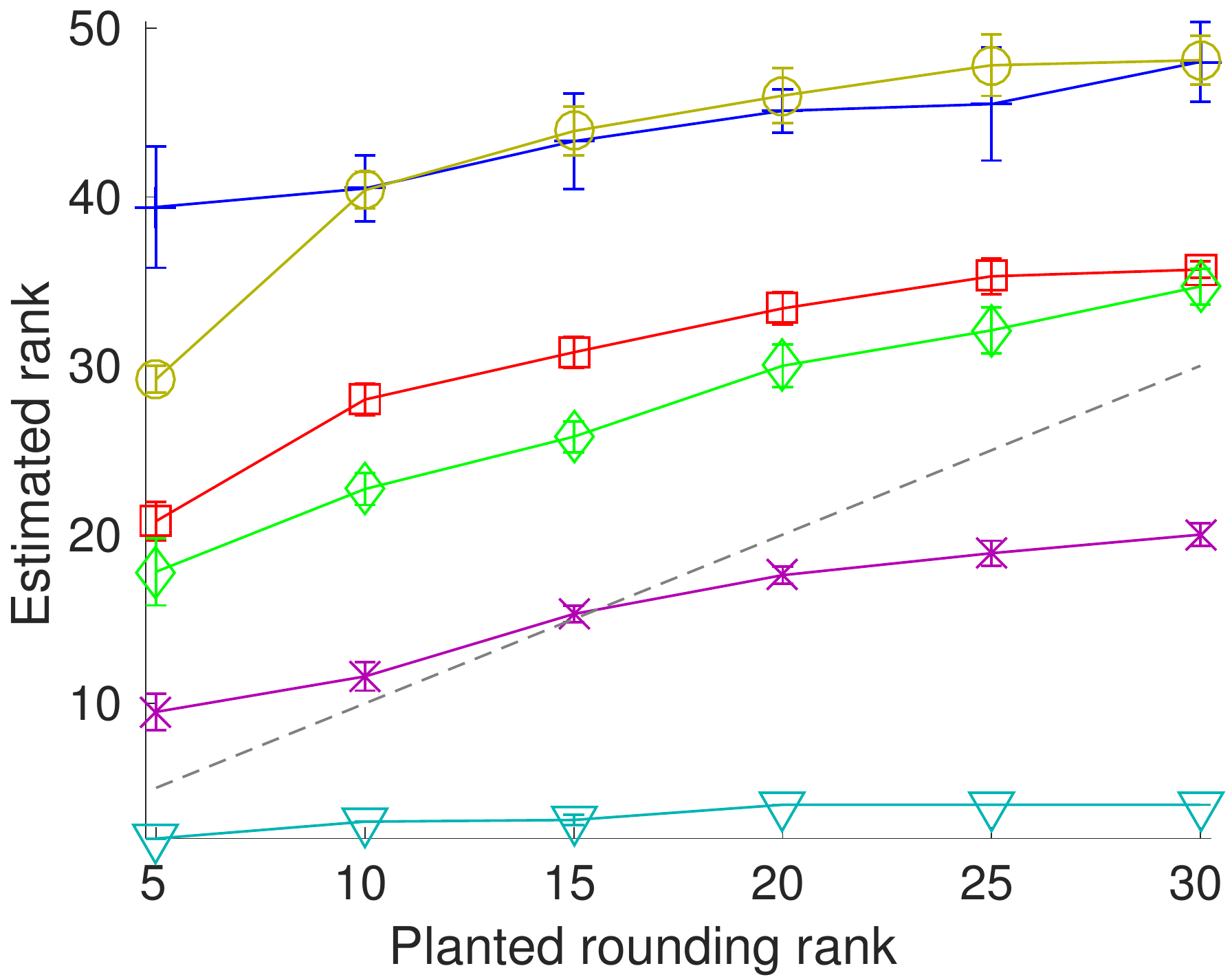}%
  }\hspace*{\smallfigsep}%
  \subfigure[Rank, vary $\mu$]{%
    \label{fig:synth:small:dens:norm:rank}%
    \includegraphics[width=\smallfigwidth]{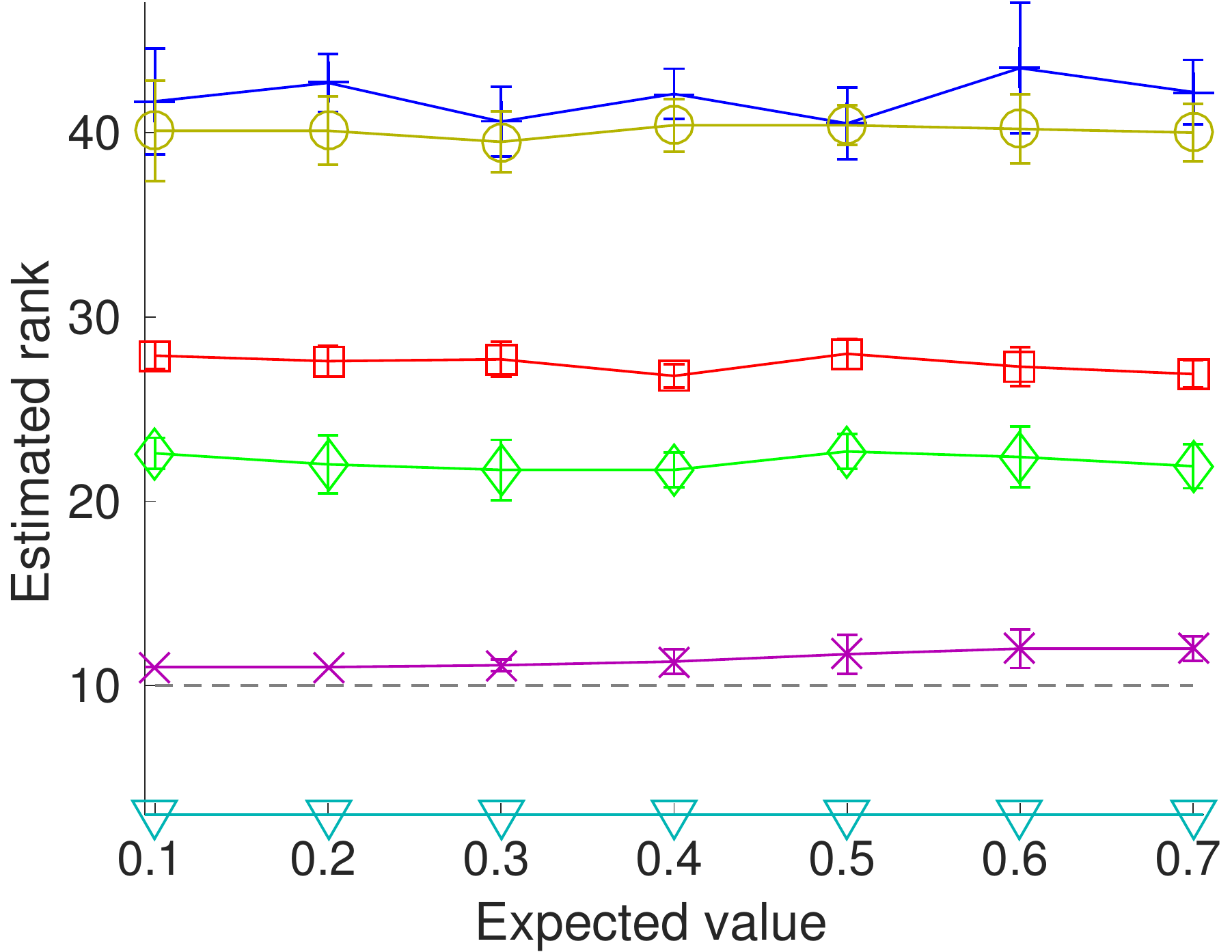}%
  }\hspace*{\smallfigsep}%
  \subfigure[Rank, vary $p$]{%
    \label{fig:synth:small:noise:norm:rank}%
    \includegraphics[width=\smallfigwidth]{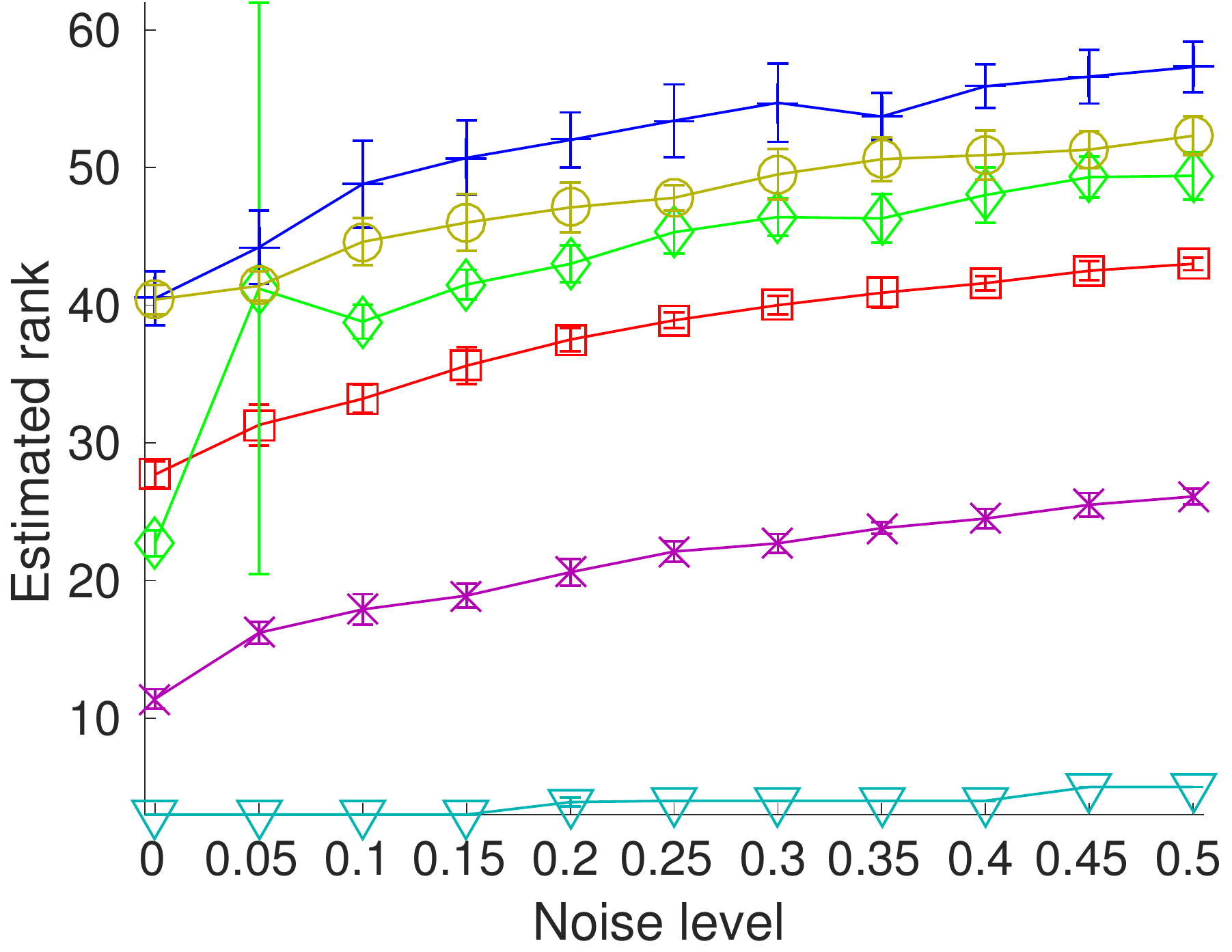}%
  }\\
  \rotatebox[origin=l]{90}{\hspace*{3em}\small Time}\hspace*{\smallfigsep}%
  \subfigure[Time, vary $m$]{%
    \label{fig:synth:small:n:norm:time}%
    \includegraphics[width=\smallfigwidth]{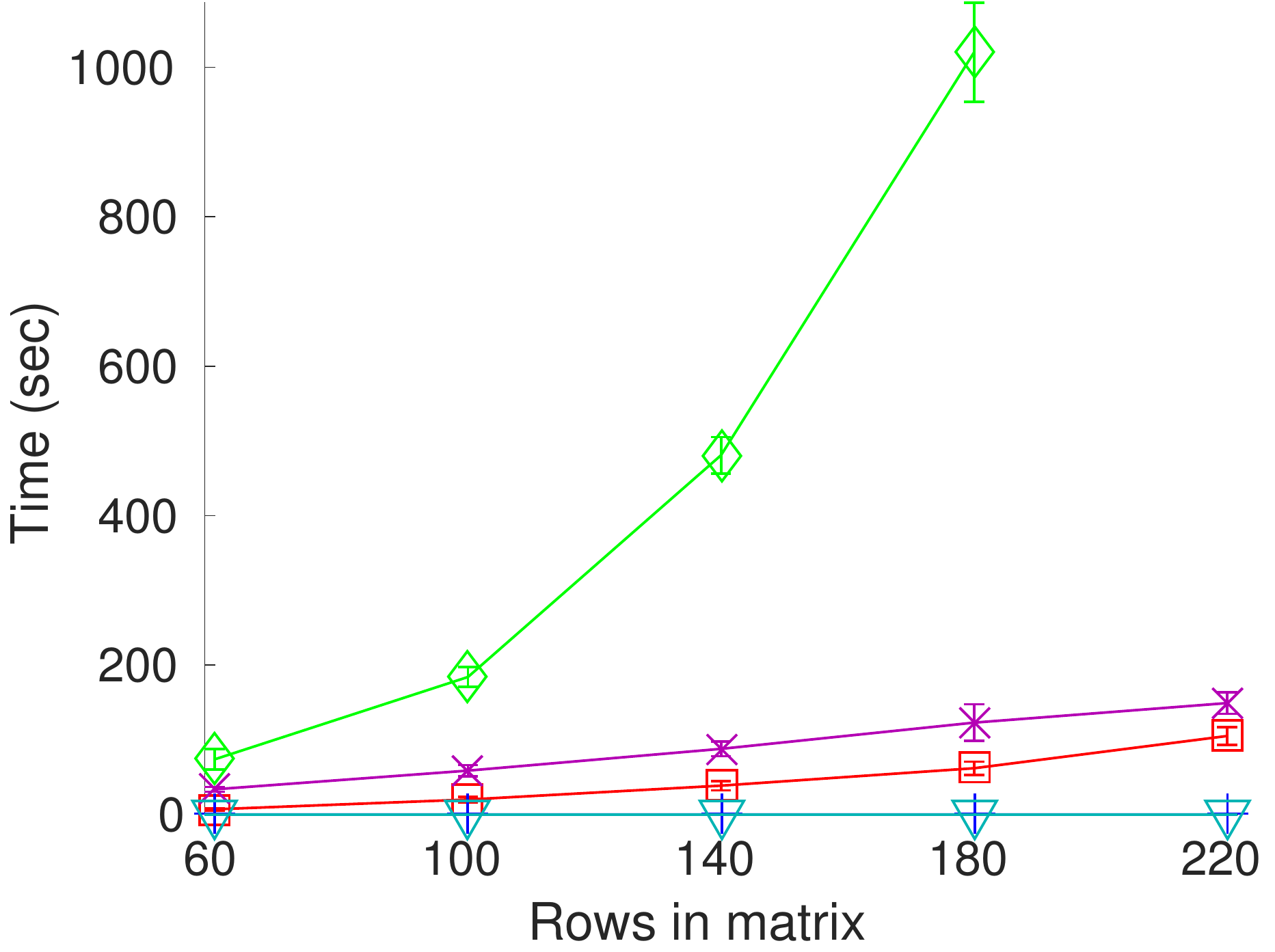}%
  }\hspace*{\smallfigsep}%
  \subfigure[Time, vary $k$]{%
    \label{fig:synth:small:k:norm:time}%
    \includegraphics[width=\smallfigwidth]{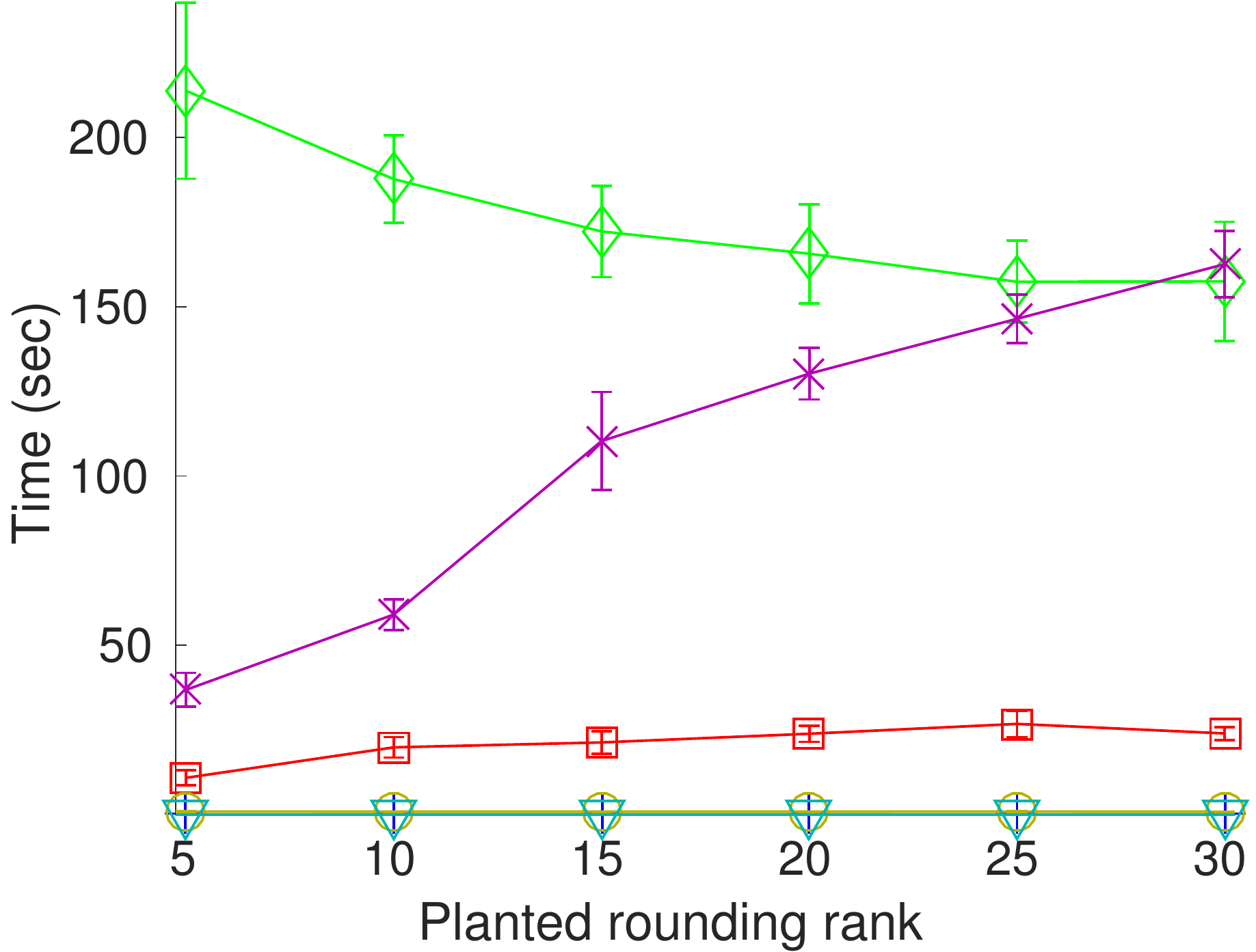}%
  }\hspace*{\smallfigsep}%
  \subfigure[Time, vary $d$]{%
    \label{fig:synth:small:dens:norm:time}%
    \includegraphics[width=\smallfigwidth]{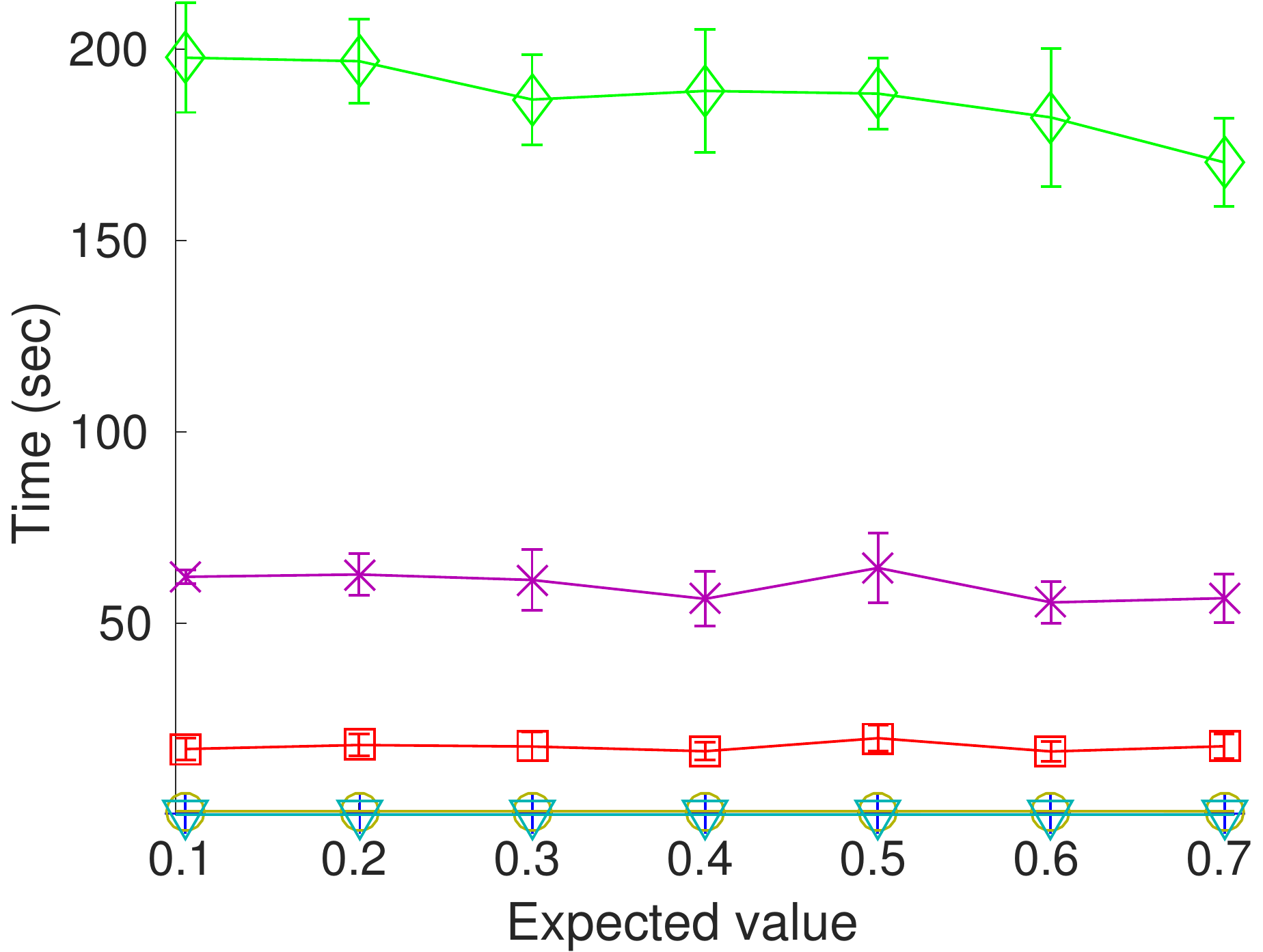}%
  }\hspace*{\smallfigsep}%
  \subfigure[Time, vary $p$]{%
    \label{fig:synth:small:noise:norm:time}%
    \includegraphics[width=\smallfigwidth]{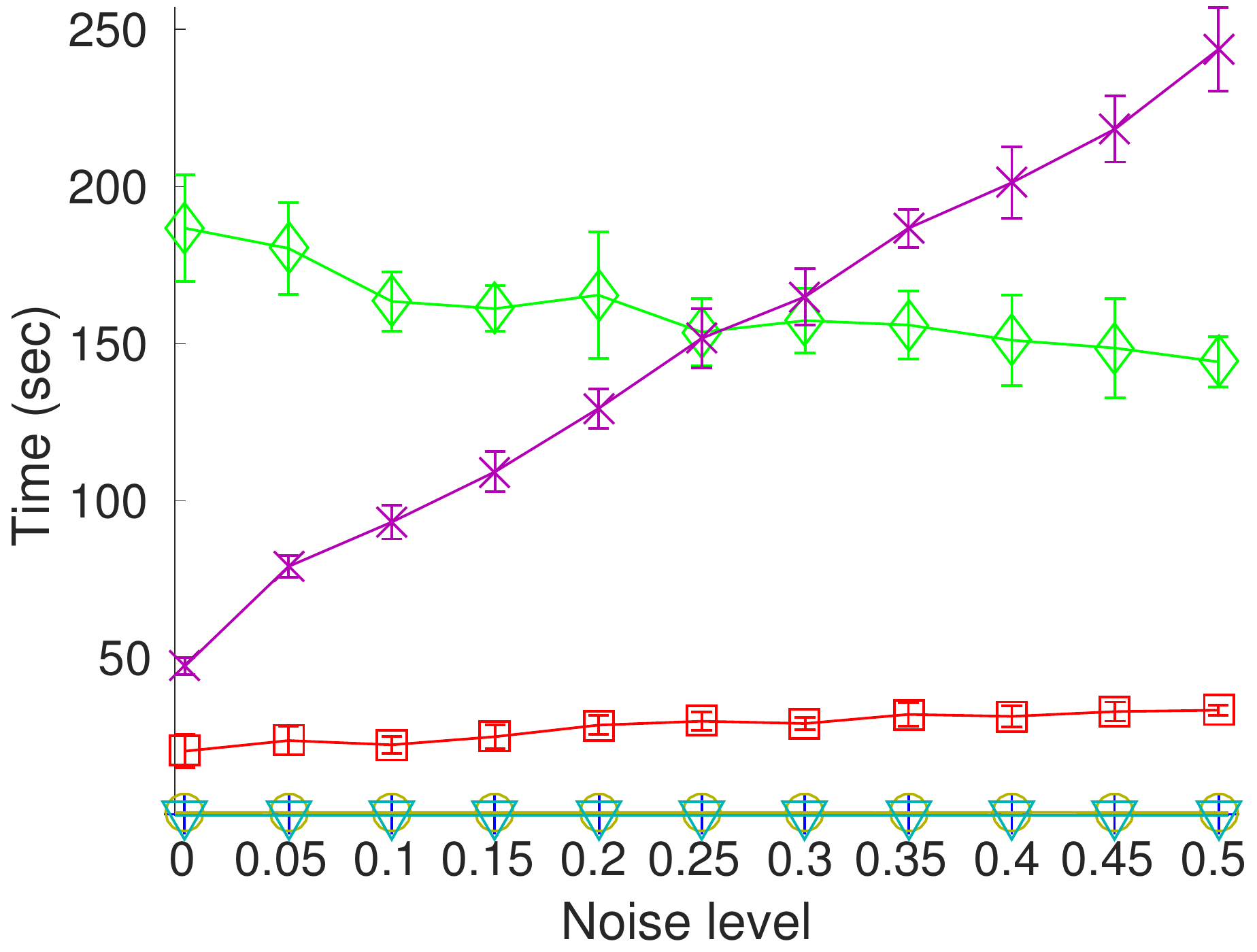}%
  }%
  \caption{Estimated rank and running times when using small synthetic data sets with normally distributed factor matrices (cf. Figure 2 of the main submission).}
  \label{fig:apx:synth:rank:times}
\end{figure*}

\begin{figure*}
  \centering
  \includegraphics[height=\legendheight]{err_big_legend} \\
  \rotatebox[origin=l]{90}{\hspace*{3em}\small Time}\hspace*{\smallfigsep}%
  \subfigure[Time, vary $m$]{%
    \label{fig:synth:big:n:unif:errtime}%
    \includegraphics[width=\smallfigwidth]{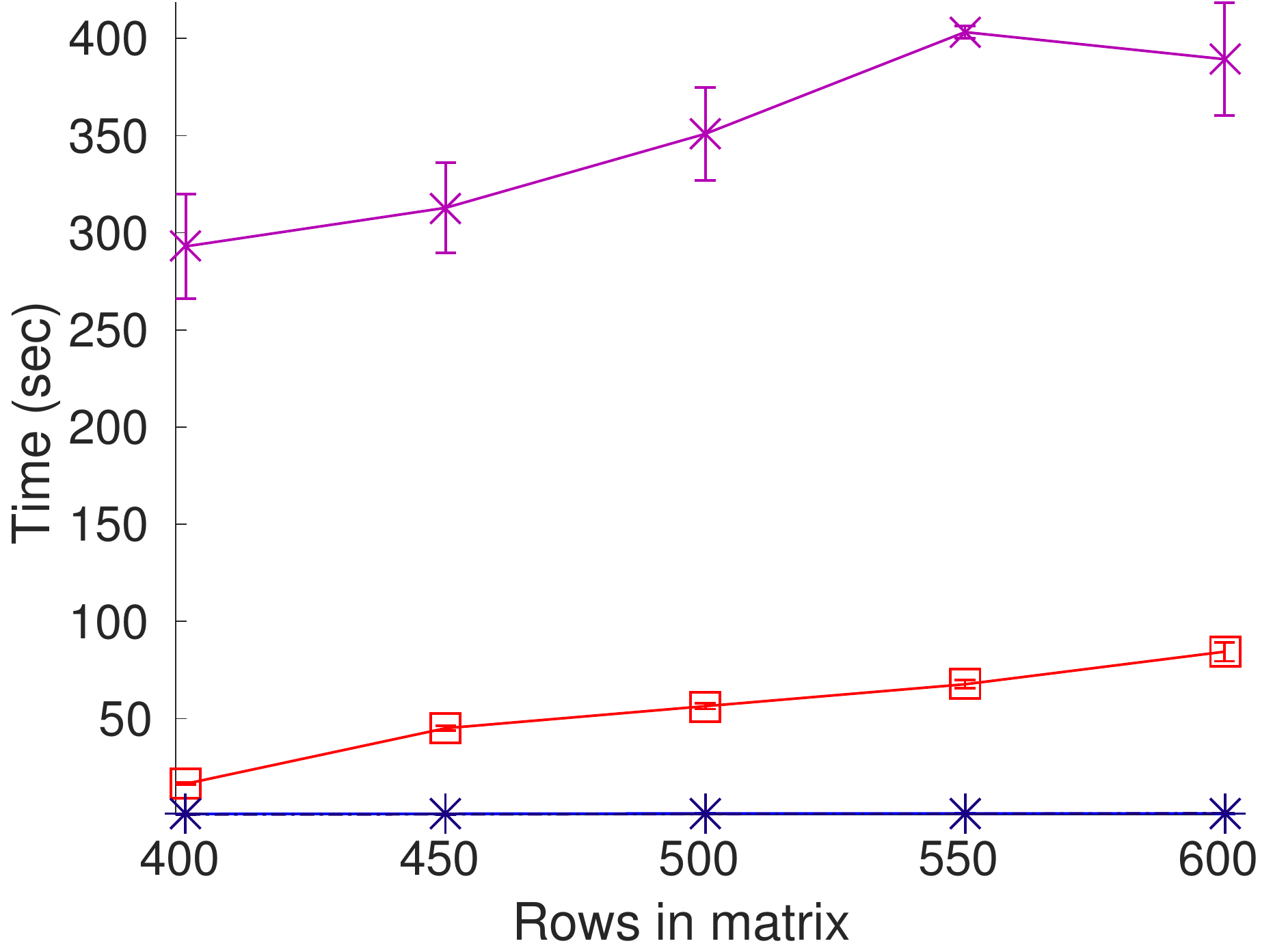}%
  }\hspace*{\smallfigsep}%
  \subfigure[Time, vary $k$]{%
    \label{fig:synth:big:k:unif:errtime}%
    \includegraphics[width=\smallfigwidth]{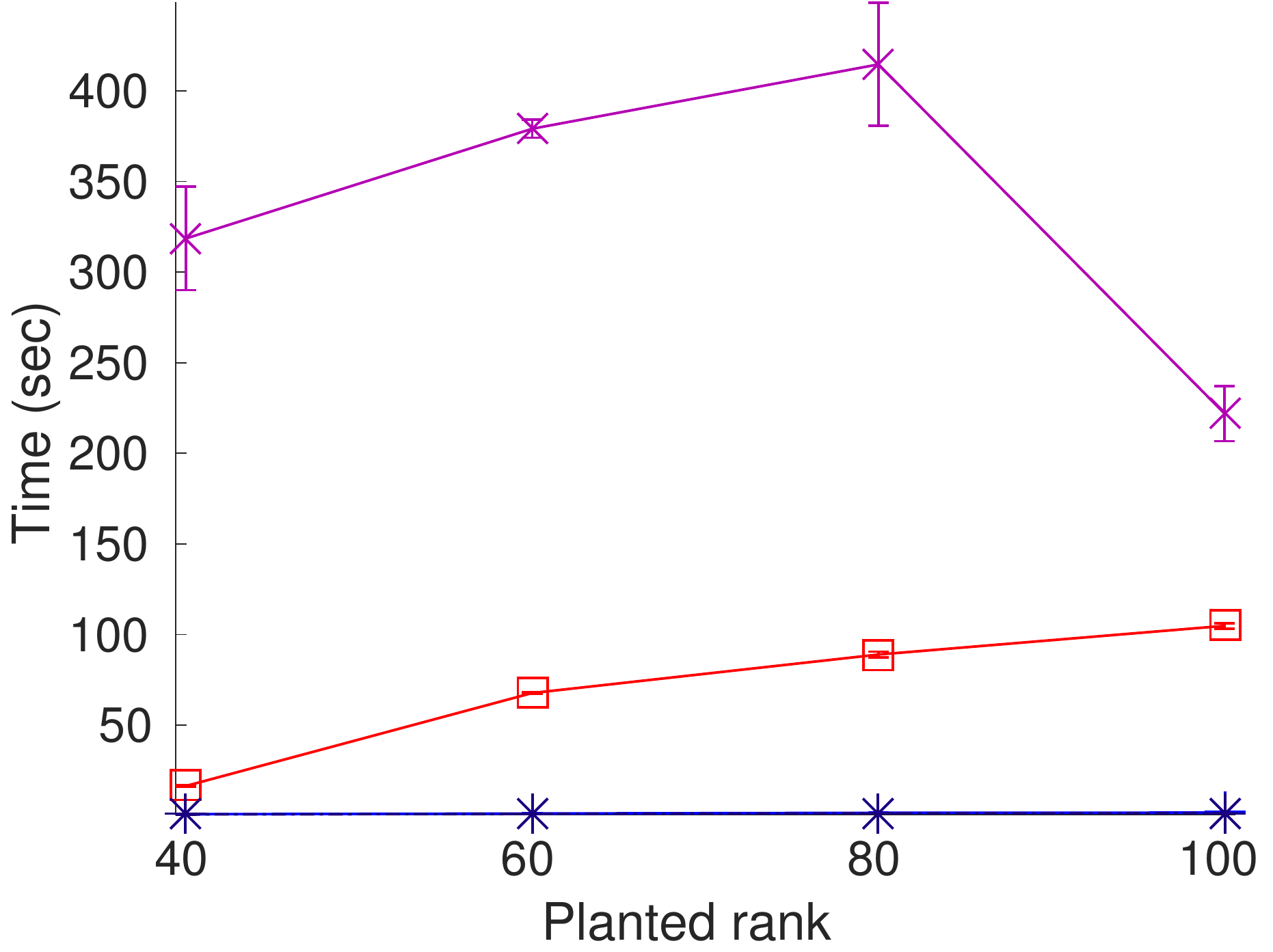}%
  }\hspace*{\smallfigsep}%
  \subfigure[Time, vary $\mu$]{%
    \label{fig:synth:big:dens:unif:errtime}%
    \includegraphics[width=\smallfigwidth]{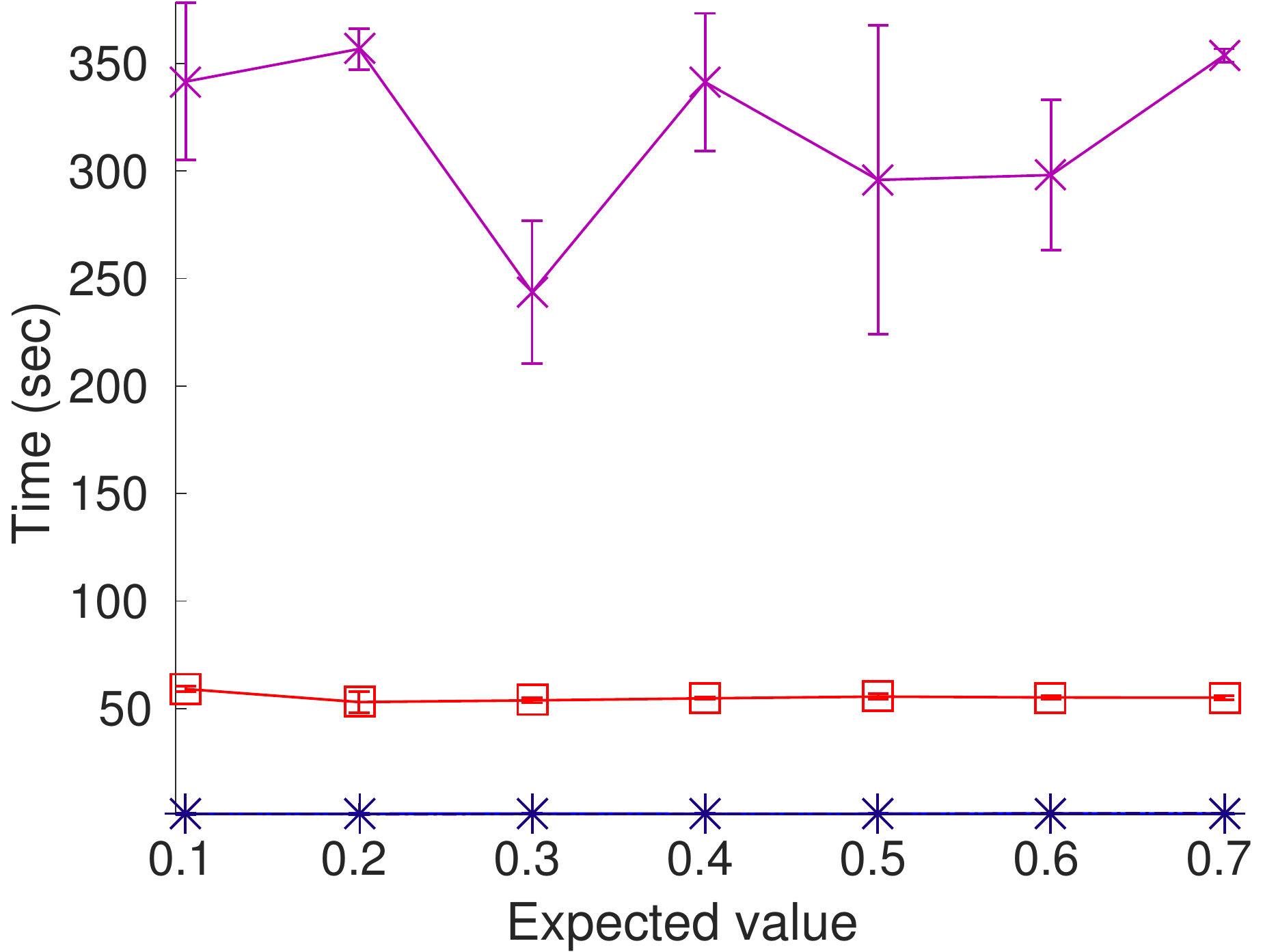}%
  }\hspace*{\smallfigsep}%
  \subfigure[Time, vary $p$]{%
    \label{fig:synth:big:noise:unif:errtime}%
    \includegraphics[width=\smallfigwidth]{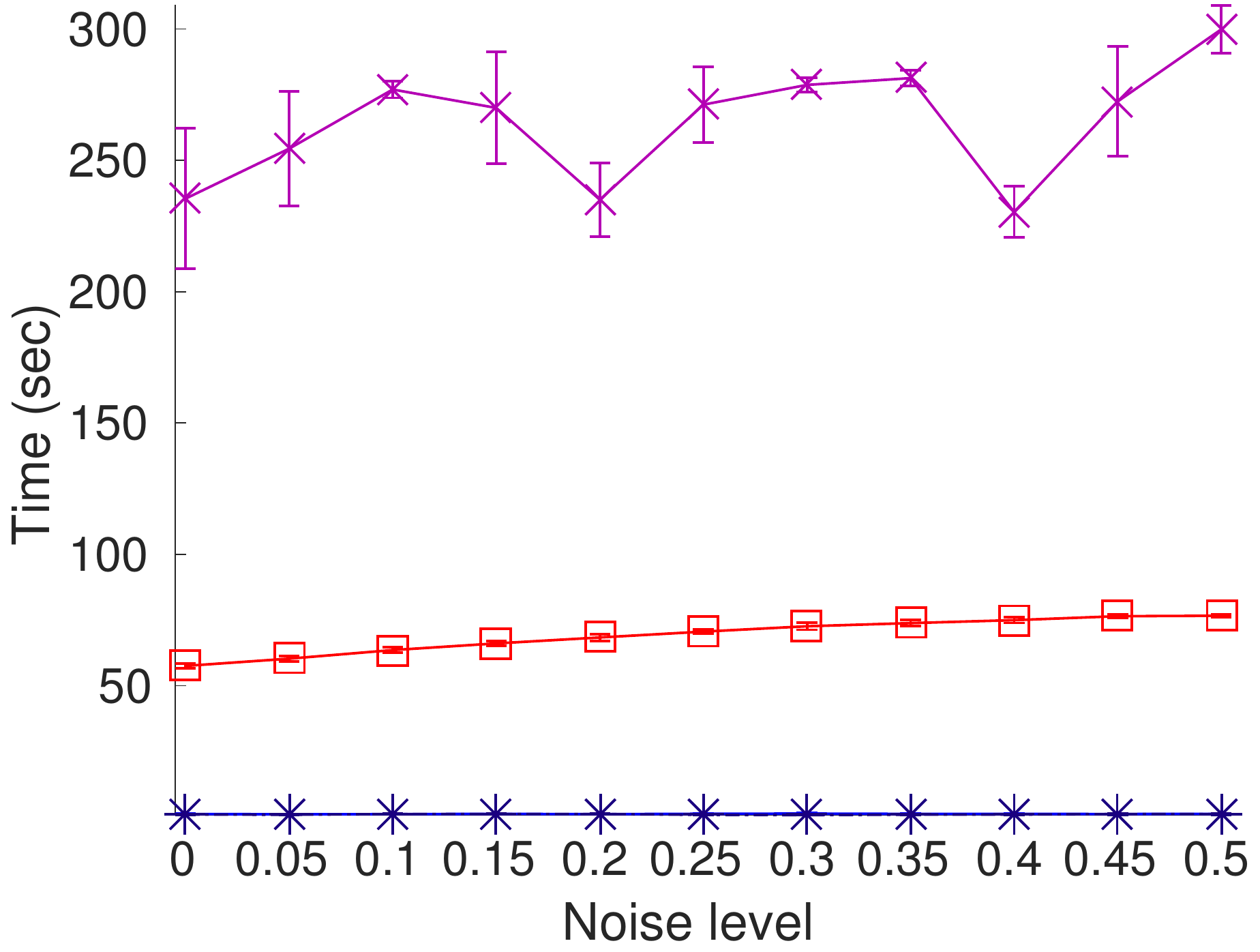}%
  }%
  \caption{Running times for the minimum-error fixed rounding rank decompositions on medium-sized synthetic data with uniformly distributed factors.}
  \label{fig:apx:synth:unif:time}
\end{figure*}

\begin{figure*}
  \centering
  \includegraphics[height=\legendheight]{err_big_legend} \\
  \rotatebox[origin=l]{90}{\hspace*{1em}\small Normal dist.}\hspace*{\smallfigsep}%
  \subfigure[Error, vary $m$]{%
    \label{fig:synth:big:n:norm:err}%
    \includegraphics[width=\smallfigwidth]{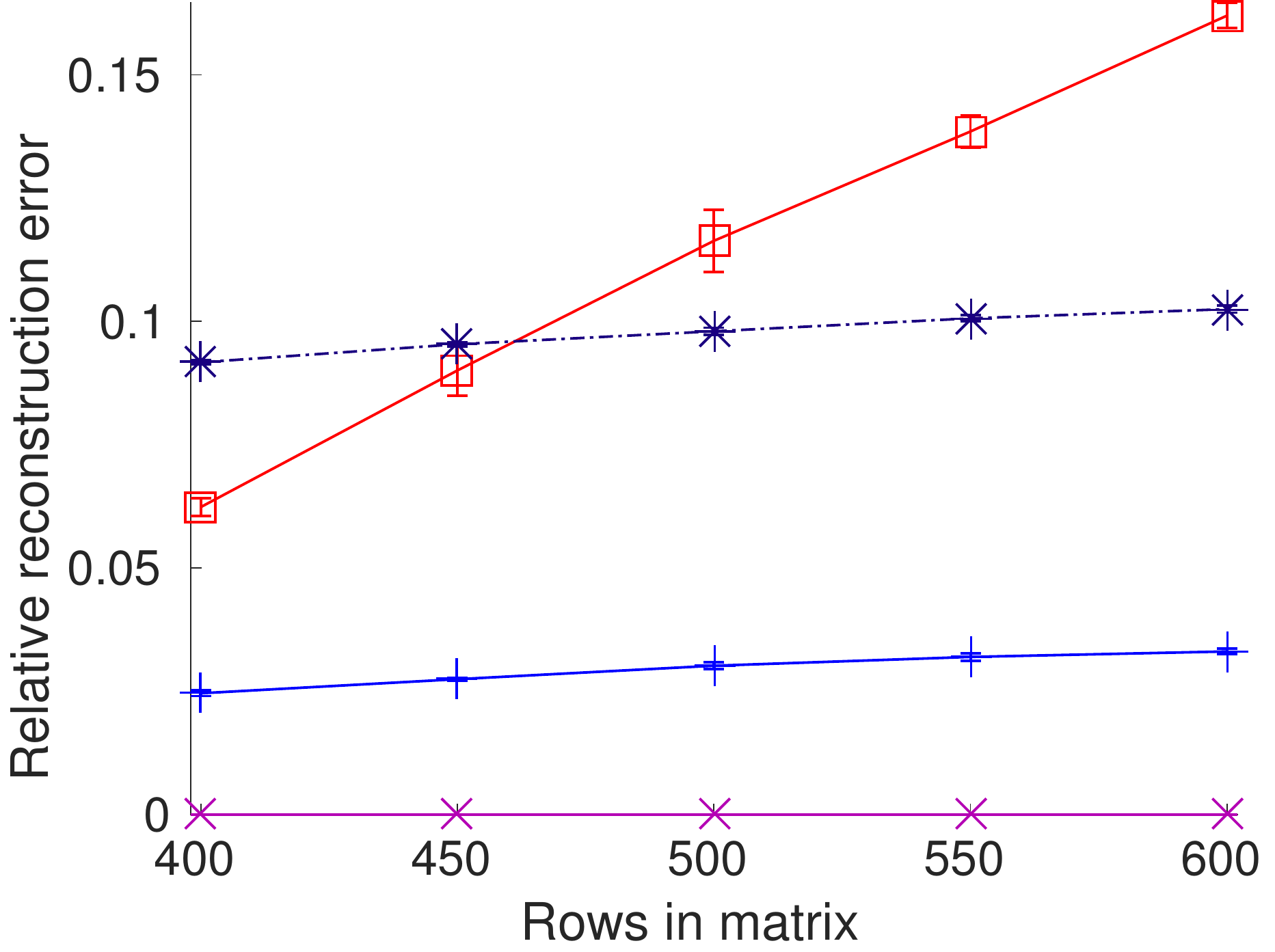}%
  }\hspace*{\smallfigsep}%
  \subfigure[Error, vary $k$]{%
    \label{fig:synth:big:k:norm:err}%
    \includegraphics[width=\smallfigwidth]{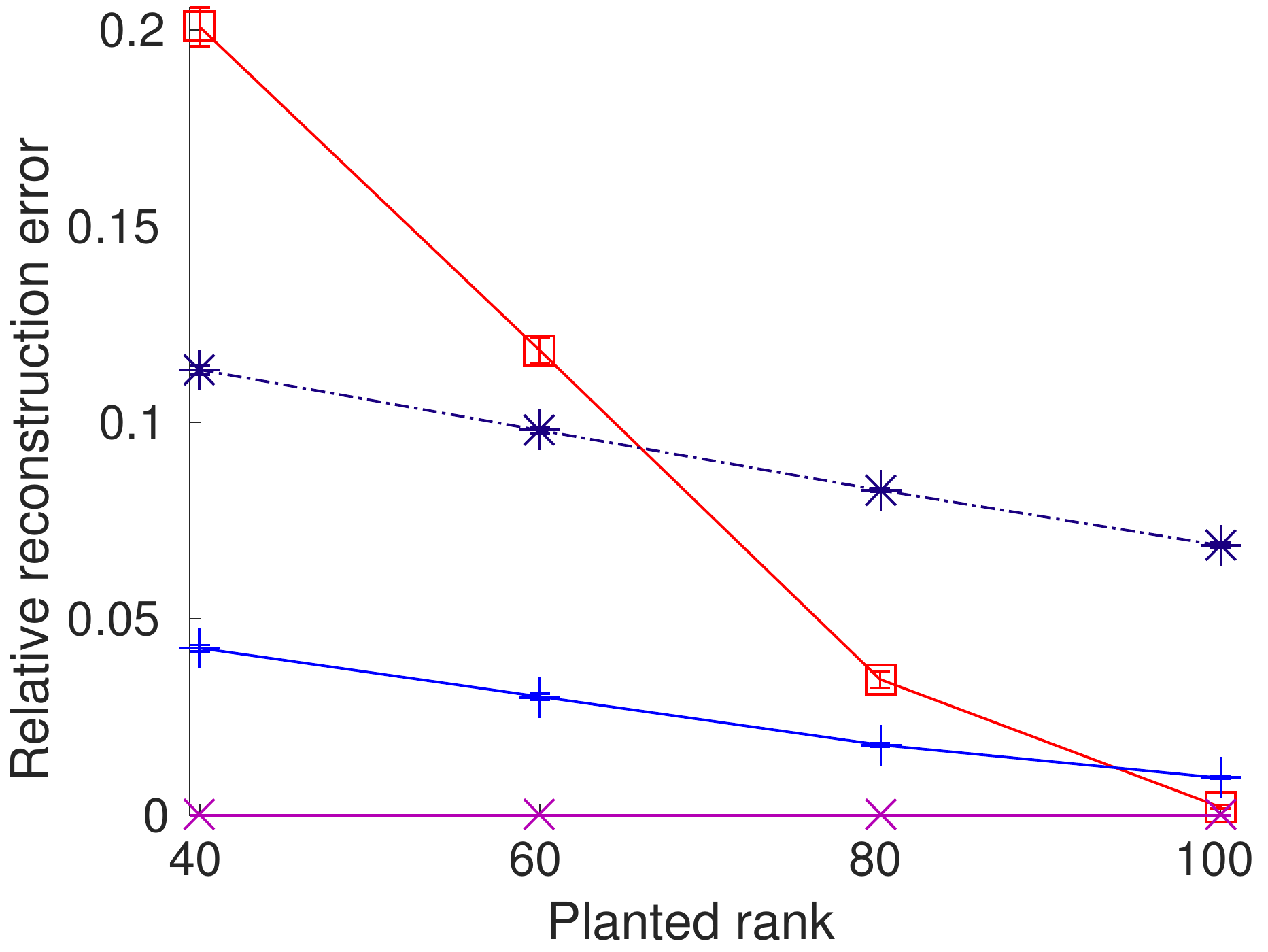}%
  }\hspace*{\smallfigsep}%
  \subfigure[Error, vary $d$]{%
    \label{fig:synth:big:dens:norm:err}%
    \includegraphics[width=\smallfigwidth]{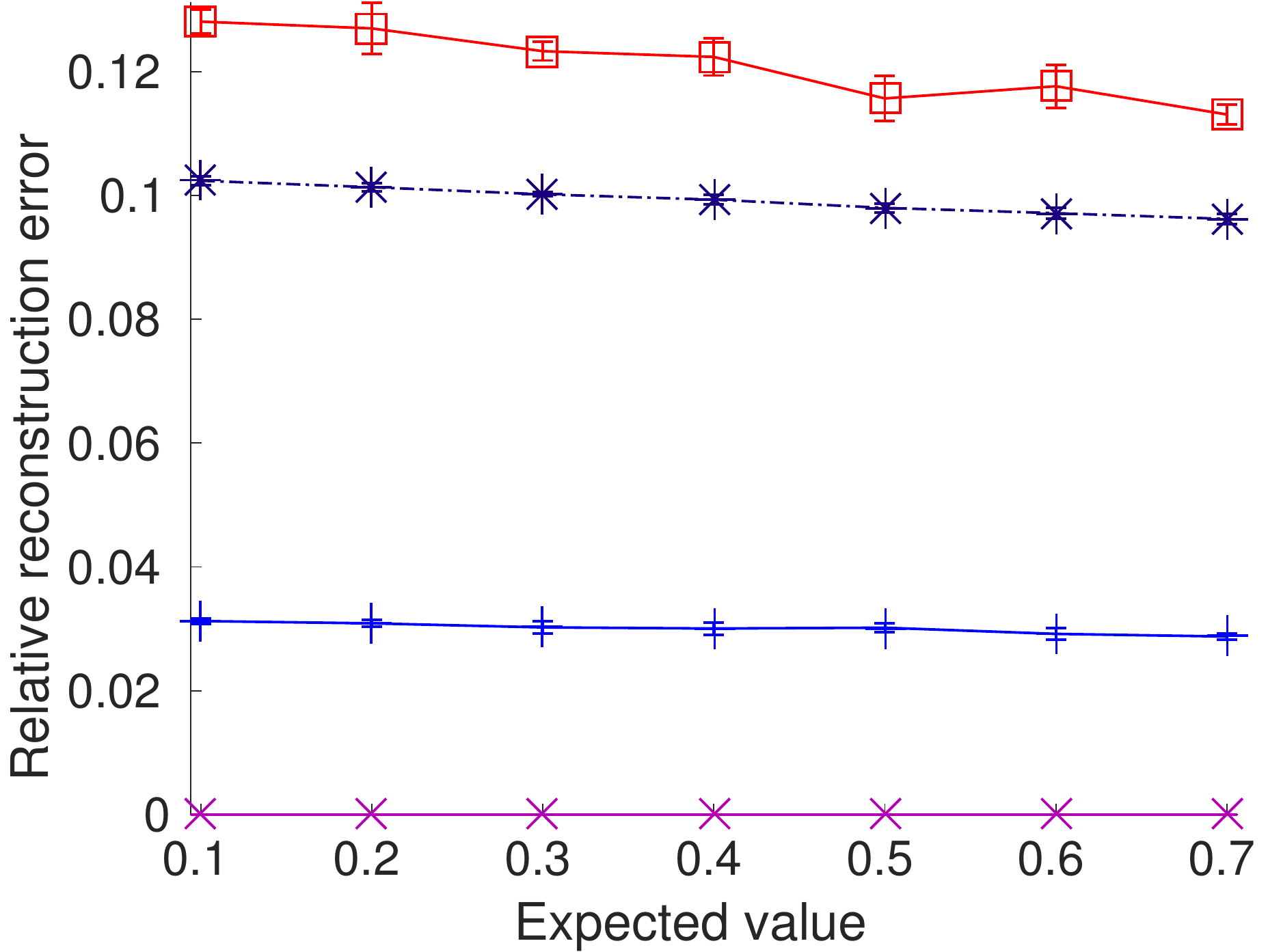}%
  }\hspace*{\smallfigsep}%
  \subfigure[Error, vary $p$]{%
    \label{fig:synth:big:noise:norm:err}%
    \includegraphics[width=\smallfigwidth]{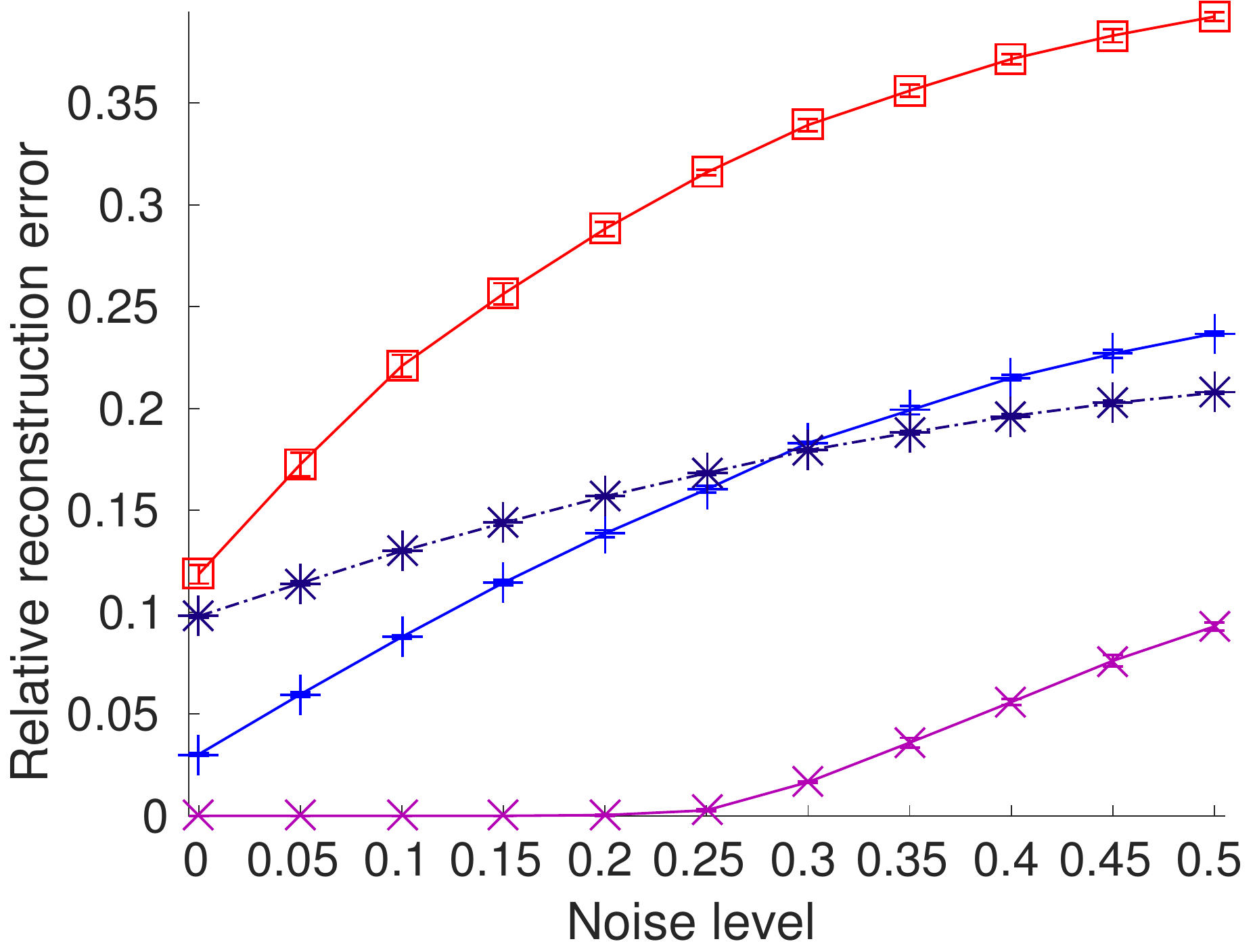}%
  }\\
  \rotatebox[origin=l]{90}{\hspace*{3em}\small Time}\hspace*{\smallfigsep}%
  \subfigure[Time, vary $m$]{%
    \label{fig:synth:big:n:norm:errtime}%
    \includegraphics[width=\smallfigwidth]{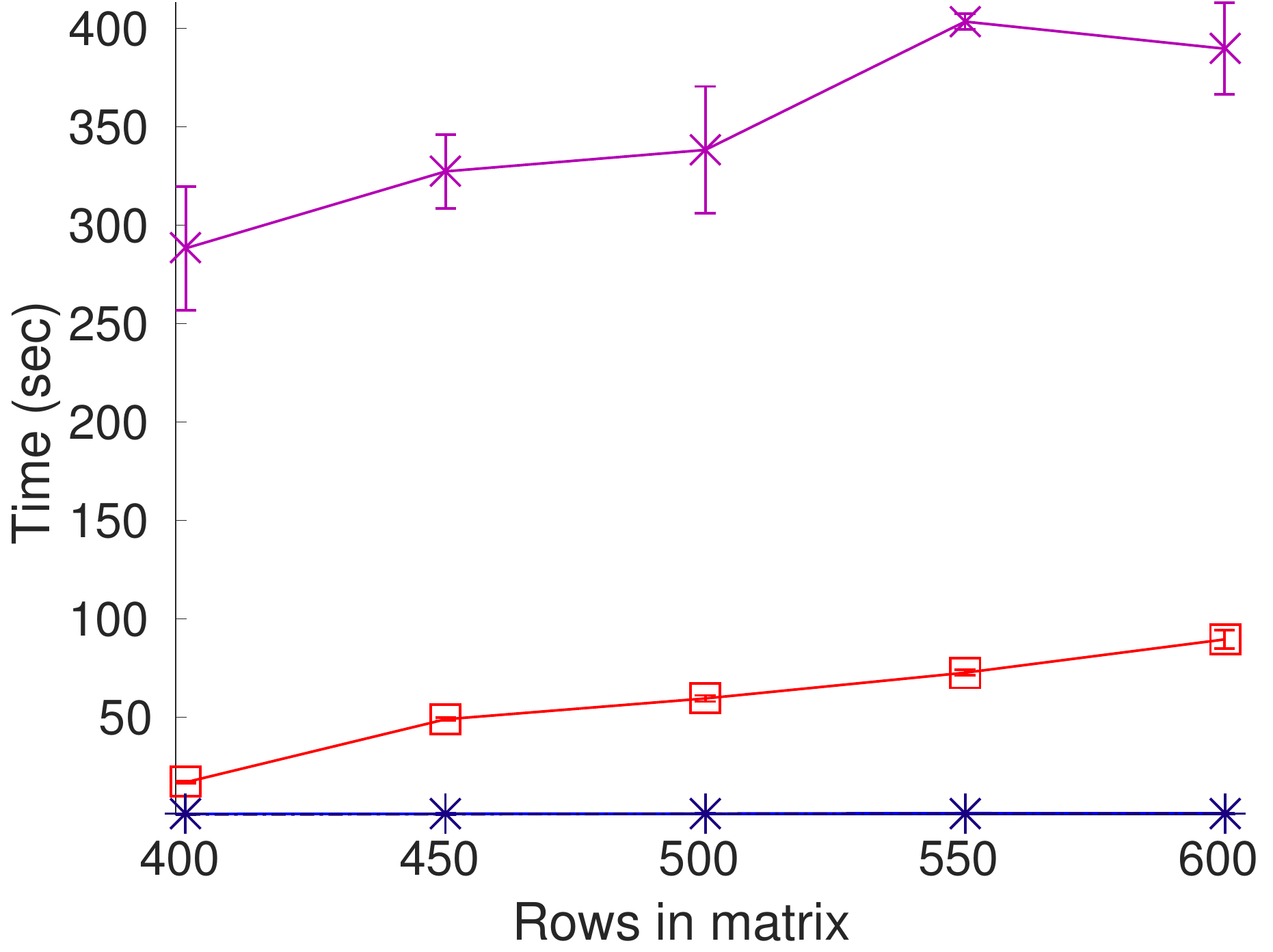}%
  }\hspace*{\smallfigsep}%
  \subfigure[Time, vary $k$]{%
    \label{fig:synth:big:k:norm:errtime}%
    \includegraphics[width=\smallfigwidth]{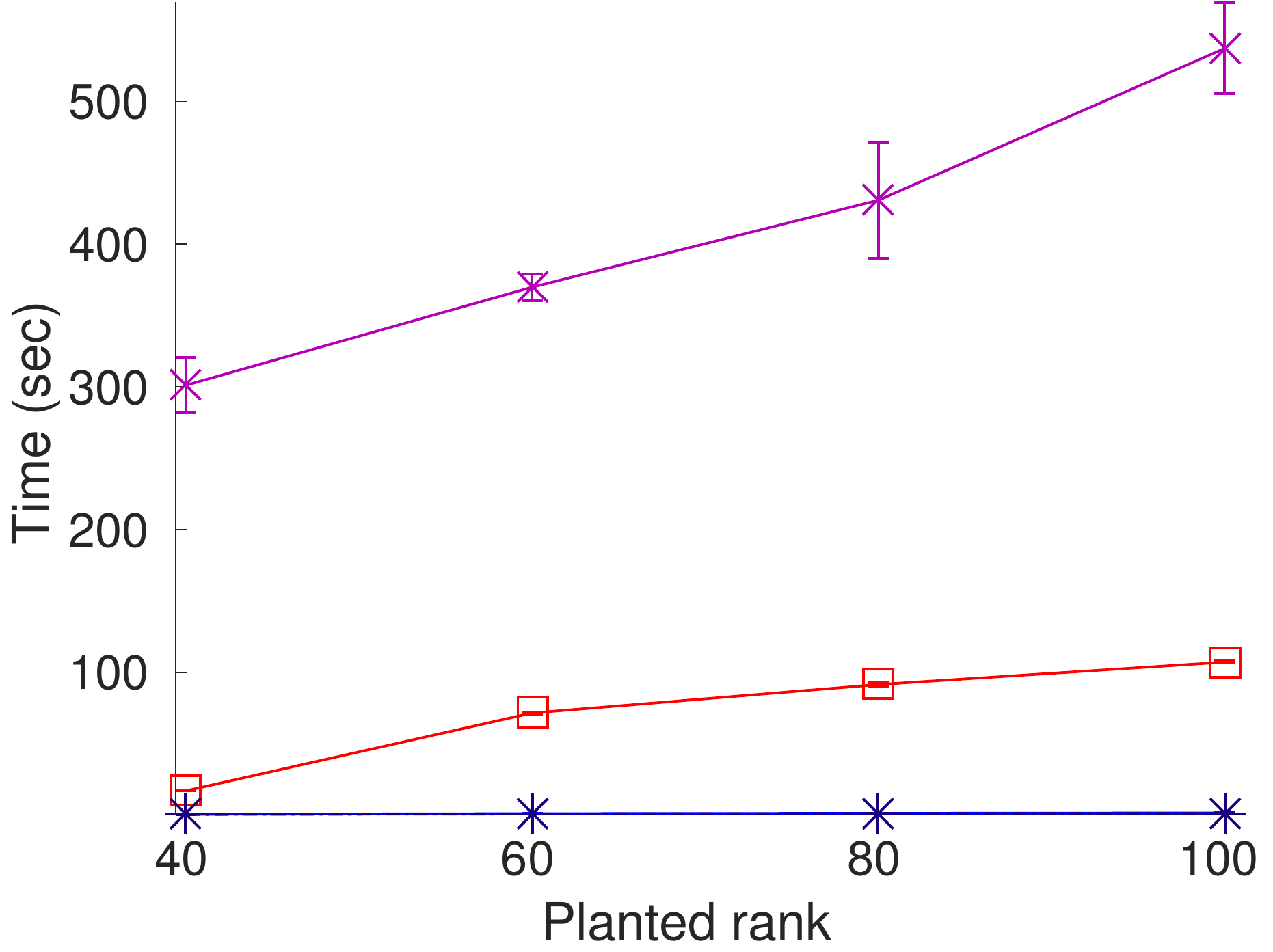}%
  }\hspace*{\smallfigsep}%
  \subfigure[Time, vary $d$]{%
    \label{fig:synth:big:dens:norm:errtime}%
    \includegraphics[width=\smallfigwidth]{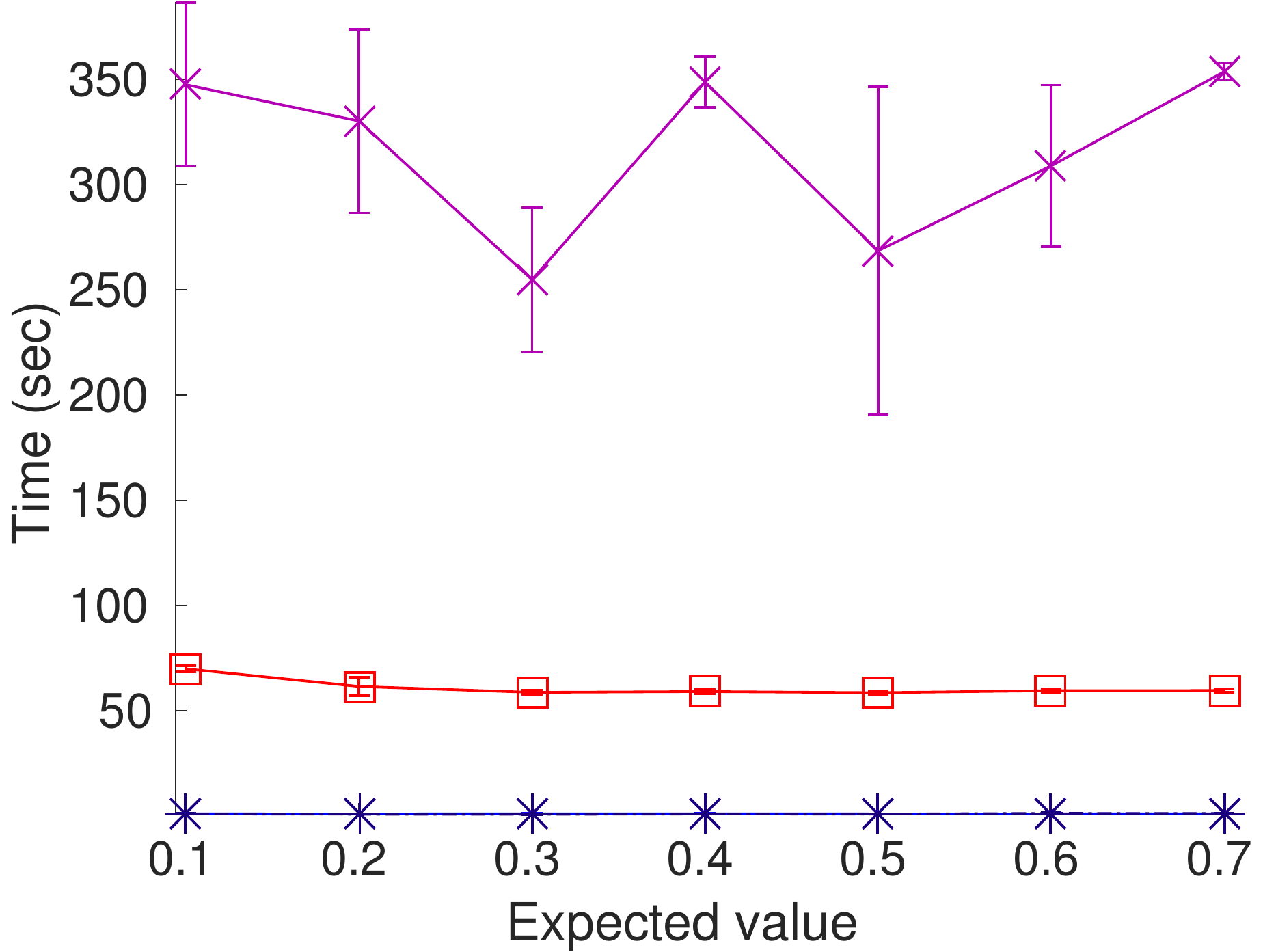}%
  }\hspace*{\smallfigsep}%
  \subfigure[Time, vary $p$]{%
    \label{fig:synth:big:noise:norm:errtime}%
    \includegraphics[width=\smallfigwidth]{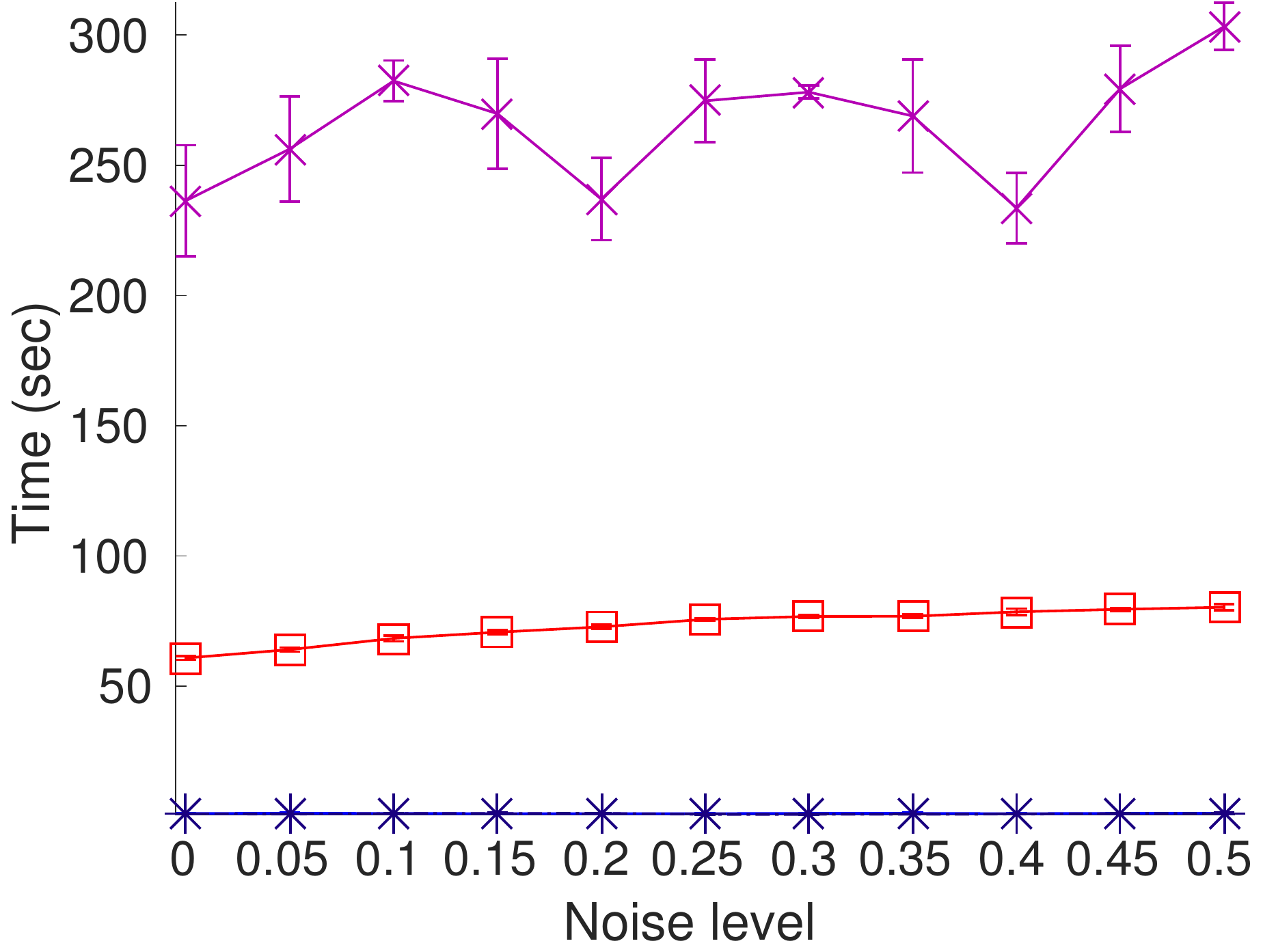}%
  }%
  \caption{Relative reconstruction errors and running times on medium-sized synthetic data with normally distributed factors. The top row gives the relative reconstruction error and the bottom row the running times. The results of \asso are omitted as they were significantly worse than the other results. All data points are averages over 10 random matrices and the width of the error bars is twice the standard deviation. Compare to Figure 3 of the main submission.}
  \label{fig:apx:synth:err}
\end{figure*}


\end{document}
